\newif\ifdraft\drafttrue
\newcommand{\ntrans}[1]{\mathrel{{\trans{#1}}\makebox[0em][r]{$\not$\hspace{2ex}}}{\!}}
\newcommand{\on}{\mathsf{on}}
\newcommand{\off}{\mathsf{off}}
\newcommand{\confCPS}[2]{#1 \, {\Join} \, #2}
\newcommand{\rsens}[2]{\mathsf{read}\, #2(#1)}
\newcommand{\wact}[2]{\mathsf{write}\, #2 \langle #1 \rangle}
\definecolor{darkred}{RGB}{128,0,0}
\definecolor{darkgreen}{RGB}{0,128,0}
\definecolor{lightgreen}{RGB}{224,255,224}
\newcommand{\dint}[2]{#1_{#2}}
\newcommand{\env}{\mathit{Env}}
\newcommand{\state}{\mathit{S}}
\newcommand{\nome}{\cname{}}
\newcommand{\statefun}{\xi_{\mathrm{x}}}
\newcommand{\actuatorfun}{\xi_{\mathrm{a}}}
\newcommand{\sensorfun}{\xi_{\mathrm{s}}}
\newcommand{\evolmap}{\mathit{evol}}
\newcommand{\evolmapP}{\mathit{evol}}
\newcommand{\measmap}{\mathit{meas}}
\newcommand{\measmapP}{\mathit{meas}}
\newcommand{\invariantfun}{\mathit{inv}}
\renewcommand{\operatorname}[1]{\mathit{#1}}
\newcommand{\dirac}[1]{\overline{#1}}
\newcommand{\CPS}{CPS}
\newcommand\restrict[1]{\raise-.5ex\hbox{\ensuremath|}_{#1}}
\newcommand{\ActComp}{\mathit{Act}}
\newcommand{\subdistr}[1]{{\mathcal D}_{\mathrm{sub}}(#1)}
\newcommand{\distr}[1]{{\mathcal D}(#1)}
\newcommand{\support}{\mathsf{supp}}
\DeclareMathOperator{\Kantorovich}{\mathbf{K}} 
\newcommand{\size}[1]{\mid\!\!{#1}\!\!\mid}
\newcommand{\urg}[1][]{\transS[#1]}
\newcommand{\transS}[1][]{\xrightarrow{\, {#1} \, }} 
\newcommand{\ntransS}[1][]{\mathrel{{\transS[#1]}\makebox[0em][r]{$\not$\hspace{2ex}}}{\!}} 
\newcommand{\nTransS}[1][]{\mathrel{{\TransS[#1]}\makebox[0em][r]{$\not$\hspace{2ex}}}{\!}} 
\newcommand{\TransS}[1][]{\xRightarrow{\, {#1} \, }}
\newcommand{\xRightarrow}[2][]{\ext@arrow 0359\Rightarrowfill@{#1}{#2}}
\newcommand{\dummyN}{\mathsf{Dead}}
\newcommand{\metric}{\ensuremath{\mathbf{d} }}
\DeclareMathOperator{\zeroF}{{\bf 0}}
\DeclareMathOperator{\Bisimulation}{\mathbf{B}} 
\newtheorem{theorem}{Theorem}
\newtheorem{proposition}{Proposition}
\newtheorem{definition}{Definition}
\newtheorem{remark}{Remark}
\newtheorem{lemma}{Lemma}
\title{A Probabilistic Calculus of Cyber-Physical Systems~\thanks{A preliminary version appeared in the proceedings of LATA 2017, LNCS 10168, pp.\ 115-127, Springer~\cite{LaMe17}.}}
\author{Ruggero Lanotte 
\institute{Dipartimento di Scienza e Alta Tecnologia\\ 
 Universit\`a degli Studi dell'Insubria, Como, Italy}
\and
Massimo Merro
\institute{Dipartimento di Informatica\\ Universit\`a degli Studi di Verona, Italy}
\and
Simone Tini 
\institute{Dipartimento di Scienza e Alta Tecnologia\\ 
 Universit\`a degli Studi dell'Insubria, Como, Italy}
}
\date{}
\begin{document} 
\maketitle

\begin{abstract}
\emph{Cyber-Physical Systems} (\CPS{s}) are integrations of 
networking and distributed computing systems with physical processes, where feedback loops allow physical processes to affect computations and vice versa. 
Although \CPS{s} can be found in several real-world domains (automotive, avionics, energy supply, etc), their verification often relies on \emph{simulation test systems} rather then \emph{formal methodologies}. This is because  there is still a lack of research on the modelling and the definition of formal semantics  to compare non-trivial \CPS{s} in terms of their runtime behaviours up to an acceptable \emph{tolerance}. 

We propose a \emph{hybrid probabilistic process calculus} for modelling  
and reasoning on
\emph{cyber-physical systems\/} (\CPS{s}). 
The dynamics of the calculus is expressed in terms of 
a \emph{probabilistic labelled transition system\/} in the SOS style of Plotkin. 
This is used to define a \emph{bisimulation-based\/}  probabilistic behavioural semantics which supports compositional reasonings. For a more careful comparison between \CPS{s}, we provide two compositional \emph{probabilistic metrics}
to formalise the notion of behavioural distance between systems, also in the case of bounded computations. 
Finally, we provide a non-trivial case study, taken from an engineering 
application, and use it to illustrate our definitions and our compositional 
behavioural theory for \CPS{s}. 
\end{abstract}


\section{Introduction}

\emph{Cyber-Physical Systems} (\CPS{s}) are integrations of 
networking and distributed computing systems with physical processes, where feedback loops allow physical processes to affect computations and vice versa. 
\CPS{s} can be considered as an evolution of \emph{embedded systems}, where components are immersed in and interact with the physical world, via physical devices (such as \emph{sensors} and \emph{actuators}). 
They can  also be seen as an evolution of \emph{networked control systems}, where physical processes and controllers interact via a communication system.


\setlength{\unitlength}{0.4cm}
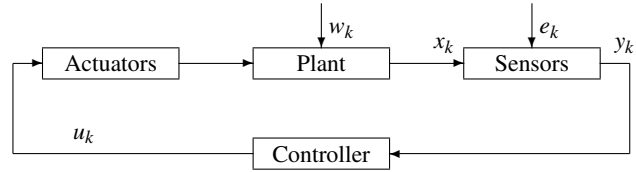
\begin{figure}[t]
\vspace*{-3mm}
\begin{center}
\footnotesize 
\begin{picture}(24,9)(0,0)
\put(1,5){\frame{\makebox(4.5,1){Actuators}}}
\put(8,5){\frame{\makebox(4.5,1){Plant}}}
\put(10.25,7.5){\vector(0,-1){1.5}}
\put(10.5,6.5){$w_k$}
\put(15,5){\frame{\makebox(4.5,1){Sensors}}}
\put(17.25,7.5){\vector(0,-1){1.5}}
\put(17.5,6.5){$e_k$}
\put(8,2){\frame{\makebox(4.5,1){Controller}}}
\put(5.5,5.5){\vector(1,0){2.5}}
\put(12.5,5.5){\vector(1,0){2.5}}
\put(14,6){$x_k$}
\put(19.5,5.5){\line(1,0){1}}
\put(20,6){$y_k$}
\put(20.5,5.5){\line(0,-1){3}}
\put(20.5,2.5){\vector(-1,0){8}}
\put(8,2.5){\line(-1,0){8}}
\put(2,3){$u_k$}
\put(0,2.5){\line(0,1){3}}
\put(0,5.5){\vector(1,0){1}}
\end{picture}
\end{center}
\caption{Structure of a CPS}
\label{fig:cps-model}
\vspace*{-6mm}
\end{figure}

The \emph{physical plant} of a \CPS{} is often 
represented by means of a \emph{discrete-time state-space
model\/}\footnote{We refer to~\cite{survey-CPS-security-2016} for a taxonomy of time-scale models used to represent \CPS{s}.} consisting of two 
equations of the form
\begin{center}
\begin{math}
\begin{array}{rcl}
x_{k+1} & = & Ax_{k} + Bu_{k} + w_{k}\\[2pt]
y_k & = & Cx_{k} + e_k
\end{array}
\end{math}
\end{center}
where
$x_k \in \mathbb{R}^n$ is the current \emph{(physical) state}, $u_k \in
\mathbb{R}^m$ is the \emph{input} (i.e., the control actions implemented
through actuators) and $y_k \in \mathbb{R}^p$ is the \emph{output} (i.e.,
the measurements obtained from the sensors).
The \emph{uncertainty} $w_k \in  \mathbb{R}^n$ and the \emph{measurement error} $e_k \in  \mathbb{R}^p$ represent perturbation and sensor noise, 
respectively. The parameters $A$, $B$, and $C$ are matrices modelling the dynamics of the physical system.  The \emph{next state} $x_{k+1}$ depends on the current state $x_k$ and the corresponding control actions $u_k$, at the sampling instant $k \in \mathbb{N}$. Note that, the state $x_k$ cannot be directly observed: only its measurement $y_k$ can be observed.

The physical plant is supported by a communication network through which
the sensor measurements and actuator data are exchanged with the
\emph{controller(s)\/}, i.e., the \emph{cyber} component, also called
logics,  of a \CPS{} (see \autoref{fig:cps-model}).

In general terms, \CPS{s} can be considered as both  \emph{nondeterministic} and \emph{probabilistic} systems. Nondeterminism arises as they consist of distributed networks in which the activities of specific components occur nondeterministically, whereas the probabilistic behaviour is due to the presence of the uncertainty in the model and the measurement error, which are usually  represented as \emph{probability distributions}.

The range of \CPS{s} applications is rapidly increasing and 
already covers several domains~\cite{CPS-applications}: 
advanced automotive systems, 
energy conservation, environmental monitoring, avionics, 
critical infrastructure control (for instance, electric power, water resources, and communications systems), etc. 

However, there is still a lack of research on the modelling and validation of \CPS{s} through formal methodologies that allow us to model the interactions among the system components, and to verify the correctness of a \CPS{}, as a whole, before its practical implementation. A straightforward utilisation of these techniques is for \emph{model-checking}~\cite{Clarke:2000:MC:332656}, or even better, for
\emph{probabilistic model-checking}~\cite{PMC}, to statically assess whether the current system deployment can guarantee the expected behaviour. However, they can also be an important aid for system planning, for instance to decide whether 
 different deployments for a given application are  behaviourally equivalent.

Process calculi have been successfully used to model and analyse concurrent, distributed and mobile systems  (see, e.g., the \emph{$\pi$-calculus}~\cite{Mil91},
\emph{Ambients}~\cite{Ambients} and the  \emph{Distributed  $\pi$-calculus}~\cite{dpi}). However, to better describe systems based on a particular paradigm, dedicated calculi are needed.
\emph{Hybrid process algebras}~\cite{CuRe05,BergMid05,vanBeek06,RouSong03,HYPE} 
have been proposed for reasoning about physical systems and provide techniques for analysing and verifying protocols for hybrid automata. In order to enrich hybrid models with probabilistic or stochastic behaviour,  a number of different approaches have been proposed in the last years \cite{Spro2000,Hu2000,Buj04,Abate08,Franzle2011,Hahn2013,Wang17}. However, to our knowledge,  none of these formalisms provide  bisimulation  metrics semantics to estimate the deviation in terms of behaviour of different \CPS{s} in a process-algebra setting. 
The definition of these instruments represents the main goal of the current paper. 

\paragraph{Contribution.}
In this paper, we propose  a \emph{hybrid probabilistic process calculus}, called \cname{} (\emph{Probabilistic Calculus of Cyber-Physical Systems}), with a clearly-defined \emph{probabilistic behavioural semantics\/} for specifying and reasoning on \CPS{s}.  
In  \cname{}, cyber-physical systems are represented by making a neat distinction between  the \emph{physical component\/} describing the physical process (consisting in state variables, sensors, actuators, evolution law, measurement law, etc.) and  the \emph{cyber component}, i.e., the \emph{logics} 
(i.e., controllers, IDS, supervisors, etc.) that governs sensor reading and actuator writing, as well as channel-based communication with other cyber 
components. Thus, channels are used for logical interactions between cyber
components, whereas sensors and actuators make possible the interaction
between cyber and physical components. 
Despite this conceptual similarity, 
messages transmitted via channels are ``consumed'' upon reception,  
whereas  actuators' states (think of a valve)
remains unchanged until its controller modifies it.

 \cname{} adopts a \emph{discrete notion of time\/}~\cite{HR95} and it 
  is equipped with a \emph{probabilistic labelled transition semantics}
(pLTS) in the style of~\cite{Seg95}. 
 We prove that our probabilistic labelled transition semantics satisfies 
some standard time properties such as: \emph{time determinism\/}, 
\emph{patience}, \emph{maximal progress}, and \emph{well-timedness\/}.
Based on our pLTS, we define a natural notion of \emph{weak probabilistic 
bisimilarity\/}, written $\approx$. 
As a main result, we prove that  bisimilarity  in \cname{} is preserved by 
appropriate system contexts 
and it is hence suitable for \emph{compositional reasoning\/}.
Then, we provide a non-trivial \emph{case study}, taken from an
engineering application, and use it to illustrate our definitions and 
our compositional behavioural theory for \CPS{s}. 
We also use our case study  to show that the probabilistic bisimilarity 
is only partially satisfactory to reason on \CPS{s} as it can only establish whether two \CPS{s} behave exactly in the same way or not.
Any tiny variation of the probabilistic behaviour of one of the two  systems under consideration will break the equality without any further 
information on the ``distance'' of their behaviours. 
To this end, \emph{bisimulation metric semantics} have been successfully employed to formalise the \emph{behavioural distance} between two systems~\cite{DJGP02,DGJP04,BW05,DCPP06}.
We generalise our probabilistic bisimilarity by providing a  notion of \emph{weak bisimulation metric}  for \cname{} along the lines of~\cite{DJGP02}. We will write $M \approx_{p} N$, 
 if the weak bisimilarity between \CPS{s} $M$ and $N$ holds with a \emph{distance} $p$,  with probability $p \in [0,1]$.
Intuitively, $\approx_{0}$ will coincide with the weak 
probabilistic bisimilarity $\approx$, whereas $\bigcup_{p \in [0,1]}\approx_{p}$ will correspond to the cartesian product  $\cname{} \times \cname{}$.
We also provide a notion of \emph{$n$-bisimilarity metric} which takes 
into account bounded computations of systems~\cite{DGJP04}. 
 This kind of  metric, denoted with $\approx_p^n$, for $n\in \mathbb{N}^+$, 
says that the distance $p$ of the systems under considerations is ensured only for  the first $n$ computation steps. Said in other words, 
if $M \approx_p^n N$ then for the first $n$ computation steps the runtime behaviour of systems $M$ and $N$ may differ with \emph{probability} at most $p$.
Both metrics $\approx_p$ and $\approx^n_p$ are proved to be preserved by the 
same contexts considered for $\approx$, and hence they reveal to be suitable for compositional
reasonings. In particular, they satisfy a well-known compositional property 
called \emph{non-expansiveness}~\cite{DGJP04,GLT15,GT15}, the analogue of the congruence property of weak bisimulation. 
 Finally, with the help of our case study, 
we will show how $n$-bisimilarity metric can be  very 
helpful  in situations where it is not necessary to observe a system ``ad infinitum'' as it makes much more sense to observe its behaviour 
for bounded computations. 
\paragraph*{Outline.}
In \autoref{sec:calculus}, we give syntax and 
 operational semantics of
 \cname{}. In \autoref{sec:bisimulation}, we provide a bisimulation-based 
probabilistic behavioural semantics for \cname{} and prove its compositionality. 
In \autoref{sec:case-study}, we model  our case study in \cname{}, and prove for it 
 run-time properties as well as system equalities. 
In \autoref{sec:metric}, we define  bisimulation metrics for \cname{}. In 
\autoref{sec:casebis}, we revise our case study by providing a more 
accurate analysis based on the proposed bisimulation metrics. 
In \autoref{sec:conclusioni}, we draw conclusions and discuss related and future work.


\section{The calculus}
\label{sec:calculus}

In this section, we introduce our \emph{Probabilistic Calculus of Cyber-Physical
Systems},  \cname{}. 

Let us start with some preliminary notations. 
We use   $x, x_k \in \mathsf X$ for \emph{state variables\/}
 (associated to physical states of systems), 
 $c,d \in \mathsf C$ for \emph{communication channels\/}, 
 $a, a_k \in \mathsf A$ for \emph{actuator devices\/}, 
 $s,s_k \in \mathsf S$ for \emph{sensors devices\/}.
\emph{Actuator names} are metavariables for actuator devices like
$\mathit{valve}$, $\mathit{light}$, etc. Similarly, \emph{sensor names}
are metavariables for sensor devices, e.g., a sensor
$\mathit{thermometer}$ that measures a state
variable called $\mathit{temperature}$, with a given precision.
\emph{Values}, ranged
over by $v,v' \in \mathsf V$, are built from basic values, such as
Booleans, integers and real numbers; they also include names.
Given a generic set of names $\mathsf N $, we write $\mathbb{R}^{\mathsf N} $ to
denote the set of functions assigning a real value to each name in $\mathsf N$. For $\xi \in \mathbb{R} ^{\mathsf N}$,
$n \in \mathsf N$ and $v \in \mathbb{R} $, we write $\xi [n \mapsto v]$ to
denote the function $\psi \in \mathbb{R} ^{\mathsf N}$ such that
$\psi(m)=\xi(m)$, for any $m \neq n$, and $\psi(n)=v$.
Given   $\xi_1 \in \mathbb{R}^{{\mathsf N}_1} $ and  $\xi_2 \in \mathbb{R}^{{\mathsf N}_2} $
such that ${{\mathsf N}_1} \cap {{\mathsf N}_2}=\emptyset$,  we denote
with $\xi_1 \uplus \xi_2$ the function in
$\mathbb{R}^{{\mathsf N}_1 \cup {\mathsf N}_2} $
such that
$(\xi_1 \uplus \xi_2) (n)=\xi_1(n)$, if $n \in {{\mathsf N}_1} $, and 
$(\xi_1 \uplus \xi_2) (n)=\xi_2(n)$, if $n \in {{\mathsf N}_2} $.

As \cname{} is a probabilistic calculus, we report the necessary mathematical machinery for its formal definition.

\begin{definition}[Probability distribution]
A (discrete) \emph{probability sub-distribution} over a 
set of generic objects $\mathsf O$ is a function $\delta \colon {\mathsf O} \to [0,1]$ with  $\sum_{o \in {\mathsf O}}\delta(o) \in (0 , 1]$.
We write  $\size{\delta}$ as an abbreviation for $\sum_{o \in {\mathsf O}}\delta(o)$.
The \emph{support} of a probability sub-distribution $\delta$ is given by 
$ \support(\delta) = \{ o \in {\mathsf O} \colon \delta(o) > 0 \}$.
We write $\subdistr {\mathsf O}$,  ranged over $\gamma$, $\delta$ and  
$\epsilon$,  
for the set of all \emph{finite-support} probability sub-distributions over
the set $\mathsf O$.
A probability sub-distribution $\delta \in \subdistr {\mathsf O}$ is said to be a \emph{probability distribution} if $\sum_{o\in {\mathsf O}}\delta(o) =1$.
With $\distr {\mathsf O}$ we denote the set of all finite-support probability distributions over $\mathsf O$.
For any $o \in {\mathsf O}$, the \emph{point (Dirac) distribution at $o$\/}, denoted $\dirac{o}$, assigns probability $1$ to $o$ and $0$ to all others elements of $\mathsf O$, so that $\support{(\dirac{o})} = \{ o \}$. 
\end{definition}

Let $I$ be a finite indexing set such that (i) $\delta_i$ is a sub-distribution in ${\mathcal D}_{\mathrm{sub}}({\mathsf O})$ for each $i \in I$, and (ii) 
 $p_i \geq 0$ are probabilities such that $\sum_{i\in I}p_i \in (0,1]$. 
The probability sub-distribution (or convex combination) $\sum_{i \in I}p_i \cdot \delta_i$  is the sub-distribution defined by
$(\sum_{i \in I}p_i \cdot \delta_i)(o) = \sum_{i \in I} p_i \delta_i(o)$ for all 
$o \in {\mathsf O}$.
We  write a sub-distribution as $p_1 \cdot \delta_1 + \ldots + p_k \cdot 
\delta_k$ when the indexing set $I$ is $\{ 1, \ldots , k \}$.

In \cname{}, a cyber-physical system consists of: 
\begin{itemize}
\item a \emph{physical component} (defining physical variables, physical devices, physical evolution, etc.) and 
\item a \emph{cyber (or logical) component} that interacts with the physical devices (sensors and actuators) and communicates via channels with other cyber components.
\end{itemize}

Physical components in \cname{} are given by two sub-components: 
\begin{inparaenum}[(i)] 
\item the \emph{physical state},  which is supposed to change at runtime, and
\item the \emph{physical environment}, which contains static information.\footnote{Actually, this information is periodically updated (say, every six months) to take into account possible drifts of the system.} 
\end{inparaenum}
\begin{definition}[Physical state]
\label{def:physical-state}
Let $\mathsf X$ be a set of state variables, $\mathsf{S}$ be a set of sensors,
and $ \mathsf{A}$ be a set of actuators. A \emph{physical state} $S$ is a triple
$\stateCPS
{\statefun{}}
{\sensorfun{}}
{\actuatorfun{}}
$,
where:
\begin{itemize}
\item $\statefun{} \in \mathbb{R}^{\mathsf X} $ is the
\emph{state function},
\item $\sensorfun{} \in \mathbb{R}^{\mathsf S}$ is the \emph{sensor
function}, 
\item $\actuatorfun{} \in \mathbb{R}^{\mathsf A} $ is the
\emph{actuator function}.
\end{itemize}
All functions defining a physical state are \emph{total}. 
\end{definition}
The \emph{state function} $\statefun{}$ returns the current value  associated to each variable in $\mathsf X$.
 The
\emph{sensor function} $\sensorfun{}$ returns the current value
associated to each sensor in $\mathsf S$; similarly,  the
\emph{actuator function} $\actuatorfun{}$ returns the current value
associated to each actuator in $\mathsf A$.

\begin{definition}[Physical environment]
\label{def:physical-env}
Let $\mathsf{X}$ be a set of state variables,
$\mathsf{S}$ be a set of sensors, and
$\mathsf{A}$ be a set of actuators.
A \emph{physical environment} $E$ is a triple  
$\envCPS
{\evolmap{}}
{\measmap{}}
{\invariantfun{}}
$,
where:
\begin{itemize}
\item $\evolmap{} \colon \mathbb{R}^{\mathsf X} \times
\mathbb{R}^{\mathsf A}  \rightarrow \distr{\mathbb{R}^{\mathsf X}}$ is
 the \emph{evolution map}, 
\item $\measmap{} \colon \mathbb{R}^{\mathsf X}  \rightarrow \distr{\mathbb{R}^{\mathsf S} }$ is the \emph{measurement map}, 
\item  $\invariantfun{} \in 2^{\mathbb{R} ^{{\mathsf X} }}$ 
is the \emph{invariant set}. 
\end{itemize}
All the functions defining a physical environment are \emph{total functions}.
\end{definition}

Given a state function and an actuator function, the \emph{evolution map} $\evolmap{}$ returns a \emph{probability distribution over state functions}. 
This function models the \emph{evolution law} of the physical system, where changes made on actuators may reflect on state variables. Since we assume the presence of a known (maximal) uncertainty for our models, the evolution map does not return a specific state function but a probability distribution over state functions. 

Given a state function, the \emph{measurement map} $\measmap{}$  returns a \emph{probability distribution over sensor functions}. Also in this case, since we assume the presence of a known (maximal) measurement error for each sensor, the measurement map returns a probability distribution over sensor functions, rather than a specific sensor function. 

The \emph{invariant set} $\invariantfun{}$ returns the set of
state functions that satisfy the invariant of the system. A
\CPS{} that gets into a  physical state with a state function that does not satisfy the invariant is in \emph{deadlock}.

\smallskip
Let us now formalise  the cyber components of \CPS{s} in our calculus
\cname{}.
Our (logical) processes build on Hennessy and Regan's   
\emph{Timed Process Language} TPL~\cite{HR95} (basically CCS enriched with a discrete notion of time). 
We extend TPL with  three  constructs: one to read values detected at sensors,  one to write values on actuators, and one to express (guarded) probabilistic choice.  The remaining processes of the calculus are the same as those of TPL.

\begin{definition}[Processes]
\emph{Processes} are defined by the grammar:
\begin{displaymath}
\begin{array}{rl}	
P,Q \Bdf & \nil \Bor \tick.C \Bor  P \parallel Q \Bor \timeout {\mathit{chn}.C} {D} 
\Bor \mathit{phy}.C \Bor 
 \ifelse b P Q \Bor   P\backslash c  
\Bor \\
& X \Bor   \fix X P \\[1pt]
C,D \Bdf & \bigoplus_{i\in I}p_i {:} P_i\\[1pt]
\mathit{chn} \Bdf & \OUT{c}{v} \Bor \LIN{c}{x} 
 \\[1pt]
\mathit{phy} \Bdf &  \ \rsens x s \Bor
 \wact v a \, . 
\end{array}
\end{displaymath} 
\end{definition}

We write $\nil$ for the \emph{terminated process}. The process $\tick.C$
models sleeping  for one time unit. We write $P \parallel
Q$ to denote the \emph{parallel composition} of concurrent processes 
$P$ and $Q$. The process $\timeout {\mathit{chn}.C} D$, with $\mathit{chn}\in
\{\OUT{c}{v},\LIN{c}{x}  \}$, denotes
\emph{channel transmission with timeout}. Thus, $\timeout{\OUT c v . C}D$ sends the
value $v$ on channel $c$ and, after that, it continues as $C$; otherwise,
if no communication partner is available within one time unit, 
 it evolves into $D$.
 The process $\timeout{\LIN c x.C}D$ is the obvious counterpart for channel reception.

Processes of the form $\mathit{phy}.C$ denote  activities on 
physical devices (sensors or actuators). Thus, the construct  $\rsens x s.C$
reads the value $v$ detected by the sensor $s$ and, after that, it
continues as $C$, where $x$ is replaced by $v$. The process $\wact v a.C$ writes the value $v$ on the actuator $a$ and then it continues as $C$.

The process $P\backslash c$ is the channel restriction operator of CCS. It is quantified over the set of communication channels, although we often use the shorthand 
$P\backslash{\{ c_1, \cdots , c_n \}}$ to mean $P\backslash{c_1}\backslash{c_2}\cdots\backslash{c_n}$. 
The process $\ifelse b P Q$ is the standard conditional, where $b$ is a decidable guard. For simplicity, as in CCS, we identify process $\ifelse b P Q$ with $P$, if $b$ evaluates to true,  and $\ifelse b P Q$  with $Q$, if $b$ evaluates to false. 
In processes of the form $\tick.D$ and $\timeout {\mathit{chn}.C} D$, the occurrence of $D$ is said to be \emph{time-guarded}. The process $\fix X P$ denotes \emph{time-guarded recursion} as all occurrences of the process variable $X$ may only occur time-guarded in $P$.

The construct $\bigoplus_{i\in I}p_i {:} P_i$ denotes \emph{probabilistic choice}, where $I$ is a  \emph{finite\/}, \emph{non-empty} set of indexes, 
and $p_i \in (0, 1]$, for $i \in I$,  denotes the probability to execute the process $P_i$, with  $\sum_{i \in I}p_i=1$. As  in~\cite{Dengetal2008}, in order to simplify the operational semantics,  \emph{probabilistic choices occur always underneath prefixing}.

In the two constructs $\timeout{\LIN c x. C}D$ and $\rsens x s. C $,
 the variable $x$ is said to be
\emph{bound\/}. Similarly, the process variable $X$ is bound in $\fix X P$.
This gives rise to the standard notions of \emph{free/bound (process) variables} and \emph{$\alpha$-conversion}. We identify processes  up to $\alpha$-conversion (similarly, we identify \CPS{s} up to renaming 
of state variables, sensor names, and actuator names).
A term is \emph{closed} if it does not contain free (process) variables, and  
we assume to always work with closed processes: the absence of free variables is
preserved at run-time. As further notation, we write $T{\subst v x}$ for the substitution of
the variable $x$ with the value $v$ in any expression $T$ of our language.
Similarly, $T{\subst P X}$ is the substitution of the process variable $X$
with the process $P$ in $T$.

Everything is in place to provide the definition of cyber-physical systems 
expressed in \cname{}.
\begin{definition}[Cyber-physical system]
Fixed a set of state variables $\mathsf X$, a set of sensors $\mathsf S$, and a set of actuators $\mathsf A$,
a \emph{cyber-physical system} in \cname{} is given by two components:
\begin{itemize}
\item a \emph{physical component} consisting of 
\begin{itemize}
\item a \emph{physical environment} $E$ defined on $\mathsf X$, $\mathsf S$,
and $\mathsf A$, and 

\item a \emph{physical state} $S$ recording the current values associated to
the   state variables in $\mathsf X$,  the sensors in $\mathsf S$,
and the actuators in $\mathsf A$;  
\end{itemize}
\item a \emph{cyber component} 
$P$ that interacts with the sensors in $\mathsf S$ and the actuators $\mathsf A$, and can communicate, via channels, with other cyber components of the same  or of other \CPS{s}.
\end{itemize}
We write $\confCPS {E;S}  P$ to denote the resulting \CPS{}, and use 
$M$ and $N$ to range over \CPS{s}. Sometimes, when the physical environment $E$ is clearly identified,  we write $\confCPS S  P$ instead of $\confCPS {E;S} P$. 
\CPS{s} of the form  $\confCPS S  P$ are called environment-free \CPS{s}. 
\end{definition}

The reader should notice that the syntax of our \CPS{s} is slightly too permissive as a process might use
sensors and/or actuators  which are not defined in the physical state. 
\begin{definition}[Well-formedness]
\label{def:well-formedness}
Let  $S = \stateCPS
{\statefun{}}
{\sensorfun{}}
{\actuatorfun{}}
$ be a  physical state, $E = \envCPS
{\evolmap{}} 
{\measmap{}}
{\invariantfun{}} $ 
a physical environment, 
and $P$ a process. The \CPS{} $\confCPS {E;S} P$ is said to be 
\emph{well-formed} if: (i)  any sensor mentioned in $P$ is in the domain 
of  the function $\sensorfun{}$; (ii) any actuator  mentioned in $P$ 
is  in the domain of the function $\actuatorfun{}$. 
A  sub-distribution $\gamma \in {\mathcal D}_{\mathrm{sub}}(\cname)$ is said to be well-formed if its support contains only \nolinebreak well-formed \nolinebreak \CPS{s}. 
\end{definition}
Hereafter, we will always work with well-formed \CPS{s}.

As usual in process calculi, we use the  symbol $\equiv$ to denote
 standard \emph{structural congruence}
for  timed processes~\cite{Mil91,MBS11}; its  generalisation to
\CPS{s} is immediate: $\confCPS{E;S}{P} \equiv \confCPS{E;S}{Q}$ if $P \equiv Q$. 
Also the generalisation 
 to sub-distributions in ${\mathcal D}_{\mathrm{sub}}(\cname{})$ is straightforward: given two  sub-distributions $\gamma$ and $\gamma'$ over \CPS{s}, 
we write $\gamma
\equiv \gamma'$ if $\gamma([M]_{\equiv})=\gamma'([M]_{\equiv})$ for all equivalence classes $[M]_{\equiv} \subseteq \cname{}$.

Finally, we assume a number of \emph{notational conventions}.  
We write $\dummyN$ to denote a deadlocked \CPS{}  
which  cannot perform any action. 
This  fictitious \CPS{}  will  be useful when defining behavioural distances 
between \CPS{s} (see  \autoref{def:simulation_metric}).
We write $\mathit{chn}.P$ instead of $\fix{X}\timeout{\mathit{chn}.P}X$, when $X$ does not occur in $P$. 
We write $\OUTCCS c$ (resp.\ $\LINCCS c$) when channel  $c$ is used for pure synchronisation.
For $k\geq 0$, we write $\tick^{k}.P$ as a shorthand for $\tick.\tick. \ldots \tick.P$, where the prefix $\tick$ appears $k$ consecutive times. 
Given a \CPS{} $M = \confCPS {E;S}  P$, a process $Q$ and a channel $c$, we write 
$M\parallel Q$ for $\confCPS {E;S}  { (P\parallel Q) }$, and 
$M \backslash c$ for $\confCPS {E;S} {(P\backslash c)}$. 

In the rest of the paper, 
symbol $\sigma$ ranges over distributions over physical states,  $\pi$ ranges over distributions over processes,  and $\gamma$ ranges over distributions over \CPS{s}.

\subsection{Probabilistic labelled transition semantics}
\label{lab_sem}
In this section, we provide the dynamics of \cname{}  
 in terms of a \emph{probabilistic labelled
transition system} (pLTS)~\cite{Seg95}. First, we give a pretty standard probabilistic LTS for  processes, then we lift transition rules 
from processes to \CPS{s} to deal with the probability distributions occurring 
in  physical environments.

\begin{table}[t]
\begin{displaymath}
\begin{array}{l@{\hspace*{8mm}}l}
\Txiom{Outp}
{-}
{ { \timeout{\OUT c v .C}D } \trans{\out c v}   \sem{C}}
&
\Txiom{Inpp}
{-}
{ { \timeout{\LIN c x .C}D } \trans{\inp c v}    {\sem{C{\subst v x}}}  }

\\[13pt]
\Txiom{Write}
{ - }  
{ { \wact v a .C } \trans{\snda a v}   \sem{C}}
&
\Txiom{Read}
{  - } 
{ { \rsens x s .C } \trans{\rcva s x}    \sem{C}}
\\[13pt]
\Txiom{Com}
{ P_1 \trans{\out c v}  { \pi_1}  \Q  P_2 \trans{\inp c v}  {\pi_2} }
{ P_1 \parallel  P_2 \trans{\tau}  {\pi_1 \parallel \pi_2}}
&
\Txiom{Par}
{ P \trans{\lambda}  \pi \Q \lambda \neq  \tick }
{ {P\parallel Q} \trans{\lambda} {\pi \parallel \dirac{Q}}}
\\[13pt]
\Txiom{ChnRes}{P \trans{\lambda} \pi \Q \lambda \not\in \{ {\inp c v}, {\out c v} \}}{P \backslash c \trans{\lambda} {\pi}\backslash c}
&
\Txiom{Rec}
{  {P{\subst {\fix{X}P} X}} \trans{\lambda}  \pi}
{ {\fix{X}P}  \trans{\lambda}  \pi}
\\[13pt]
\Txiom{TimeNil}{-}
{ \nil \trans{\tick}  \dirac{\nil}}
& 
\Txiom{Delay}
{-}
{  { \tick.C} \trans{\tick}  \sem{C}}
\\[13pt]
\Txiom{Timeout}
{-}
{  {\timeout{\mathit{chn}.C}{D} }   \trans{\tick}  \sem{D}}
&
\Txiom{TimePar}
{
  P_1 \trans{\tick}  {\pi_1}  \Q
   P_2 \trans{\tick} {\pi_2}  \Q P_1 \parallel P_2 \ntrans{\tau}
}
{
  {P_1 \parallel P_2}   \trans{\tick}  {\pi_1 \parallel \pi_2}
}
\end{array}
\end{displaymath}
\caption{Probabilistic LTS for processes}
\label{tab:lts_processes} 
\end{table}

In \autoref{tab:lts_processes}, we provide transition rules for processes.
Here, the meta-variable $\lambda$ ranges over labels in the set 
 $\{\tick,
\tau, {\out c v}, {\inp c  v}, \allowbreak \snda a v,\rcva s x \}$. 
These labels denote the passage of time, internal activities, channel transmission, 
channel reception, actuator writing, and sensor reading, respectively. 
As in \cite{Dengetal2008}, the definition of the labelled transition relation
for processes relies on a semantic interpretation of probabilistic processes 
in terms of (discrete) probability distributions over processes. 
\begin{definition}
For any probabilistic choice $\bigoplus_{i \in I} p_i {:} P_i$ 
over a finite index set $I$,  we write $\sem {\bigoplus_{i \in I} p_i {:} P_i}$ 
to denote the probability distribution $\sum_{i \in I}p_i\cdot \dirac{P_i}$. 
\end{definition}
The transition rules in \autoref{tab:lts_processes} use some obvious notation for distributing both parallel composition and channel restriction over a  sub-distribution. 
Given two sub-distributions $\pi_1$ and $\pi_2$ we define the sub-distribution $\pi_1 \parallel \pi_2$ as follows: $({\pi_1} \parallel {\pi_2})(P) = \pi_1(P_1)\cdot\pi_2(P_2)$, if  $P = P_1 \parallel P_2$; $({\pi_1} \parallel {\pi_2})(P) = 0$, otherwise. Given an arbitrary distribution over processes  $\pi = \sum_{i \in I}p_{i}\cdot \dirac{P_i}$, an arbitrary channel $c$, and a value $v$, we define $\pi \backslash c$ as the distribution $ \sum_{i \in I}p_{i}\cdot \dirac{P_i\backslash c}$,
 and  $ \pi \subst{v}{x} $ as the distribution $\sum_{i \in I}p_{i}\cdot \dirac{ P_i  \subst{v}{x} }$. 

Let us comment on the transition rules of \autoref{tab:lts_processes}. 
Rules
\rulename{Outp}, \rulename{Inpp} and \rulename{Com} serve to model channel
communication, on some channel $c$. Rule~\rulename{Write} denotes the 
writing of some data $v$ on an actuator $a$. 
Rule~\rulename{Read} denotes the reading of some value detected at sensor $s$. Rule \rulename{Par} propagates untimed actions over parallel components.
Rules  \rulename{ChnRes} and \rulename{Rec} are the standard rules for 
channel restriction and recursion, respectively. The following four rules 
are standard, and model the passage of one time unit.  The symmetric counterparts of rules \rulename{Com} and \rulename{Par} are obvious and thus omitted from the table.

\begin{table}[t]
\begin{displaymath}
\begin{array}{c}
\Txiom{Out}
{P \trans{\out c v}  \pi  \Q \Q  S \in \invariantfun{}}
{\confCPS S  P   \trans{\out c v}   \confCPS {\dirac{S}}  {\pi}}
\Q\Q
\Txiom{Inp}
{P  \trans{\inp c v}  \pi \Q \Q S \in \invariantfun{}}
{\confCPS S  P    \trans{\inp c v}  \confCPS {\dirac{S}} {\pi}}
\Q\Q 
\Txiom{Tau}{P \trans{\tau} \pi \Q \Q S \in \invariantfun{}}
{ \confCPS S  P \trans{\tau} \confCPS {\dirac{S}}  {\pi}}
\\[16pt]
\Txiom{SensRead}{P \trans{\rcva s z} \pi \Q \Q 
 \sensorfun{}(s) = \sum_{i \in I} p_i \cdot \dirac{v_i}  \Q \Q 
{ \statefun{}  \in \invariantfun{} }}
{\confCPS {{\stateCPS {\statefun{}} {\sensorfun{}} {\actuatorfun{}}}}  P \trans{\tau} \confCPS {\dirac{{\stateCPS {\statefun{}} {\sensorfun{}} {\actuatorfun{}}}}} {\sum_{i \in I}p_i \cdot   \pi \subst{v_i}{z}}} 
\\[16pt]
\Txiom{ActWrite}{P \trans{\snda a v} {\pi}  \Q \Q  {  {\statefun{}  \in \invariantfun{}}}}
{\confCPS {\stateCPS {\statefun{}} {\sensorfun{}} {\actuatorfun{}}}  P \trans{\tau} \confCPS {\dirac{\stateCPS {\statefun{}} {\sensorfun{}} {\actuatorfun{}[a \mapsto v]} }}{\pi}}
\\[16pt]
\Txiom{Time}{ P \trans{\tick} \pi \Q \Q
\confCPS S P \ntrans{\tau}\Q \Q
S \in \invariantfun{} \Q \Q  }
{\confCPS S  P \trans{\tick} \confCPS  {\operatorname{next}_E(S)} {\pi}}
\Q\Q\Q
\Txiom{Deadlock}{S \not \in \invariantfun{}}{\confCPS{S}{P}  
\trans{\tau} \dirac{\dummyN}}
\end{array}
\end{displaymath}
\caption{Probabilistic LTS for a \CPS{} $\confCPS S P$ parametric on an 
environment $E = \envCPS
{\evolmap{}} 
{\measmap{}}
{\invariantfun{}}$}
\label{tab:lts_systems_P} 
\label{tab:lts_systems} 
\end{table}

In \autoref{tab:lts_systems}, we lift the transition rules from processes
to systems, actually to probability distributions overs systems. 
We adopt the following notation for probability distributions:
given a distribution $\sigma$ over physical states and a distribution $\pi$ over processes, we write $\confCPS {\sigma} {\pi}$ to denote the distribution over (environment-free) \CPS{s} defined as $(\confCPS {\sigma} {\pi})(\confCPS{S}{P})= {\sigma}(S) \cdot \pi(P)$.
Moreover, given a physical environment $E$, we write $\confCPS {E; \sigma} {\pi}$ to extend  the distribution $\confCPS {\sigma} {\pi}$ to full \CPS{s} as follows: $(\confCPS {E; \sigma} {\pi})(\confCPS{E;S}{P})= {\sigma}(S) \cdot \pi(P)$. 
 Actions,  ranged over by $\alpha$, are in the set $\ActComp = \{\tau,
{\out c v}, {\inp c v}, \tick \}$. These actions 
denote: non-observable activities ($\tau$); channel transmission (${\out c v}$); channel reception (${\inp c v}$); the passage of  time ($\tick$).

As physical environments contain static information, for simplicity the resulting transition rules are parameterised on a physical environment of the form $E = \envCPS
{\evolmap{}}
{\measmap{}}
{\invariantfun{}} $. Thus, instead of providing the transitions rules for 
a \CPS{} of the form $\confCPS {E;S}{P}$ we give the LTS semantics parametric on $E$ for the environment-free \CPS{} $\confCPS {S}{P}$.  

All rules, except \rulename{Deadlock}, have a common premise requiring that 
the current state function of the system must satisfy the invariant. 
With an abuse of notation, we sometimes write $S   \in \invariantfun{}$  instead of 
 $ \statefun{}  \in \invariantfun{}$ when  
$S=  \stateCPS {\statefun{}} {\sensorfun{}} {\actuatorfun{}}$. 
 Rules \rulename{Out} and
\rulename{Inp} model transmission and reception, with an external system,
on a channel $c$. 
Rule \rulename{Tau} lifts non-observable actions from processes to
systems. 
Rule \rulename{SensRead} models the reading of the
current data detected at sensor $s$. 
 Rule~\rulename{ActWrite} models the writing of a value $v$ on an
 actuator $a$. 
 A similar lifting occurs in rule
\rulename{Time} for timed actions, where $\operatorname{next}_E(S)$ returns a 
probability distribution over possible physical states for the next time slot, 
according to the current physical state $S$ and  physical environment $E$.
 Formally, for 
 $S = \stateCPS
{\statefun{}}
{\sensorfun{}}
{\actuatorfun{}}
$ and $E = \envCPS
{\evolmap{}}
{\measmap{}}
{\invariantfun{}} $, we define: 
 \[
\mathit{next}_E(S)  \; = \;  
\sum_{\substack{ \statefun'{}  \in \support(\evolmap{}(\statefun{}, \actuatorfun{}))  \\ 
\sensorfun'{} \in \support(\measmap{}({\statefun{}'}))} } 
\big( \evolmap{}(\statefun{}, \actuatorfun{}) (\statefun'{})      \cdot
 \measmap{}({\statefun{}')}(\sensorfun'{})  \big)  \cdot
\dirac{ 
 \stateCPS {\statefun'{}} {\sensorfun'{}}
{\actuatorfun{}}  }
 \, . 
\]
Intuitively, the operator $\operatorname{next}_E$ serves to compute the
possible state functions and  sensor functions of the next time slot (actuator changes are governed by the cyber-component). More precisely, the (probability distribution over the) next state function is determined  by applying $\evolmap{}$ to the current state function $\statefun{}$ and the current actuator function $\actuatorfun{}$.
The probability weight of any possible state function $\statefun'{}$ is given by $\evolmap{}(\statefun{}, \actuatorfun{})(\statefun'{})$. 
Then, for a state function $\statefun'{}$, 
the (probability distribution over the) next sensor function is given by applying $\measmap{}$ to $\statefun'{}$. Finally, 
the probability weight of any possible sensor function $\sensorfun'{}$ is
given by $\measmap{}(\statefun'{})(\sensorfun'{})$.

Recapitulating, by an application of rule \rulename{Time} a \CPS{} moves to the next physical state, in the next time slot. Rule \rulename{Deadlock} is straightforward: if the invariant is not satisfied then the \CPS{} deadlocks. 

Finally, notice that in our LTS we defined transitions rules of the 
form $\confCPS {S}{P} \trans{\alpha} \confCPS \sigma \pi$, parametric on some 
physical environment $E$. As physical environments do not change at runtime,
$\confCPS {S}{P} \trans{\alpha} \confCPS \sigma \pi$ entails $\confCPS {E;S}{P} \trans{\alpha} E ; \confCPS \sigma \pi$, thus providing the probabilistic LTS for (full) \CPS{s}.

\begin{remark}
Note that the rules in \autoref{tab:lts_systems_P} define an \emph{image finite} pLTS. This means that for any \CPS{} $M$ and label $\alpha$ there are finitely many distributions reachable from $M$ in one $\alpha$-labelled transition step.
Moreover, all transitions $M \transS[\alpha] \gamma$ are such that $\gamma$ has a finite support.
\end{remark}

Now, having defined the labelled transitions that can be performed by
a \CPS{} of the form $\confCPS{E;S}{P}$, we can easily concatenate these transitions to define 
the possible computation traces of a system. 
A \emph{computation trace}~\cite{BdNL14} for a \CPS{} $\confCPS {{E};S_1} {{P}_1}$ is a sequence of steps of the form
$\confCPS {E; S_1} {{P}_1}  \transS[\alpha_1] \dots \transS[\alpha_{n-1}] \confCPS {E ; S_n} {{P}_n}$ where for any $i$, with $1 \leq i \leq n-1$, we have $\confCPS {E ; S_i} {P_i} \transS[\alpha_i]  E ; \confCPS {\sigma_{i+1}} {\pi_{i+1}}$ for distributions  $\sigma_{i+1}$ and $\pi_{i+1}$ such that 
$S_{i+1} \in \support{(\sigma_{i+1})}$ and $P_{i+1} \in \support{(\pi_{i+1})}$.

Below,  we report a few  desirable time properties~\cite{HR95} which hold in our calculus:
(a) \emph{time determinism\/}, (b) \emph{maximal progress\/}, (c)
\emph{patience\/}, 
 and (d) \emph{well-timedness\/}. 
In its standard formulation, \emph{time determinism}
says that a system reaches at most one new state by executing a 
timed action $\tick$; however, in our setting, this holds
only for the logical components (up to structural congruence) whereas 
the evolution of 
the physical component is intrinsically probabilistic, due to 
the presence of  uncertainty and measurement errors. 
The \emph{maximal progress}
 property usually says that processes communicate as soon as a possibility of communication arises. In our calculus, we generalise this property saying that instantaneous (silent) actions cannot be delayed. 
On the other hand, \emph{patience} says that if no instantaneous actions are possible then time is free to pass.  Finally, \emph{well-timedness}~\cite{MBS11,CHM15} ensures the absence of infinite instantaneous 
traces  which would prevent the passage of time, and hence the 
physical evolution of a \CPS{}.

\begin{theorem}[Time properties] 
\label{prop:time}	
Let $M = \confCPS {E;S} P$. 
\begin{itemize}[noitemsep]
\item[(a)]
\label{prop:timed}
If $M \trans \tick  \gamma $ and 
$M\trans \tick \gamma'$ then $\gamma \equiv \gamma'$.
\item[(b)]
\label{prop:maxprog}
If $M\trans \tau \gamma$ then there is no $\gamma'$ such that $M\trans \tick
\gamma'$. 
\item[(c)]
\label{prop:patience}
If $M \trans \tick \gamma$ for no $\gamma$ then 
either 
$S$ does not satisfy the invariant of $E$ or
there is $\gamma'$ such that $M \trans \tau \gamma'$. 
\item[(d)]
\label{prop:welltime}
There is a $k\in\mathbb{N}$ such that  if
$M  \transS[\alpha_1]\dots \transS[\alpha_n] N$, with $\alpha_i \neq \tick$,   then $n\leq k$. 
\end{itemize}
\end{theorem}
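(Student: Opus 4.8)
The plan is to reduce each of the four system-level statements to a fact about the process pLTS of \autoref{tab:lts_processes} and then lift it through the rules of \autoref{tab:lts_systems}. The observation underlying (a)--(c) is that the only system rule producing a $\tick$ is \rulename{Time}, whose conclusion has the shape $\confCPS S P \trans{\tick} \confCPS{\operatorname{next}_E(S)}{\pi}$ with $P \trans{\tick} \pi$, and whose side conditions are $S \in \invariantfun{}$ and $\confCPS S P \ntrans{\tau}$; dually, every non-$\tick$ system transition is a lifting of a non-$\tick$ process transition through \rulename{Out}, \rulename{Inp}, \rulename{Tau}, \rulename{SensRead} or \rulename{ActWrite} (all requiring $S \in \invariantfun{}$), or else the single step produced by \rulename{Deadlock} when $S \notin \invariantfun{}$. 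For (a), since $\operatorname{next}_E(S)$ is a distribution fixed by $S$ and $E$, it suffices to establish process-level determinism: $P \trans{\tick} \pi$ and $P \trans{\tick} \pi'$ imply $\pi = \pi'$. I would prove this by induction on the height of the transition derivation. The head constructor of $P$ forces a unique applicable $\tick$-rule (\rulename{TimeNil}, \rulename{Delay}, \rulename{Timeout}, \rulename{TimePar} or \rulename{ChnRes}), while $\mathit{phy}.C$ has no $\tick$-transition at all; for the first three the target is syntactically determined, and for \rulename{TimePar}, \rulename{ChnRes} and \rulename{Rec} the inductive hypothesis applied to the strictly shorter premise derivations (for \rulename{Rec}, the premise $P\{\fix X P/X\} \trans{\tick} \cdot$) forces equality. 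Lifting back yields $\gamma = \confCPS{\operatorname{next}_E(S)}{\pi} \equiv \confCPS{\operatorname{next}_E(S)}{\pi'} = \gamma'$.

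For (b), if $M \trans{\tau} \gamma$ then the step is either an instance of \rulename{Deadlock}, whence $S \notin \invariantfun{}$ blocks the invariant premise of \rulename{Time}, or an instance of \rulename{Tau}, \rulename{SensRead} or \rulename{ActWrite}, which witnesses $\confCPS S P \trans{\tau}$ and hence violates the maximal-progress side condition $\confCPS S P \ntrans{\tau}$ of \rulename{Time}; as \rulename{Time} is the only source of $\tick$, no $\tick$-transition exists. For (c) I argue contrapositively: assuming $S \in \invariantfun{}$ and $M \ntrans{\tau}$, I show $M \trans{\tick}$. The crux is the process-level claim that every closed process performs a $\tick$ or an \emph{urgent} action $\mu \in \{\tau, \rcva s z, \snda a v\}$, proved by induction on $P$: the bases are immediate ($\nil$, $\tick.C$, $\timeout{\mathit{chn}.C}{D}$ offer $\tick$; $\mathit{phy}.C$ offers an urgent read or write), \rulename{ChnRes} and the conditional propagate the property, and for $P_1 \parallel P_2$ a case analysis suffices: if some component offers an urgent action it propagates by \rulename{Par}, while if neither does then each offers a $\tick$ and either they communicate to give a $\tau$ or no $\tau$ is possible and \rulename{TimePar} enables a $\tick$; the recursion case is treated uniformly below. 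Given this claim, an urgent process action would, with $S \in \invariantfun{}$ and well-formedness (so the named sensor or actuator is defined), lift to $M \trans{\tau}$, contradicting the hypothesis; hence $P \trans{\tick} \pi$, and since also $S \in \invariantfun{}$ and $\confCPS S P \ntrans{\tau}$, rule \rulename{Time} fires.

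For (d), I would exhibit an $\mathbb{N}$-valued measure $f$ on processes (with $f(X) = 0$ for the recursion variable) that strictly decreases along every non-$\tick$ transition, setting $f(\nil) = f(\tick.C) = 0$, $f(\timeout{\mathit{chn}.C}{D}) = f(\mathit{phy}.C) = 1 + \max_i f(P_i)$ over the summands $P_i$ of the guarded choice $C$, $f(P_1 \parallel P_2) = f(P_1) + f(P_2)$, $f(P\backslash c) = f(P)$, $f(\fix X P) = f(P)$, and $f(\ifelse b P Q)$ the value of the selected branch. Since $X$ occurs only in time-guarded positions, namely the continuation of $\tick.{\cdot}$ and the timeout branch $D$ of $\timeout{\mathit{chn}.C}{D}$, and $f$ ignores precisely those positions, $f$ is invariant under substitution for $X$; hence $f(\fix X P) = f(P) = f(P\{\fix X P/X\})$, which is what makes \rulename{Rec} decrease $f$. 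A rule-by-rule check then shows that every non-$\tick$ process transition sends $f$ strictly down on the support of its target (value substitutions leave $f$ unchanged, and \rulename{Com} lowers it by two). Finally, no non-$\tick$ system rule alters the state function $\statefun{}$ (an actuator write changes only $\actuatorfun{}$), so along any non-$\tick$ trace either $\statefun{} \in \invariantfun{}$ throughout, and the lifted measure drops at each step, or $\statefun{} \notin \invariantfun{}$ and the only enabled rule is \rulename{Deadlock}, a single terminal step to $\dummyN$; hence $k = f(P) + 1$ bounds the length of every non-$\tick$ trace from $M = \confCPS{E;S}{P}$.

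I expect the main obstacle to be time-guarded recursion, which recurs in (a), (c) and (d): a naive structural induction breaks at $\fix X P$ because its one-step behaviour is that of the syntactically larger unfolding $P\{\fix X P/X\}$. The uniform remedy is that the recursion variable sits only behind time-guards and is therefore irrelevant to the first (non-$\tick$) step; I would make this precise either by inducting on derivation height, as in (a), or by noting that both the measure $f$ and the ``$\tick$-or-urgent'' predicate are insensitive to substitution in time-guarded positions, as in (d) and (c).
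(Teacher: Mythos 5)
Your treatment of (a)--(c) follows essentially the same route as the paper: establish the corresponding property at the process level (time determinism by induction on the derivation height; patience as the ``$\tick$-or-urgent'' dichotomy by structural induction) and then lift it through the system rules, observing that \rulename{Time} is the only source of $\tick$ and that its side conditions $S \in \invariantfun{}$ and $\confCPS{S}{P} \ntrans{\tau}$ do all the work in (b) and (c). The one local difference is how recursion is handled in patience: the paper repeatedly unfolds $\fix X Q \equiv Q\subst{\fix X Q}{X}$ until a non-recursive head appears (justified by time-guardedness), whereas you note that the set of initially offered actions is insensitive to what is substituted for a time-guarded variable; both are sound, and yours avoids the appeal to structural congruence. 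The genuine divergence is in (d). The paper argues by contradiction: an unbounded non-$\tick$ system trace would project to an unbounded non-$\tick$ process trace, contradicting well-timedness of processes, which the paper dispatches as ``straightforward from time-guarded recursion.'' You instead exhibit an explicit $\mathbb{N}$-valued measure $f$ that ignores exactly the time-guarded positions (hence is invariant under the substitution performed by \rulename{Rec}) and strictly decreases on the support of the target of every non-$\tick$ process transition, and you combine this with the observation that non-$\tick$ system steps never modify $\statefun{}$, so \rulename{Deadlock} contributes at most one terminal step. This is more constructive than the paper's argument --- it produces the concrete bound $k = f(P)+1$ --- and it fills in precisely the process-level step the paper leaves implicit. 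One cosmetic point: in (a) you claim $\pi = \pi'$ on the nose at the process level, while the paper states and uses only $\pi \equiv \pi'$; either suffices to conclude $\gamma \equiv \gamma'$.
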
 
The proof of \autoref{prop:time} can be found in the Appendix, in \autoref{app_uno}.

\section{Probabilistic bisimulation}
\label{sec:bisimulation}
 In this section, we are ready to define a bisimulation-based behavioural 
equality for \CPS{s}, relying on our labelled transition semantics. We recall that the only 
\emph{observable activities} in \cname{} are: the passage of time
 and channel communication.  As a consequence, the capability to observe physical events (different from deadlocks) depends on the capability of the cyber components to recognise those events by acting on sensors and actuators, and then signalling them using (unrestricted) channels.

In a probabilistic setting, the definition of weak transition $\TransS[\hat{\alpha}]$, which abstract away non-observable actions, is complicated by the fact that (strong) transitions take \CPS{s} to distributions over \CPS{s}.
Following \cite{Dengetal2008,LMT17}, we need to generalise transitions, so that they take sub-distributions to sub-distributions.

With an abuse of notation, we use  $\gamma$ and $\gamma'$  to range over sub-distributions over \CPS{s}, under the assumption that $\sum_{M \in \nome} \gamma(M) \le 1$.

Let us start with defining the weak transition $M \urg[\hat{\alpha}] \gamma$ for any \CPS{} $M$ and distribution $\gamma$.
If $\alpha = \tau$ then we write $M \urg[\hat{\alpha}] \gamma$ whenever either $M \transS[\alpha] \gamma$ or $\gamma = \dirac{M}$. Otherwise, 
if $\alpha \neq \tau$ then  we write $M \transS[\hat{\alpha}] \gamma$ whenever $M \transS[\alpha] \gamma$. 
The relation $\transS[\hat{\alpha}]$ is extended to model transitions from sub-distributions to sub-distributions.
For a sub-distribution $\gamma =\sum_{i \in I} p_i \cdot \dirac{M_i}$, we write $\gamma \urg[\hat{\alpha}] \gamma'$ if
 there is a non-empty set $J\subseteq I$ such that $M_j \transS[\hat{\alpha}] \gamma_j$ for all $j \in J$,
$M_i \ntransS[\hat{\alpha}]$, for all $i \in I \setminus J$,  and $\gamma' = \sum_{j \in J}p_j \cdot  \gamma_j$.
Note that if $\alpha \neq \tau$ then this definition entails that only some \CPS{s} in the support of $\gamma$ have an $\transS[\hat{\alpha}]$ transition.
Then, we define the weak transition relation $\TransS[\hat{\tau}]$ as the transitive and reflexive closure of $\transS[\hat{\tau}]$, i.e.\ 
$\TransS[\hat{\tau}] = (\transS[\hat{\tau}])^{\ast}$, while for $\alpha \neq \tau$ we let $\TransS[\hat{\alpha}]$ denote $\TransS[\hat{\tau}] \transS[\hat{\alpha}] \TransS[\hat{\tau}]$.
 
In order to define a probabilistic bisimulation, following \cite{DD11} we rely on the notion of 
 \emph{matching}~\cite{Vil08} (also known as \emph{coupling}) for a pair of distributions. Intuitively, the matching for
a pair $(\gamma,\gamma')$ may be understood as a transportation schedule for the shipment of probability mass from $\gamma$ to $\gamma'$.
\begin{definition}[Matching]
\label{def_matching}
A \emph{matching} for a pair of distributions 
$(\gamma,\gamma')$, with $\gamma, \gamma'   \in \distr{\nome}$, is a distribution $\omega$ in the product space $\distr{\nome\times \nome}$ such 
that:
\begin{itemize}
\item
$\sum_{{M'} \in \nome} \omega(M,M')=\gamma(M)$, for all $M \in \nome$, and 
\item 
$\sum_{M \in \nome} \omega(M,M')=\gamma'(M')$, for all $M' \in \nome$. 
\end{itemize}
We write $\Omega(\gamma,\gamma')$ to denote the set of all matchings for $(\gamma,\gamma')$.
\end{definition}

Everything is in place to define  weak probabilistic bisimulation for \cname{}, along the lines of~\cite{ALS00}.

\begin{definition}[Weak probabilistic bisimulation]
\label{def:bisimulation}
A binary symmetric relation $\RR$ over \CPS{s} is a \emph{weak probabilistic bisimulation} if 
 $M \RRr N$ and $M \trans{\alpha} \gamma$ implies that there exist a distribution $\gamma'$ and a matching $\omega \in \Omega(\gamma,\gamma')$ 
such that $N\Trans{\hat{\alpha}}\gamma'$, and  
$M' \RRr N'$ whenever $\omega(M',N') > 0$. We say that $M$ and $N$ are bisimilar, written $M \approx N$, 
if $M \RRr N$ for some weak probabilistic bisimulation $\RR$. 
\end{definition}

A main result of the paper is that  bisimilarity can be used to reason on \CPS{s} in a compositional manner. In particular,  bisimilarity is preserved   by 
parallel composition of \emph{physically-disjoint} \CPS{s}, 
by parallel composition of \emph{pure-logical} processes,
 and by channel restriction; basically, all those contexts that cannot interfere on 
physical devices (sensors and actuators), whereas interferences on logical components (via channel communication) is allowed. 

Intuitively, two \CPS{s} are physically-disjoint if they 
have different plants but they may share logical channels for 
communication purposes. More precisely, physically-disjoint \CPS{s} have disjoint state variables and disjoint physical devices (sensors and actuators).  
As we consider only well-formed \CPS{s} (\autoref{def:well-formedness}), this ensures us that a  \CPS{} cannot 
physically interfere with a parallel \CPS{} by acting on its physical devices. Although, logical interferences on communication channels are allowed.

Formally, let   
$S^i = \stateCPS {\statefun^i{}} {\sensorfun^i{} } {\actuatorfun^i{} }$ and 
$E^i = \envCPS
{\evolmap^i{} }
{\measmap^i{} }   
{\invariantfun^i{} }
$ be physical states and physical environments, respectively, 
associated to  state variables in the set  
  ${\mathsf{X}}_i$, sensors in  the set ${\mathsf{S}}_i$, and actuators in the set ${\mathsf{A}}_i$,  for $i \in \{ 1,2\}$.
For ${\mathsf{X}}_1 \cap {\mathsf{X}}_2=\emptyset$, 
${\mathsf{S}}_1 \cap {\mathsf{S}}_2=\emptyset$  and 
${\mathsf{A}}_1 \cap {\mathsf{A}}_2=\emptyset$,     we define: 

\begin{itemize}
\item   the \emph{disjoint union} of the physical states $S_1$ and $S_2$,
written $S_1 \uplus S_2$,  to be the physical state 
$   \stateCPS
{\statefun{}} 
{\sensorfun{}}
{\actuatorfun{} }
$
such that: 
${\statefun{} }=  \statefun^1{} \uplus \statefun^2{}  $, 
${\sensorfun{} }=\sensorfun ^1{} \uplus \sensorfun^2{}  $, and 
${\actuatorfun{} }=\actuatorfun ^1{} \uplus \actuatorfun^2{}  $; 
\item
 the \emph{disjoint union} of the physical environments $E_1$ and $E_2$,
written $E_1 \uplus E_2$,  to be the physical environment 
$   \envCPS 
{\evolmap{} } 
{\measmap{} }   
{\invariantfun{} }
$ such that:
\begin{center}
\begin{math}
\begin{array}{rcl}
({\evolmapP{}}(\statefun^1{} \uplus \statefun^2{},\actuatorfun ^1{} \uplus \actuatorfun^2{}))(\statefun^1{'} \uplus \statefun^2{'}) & = &{\evolmapP^1{}}(\statefun^1{},\actuatorfun^1{})(\statefun^1{'}) 
\cdot {\evolmapP^2{}}(\statefun^2{},\actuatorfun^2{})(\statefun^2{'}) 
\\
({\measmapP{}}( \statefun^1{} \uplus \statefun^2{} ))( \sensorfun^1{'} \uplus \sensorfun^2{'})& =  &
{\measmap^1{}}(\statefun^1{} )(\sensorfun^1{'} )
\cdot
{\measmap^2{}}(\statefun^2{} )(\sensorfun^2{'} )
 \\
 \statefun^1{} \uplus \statefun^2{}  \in \invariantfun{} & \text{iff} &
   \statefun^1{}   \in \invariantfun^1{}  \text{ and }
 \statefun^2{}   \in \invariantfun^2{} \, . 
\end{array}
\end{math}
\end{center}
\end{itemize}

\begin{definition}[Physically-disjoint \CPS{s}]
Let $M_i = \confCPS {E_i;S_i}{P_i}$, for $i \in \{ 1 , 2 \}$. We say that 
$M_1$ and $M_2$ are physically-disjoint if
$S_1$ and $S_2$ have disjoint sets of state variables, sensors and actuators. In this case,  we write $M_1 \uplus M_2$ to denote the \CPS{}  defined as $\confCPS {(E_1\uplus E_2); (S_1 \uplus S_2)}{(P_1\parallel P_2)}$. For any 
$M \in \cname{}$, the special system $\dummyN$ is physically-disjoint with 
$M$, and $M \uplus \dummyN = \dummyN \uplus M = \dummyN$. 
\end{definition}

A \emph{pure-logical process} is a process which may interfere on communication channels but it never interferes on physical devices as it never accesses  sensors and/or actuators. Basically, a pure-logical process is  a (possibly probabilistic) TPL process~\cite{HR95}.
 Thus, in a system $M \parallel Q$, where $M$ is an arbitrary 
\CPS{}, a pure-logical process $Q$ cannot interfere with the physical evolution of $M$. 
Although, process $Q$ can definitely interact with $M$ via communication 
channels, and hence affect its observable behaviour. 
\begin{definition}[Pure-logical processes]
A process $P$ is called \emph{pure-logical} if it never acts on 
sensors and/or actuators.
\end{definition}

Now, we can finally prove the compositionality of  probabilistic bisimilarity $\approx$. 
\begin{theorem}[Congruence results]
\label{thm:congruence}
Let $M$ and $N$ be two arbitrary \CPS{s} in \cname{}. 
 \begin{enumerate}
\item
\label{thm:congruence1}
 $M \approx N$ implies $M \uplus O \approx N \uplus O$, for any 
physically-disjoint \CPS{} $O$; 
\item
\label{thm:congruence2}
 $M \approx N$ implies $M \parallel P \approx N \parallel P$, for any 
pure-logical process $P$; 
\item 
\label{thm:congruence3}
$M \approx N$ implies $M \backslash c \; \approx \; N \backslash c$, for 
any channel $c$.
\end{enumerate} 
\end{theorem}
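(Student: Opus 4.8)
The plan is to prove each of the three congruence results by exhibiting, in each case, an explicit weak probabilistic bisimulation that contains the relevant pair of composed systems. This is the standard ``bisimulation-up-to-context'' style of argument, but since we are proving congruence from scratch I would build the candidate relations directly. For part~\ref{thm:congruence1}, I would define
\[
\RR_1 = \{ (M \uplus O,\; N \uplus O) : M \approx N,\ O \text{ physically-disjoint with both} \} \cup {\approx},
\]
for part~\ref{thm:congruence2} I would take $\RR_2 = \{ (M \parallel P,\; N \parallel P) : M \approx N,\ P \text{ pure-logical} \} \cup {\approx}$, and for part~\ref{thm:congruence3} I would take $\RR_3 = \{ (M \backslash c,\; N \backslash c) : M \approx N \} \cup {\approx}$. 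In each case I must check that the candidate is a weak probabilistic bisimulation: given a transition $M \uplus O \trans{\alpha} \gamma$ (respectively $M \parallel P \trans{\alpha} \gamma$, $M \backslash c \trans{\alpha} \gamma$), I analyse by which SOS rule of \autoref{tab:lts_systems_P} and \autoref{tab:lts_processes} it was derived, isolate the contribution coming from $M$ versus the contribution from the context $O$/$P$/restriction, use the hypothesis $M \approx N$ to match the $M$-part with a weak transition $N \Trans{\hat\alpha} \gamma''$, and then reassemble the context around it to obtain a matching transition of the $N$-side. Closure of the matching $\omega$ under $\RR_i$ then has to be verified, which is the delicate bookkeeping step.

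The key technical work is a \emph{decomposition/recomposition} lemma for transitions of composite systems. Concretely, I would first establish that every transition $M \uplus O \trans{\alpha}\gamma$ arises in one of a few shapes: (i) an action contributed purely by $M$ (with $O$ left as the Dirac context, i.e.\ $\gamma = (M\text{-part}) \uplus \dirac O$ suitably read through the distribution operators $\confCPS\sigma\pi$), (ii) an action contributed purely by $O$, (iii) a $\tau$ arising from a \rulename{Com} synchronisation between $M$ and $O$ on an unrestricted channel, or (iv) a $\tick$ obtained via \rulename{Time}/\rulename{TimePar}, which requires coordination because both components must tick simultaneously and neither may have an enabled $\tau$ (maximal progress). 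The physical-disjointness hypothesis is exactly what guarantees these cases do not interact on sensors/actuators, so the $\operatorname{next}_{E_1\uplus E_2}$ factorises as the product of $\operatorname{next}_{E_1}$ and $\operatorname{next}_{E_2}$ (this is precisely the definition of $E_1\uplus E_2$ given just before the theorem). Thus the $\tick$-transition of $M\uplus O$ is the product distribution of the $\tick$-transitions of $M$ and $O$, and I can match it by letting $N$ perform its own (time-deterministic, by \autoref{prop:time}(a)) $\tick$ and taking the corresponding product. For parts~2 and~3 the decomposition is simpler, since a pure-logical $P$ only touches channels and restriction only blocks channels, so no physical interaction ever occurs.

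The main obstacle, and the step I expect to demand the most care, is the \emph{synchronisation case for the} $\tick$ \emph{action}, and more generally matching a \emph{weak} transition while preserving the product/context structure. The difficulty is twofold. First, when $M \approx N$ matches a strong transition $M \trans\alpha \gamma_M$ by a weak transition $N \Trans{\hat\alpha}\gamma_N$, that weak transition is a concatenation $N \TransS[\hat\tau]\transS[\hat\alpha]\TransS[\hat\tau]$ of many steps over sub-distributions; I must lift each such step through the context operator and argue that e.g.\ $N\parallel P$ can mimic the internal $\tau$-steps of $N$ by \rulename{Par} without $P$ interfering, and crucially that these context-lifted $\tau$-steps do not accidentally enable a synchronisation with $P$ that would violate the \rulename{TimePar} side-condition $P_1\parallel P_2 \ntrans\tau$. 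Second, the matching $\omega$ for the composite distributions must be constructed from the matching $\omega_M$ given by $M\approx N$ by ``tensoring'' with the identity coupling on the context component; I must check that every pair in the support of the tensored matching again has the form $(M'\uplus O',\,N'\uplus O')$ with $M'\approx N'$, i.e.\ that the context component is identical on both sides and the plant components stay related. Making this ``lift a weak transition over a context and tensor the matching'' argument precise, uniformly for $\tau$, input/output, and $\tick$, is where the genuine effort lies; the pure-logical and restriction cases will then follow as lighter instances of the same scheme.
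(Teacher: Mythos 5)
Your plan is sound and would work, but it is a genuinely different route from the one the paper takes. The paper never directly exhibits a weak probabilistic bisimulation for $\approx$: it first proves the non-expansiveness of the weak bisimilarity \emph{metric} $\metric$ under the same three contexts (\autoref{thm:congruenceP}, via Propositions~\ref{lem:cong1P}--\ref{lem:cong3Pk}), and then obtains \autoref{thm:congruence} as the $p=0$ instance through the kernel property $\metric(M,N)=0 \iff M \approx N$ (\autoref{prop:kernel}). Inside that metric proof the technical content is essentially what you describe --- a lemma lifting strong and weak untimed transitions over a physically-disjoint context (Lemmas~\ref{lem:aux1P} and~\ref{lem:aux1Pbis}), the factorisation of $\operatorname{next}_{E_1\uplus E_2}$ as a product, a case analysis on the deriving SOS rule, and the ``tensoring'' of couplings --- except that instead of closing a relation the paper defines the candidate pseudometric $d(M\uplus O, N\uplus O)=\metric(M,N)$ and invokes minimality of $\metric$. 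What the paper's route buys is that the whole case analysis is done once and serves both \autoref{thm:congruence} and \autoref{thm:congruenceP}; what your route buys is a self-contained, more elementary argument if one only cares about $\approx$. Two refinements you should import from the paper: (i) you need an explicit lemma that bisimilar (distance $<1$) \CPS{s} agree on whether their invariants hold (the paper's \autoref{lem:bis-invP}), since recomposing the context on the $N$-side requires knowing that $N$'s state satisfies its invariant whenever $M$'s does --- your decomposition step silently assumes this; and (ii) your part~2 need not repeat the case analysis at all, since the paper reduces it to part~1 by embedding the pure-logical $P$ into a \CPS{} with empty physical component and using transitivity, a shortcut that also simplifies your $\RR_2$.
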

%

The proof can be found in the Appendix, at the end of \autoref{Sec:A.5}. 

The reader may wonder whether the bisimilarity $\approx$ is preserved by more permissive contexts.  The answer is no. 
Suppose to allow in the second item of \autoref{thm:congruence} a process $P$ that can also read on sensors. In this case, even if $M$ and $N$ are bisimilar, the parallel process $P$ might read a different value in the two systems  at the very same sensor $s$ (due to the sensor error) and transmit these different values on a free channel, breaking the congruence. Activities on actuators  may also lead to different behaviours of the compound systems:  bisimilar \CPS{s} may have physical components that are not exactly aligned. 
A similar reasoning applies when composing \CPS{s} with non physically-disjoint ones:  interference on physical devices may break the congruence.

However, in the next section we will see that the congruence results of 
\autoref{thm:congruence} will  be very useful when reasoning on complex systems.

\section{Case study}
\label{sec:case-study}
In this section, we provide a case study  to illustrate
how \cname{} can be used to specify and reason on \CPS{s} in a compositional 
manner. In particular, we model an engine 
 whose temperature is maintained within a specific range by means of a cooling system.

As regards the \emph{physical environment} we adopt discrete uniform distributions over suitable intervals to model both the evolution map and the measurement map.\footnote{Other forms of finite-support discrete probability distributions could be treated as well.} 
In our model, we assume a \emph{granularity}  $g\in \mathbb{N}^{+}$ representing the precision $10^{-g}$ of the model in estimating  physical values. 
Thus, for an arbitrary real interval $[v,w]$  we write  $\dint{[v,w]}{g}$
to denote the \emph{finite} set of reals $\{k \in[v,w] \colon k= v+ h \cdot 10^{-g}, 
\textrm{ with } h \in \mathbb{N}\}$. 

Given a granularity $g\in \mathbb{N}^{+}$, the physical state $\state_g$ of the engine  is characterised by: (i) a state variable $\mathit{temp}$ containing the current temperature of the engine; (ii)   a sensor
$s_{\mathrm{t}}$ (such as a thermometer or a thermocouple) measuring the temperature of the engine,
(iii) an actuator $\mathit{cool}$ to turn on/off the cooling system. 
The physical environment of the engine, $\env_g$, is constituted by:  
(i) a simple evolution law $\evolmap$ that increases (resp.\ decreases) the value of $\mathit{temp}$, when the cooling system is inactive (resp.\ active), by a value determined according to a discrete distribution of probability, taking into account an uncertainty in the model that may reach the threshold $\delta=0.4$, and granularity $g$  over reals;
 (ii)  a measurement map $\measmap{}$ returning the value detected by the sensor $s_t$  determined by a discrete probability distribution  based on a
 measurement error that may reach the threshold  $\mathit{err} = 0.1$, and granularity $g$;
(ii) an invariant set saying that the system gets faulty when the temperature of the engine gets out of the range $[0, 30]$.

Formally,  
$\state_g = \stateCPS 
{\statefun{}} 
{\sensorfun{}}
{\actuatorfun{}} 
$ and 
$\env_g = \envCPS 
{\evolmap{}}
{\measmap{}}   
{\invariantfun{}}$ 
with:
\begin{itemize}
\item[(i)] 
$\statefun{} \in \mathbb{R} ^{\{\mathit{temp}\} }$ and 
$\statefun{}(\mathit{temp})=0$;
\item[(ii)] 
$\sensorfun{} \in  \mathbb{R}^{\{s_{\mathrm{t}} \}}$ and 
$\sensorfun{}(\mathit{temp})=0$;
\item[(iii)]
$\actuatorfun{} \in \mathbb{R} ^{\{\mathit{cool}\} } $ and
$\actuatorfun{}(\mathit{cool})=\off$; for the sake of simplicity, we can
assume $\actuatorfun{}$ to be a mapping $\{ \mathit{cool} \} \rightarrow
\{ \on , \off\}$ such that $\actuatorfun{}(\mathit{cool})= \off$ if
$\actuatorfun{}(\mathit{cool}) \geq 0$, and $\actuatorfun{}(\mathit{cool})= \on$ if
$\actuatorfun{}(\mathit{cool}) < 0$. 
\end{itemize}

Furthermore, 
\begin{itemize}
\item[(i)] 
$\evolmap{}( \statefun'{}, \actuatorfun'{}) = 
\sum_{v \in [v_1,v_2]_g}  \frac{1}{|[v_1, v_2]_g|} \cdot \dirac{[\mathit{temp} \mapsto  \xi'_{\operatorname{x}}(temp) + v]}$,
for any $\xi'_{\operatorname{x}}\in \mathbb{R} ^{\{temp\}}$ and $\actuatorfun'{} \in \mathbb{R} ^{\{cool\}}$,
where $[v_1,v_2] =[1{-}\delta \, , \, 1{+}\delta]$,  if $\actuatorfun'{}(\mathit{cool}) = \off$ (inactive cooling), and $[v_1, v_2] =[-1{-}\delta \, , \, -1{+}\delta]$,
if $\actuatorfun'{}(\mathit{cool}) = \on$ (active cooling);
\item[(ii)] 
$\measmap{}(\statefun'{}) = 
\sum_{v \in   [ -\mathit{err}, +\mathit{err}]_g}  \frac{1}{| [ -\mathit{err}, +\mathit{err}]_g|} \cdot \dirac{[\mathit{s_t} \mapsto  \xi'_{\operatorname{x}}(temp) + v]}$,
for any $\xi'_{\operatorname{x}}\in \mathbb{R} ^{\{temp\}}$; 
%
%
\item[(iii)] 
$\invariantfun{} = \{ [ \mathit{temp} \mapsto x] \colon x \in \mathbb{R} \: \text{ and } \: 0 \leq x \leq 30   \}$. 
\end{itemize}

The \emph{cyber component} of the engine consists of a process $\mathit{Ctrl}$ 
which models the controller activity. Intuitively, process $\mathit{Ctrl}$ senses the temperature of the engine at each time interval. When the sensed temperature is above $10$, the controller activates the coolant. The cooling activity is maintained for $5$ consecutive time units. After that time, if the temperature does not drop below $10$ then the controller transmits its $\mathit{ID}$ on a specific channel for signalling a  $\mathit{warning}$, it keeps cooling  for another $5$ time units, and then checks again the sensed temperature; otherwise, if the 
sensed temperature is not above the threshold $10$, the controller turns off the cooling and moves to the next time interval.
 Formally,
\begin{displaymath}
\begin{array}{rcl}
\mathit{Ctrl} & \; = \; & \fix{X} \rsens x {s_{\operatorname{t}}} . 
\ifelse {x>10}
 { \mathit{Cooling}}
 {  \tick.X } \\[1pt]
\mathit{Cooling} & \;  = \;  &   \wact{\on}{\mathit{cool}}. \fix{Y} 
 \tick^5 .   \rsens x {s_{\operatorname{t}}} . \\
&&
\ifelse {x>10} {\OUT{\mathit{warning}}{\mathrm{ID}}.Y}
{\wact{\off}{\mathit{cool}}.\tick.X } \enspace . 
 \end{array}
\end{displaymath}

The \emph{whole engine} is defined as: 
\begin{math}
\mathit{Eng}_g \: = \:  \confCPS {\env_g;\state_g} { \mathit{Ctrl} } \,,
\end{math}
where $\env_g$ and $\state_g$ are the physical environment and the 
physical state defined before.

Our operational semantics allows us to formally prove a number of 
\emph{run-time properties\/} of our engine. 
For instance,  the following proposition says that our engine 
never reaches a warning state and  never deadlocks.  
\begin{proposition} 
\label{prop:sys}
Let $\mathit{Eng}_g$ be the \CPS{} defined before. 
Given any computation $\mathit{Eng}_g \transS[\alpha_1] \ldots \transS[\alpha_n] M$, then  $\alpha_i \in \{ \tau , \tick  \}$, for $1 \leq i \leq n$, and
 there is a distribution $\gamma$ such that $M \trans{\alpha} \gamma$, for some 
$\alpha \in \{ \tau , \tick  \}$. 
\end{proposition}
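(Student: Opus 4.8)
The plan is to reduce both parts of the statement to a single bound on the engine's temperature along every computation. For the existence of an enabled $\alpha\in\{\tau,\tick\}$, note that $\mathit{Ctrl}$ is a non-terminating loop whose control point is always a sensor read, an actuator write, or a $\tick$-prefix; hence, whenever the current state function lies in $\invariantfun{}$, either a read or write supplies a $\tau$-transition or, by patience (\autoref{prop:time}), a $\tick$-transition is available, while maximal progress guarantees these are the only labels. For the claim $\alpha_i\in\{\tau,\tick\}$, the only label outside $\{\tau,\tick\}$ that $\mathit{Ctrl}$ can ever produce is the output of rule \rulename{Out} coming from $\OUT{\mathit{warning}}{\mathrm{ID}}$ (there is no parallel partner, so \rulename{Com} never applies). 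Thus it suffices to prove that along every computation (i) $\statefun{}(\mathit{temp})$ stays within $[0,30]$, so that $\statefun{}\in\invariantfun{}$, and (ii) the warning output is never reached.

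Both facts follow from an invariant on $\statefun{}(\mathit{temp})$ that I would prove by induction on the length of the computation, using the syntax of $\mathit{Ctrl}$ to track the current phase. The numeric data are: a $\tick$ with cooling $\off$ changes $\mathit{temp}$ by a value in $[1{-}\delta,1{+}\delta]=[0.6,1.4]$; a $\tick$ with cooling $\on$ by a value in $[-1{-}\delta,-1{+}\delta]=[-1.4,-0.6]$; and a read returns $\mathit{temp}$ perturbed by an error in $[-\mathit{err},+\mathit{err}]=[-0.1,0.1]$. Since the discretised supports are contained in these continuous intervals, the argument is uniform in the granularity $g$. Two consequences drive the analysis: a read can yield a value $\le 10$ only when $\mathit{temp}\le 10.1$, and a read can yield a value $>10$ only when $\mathit{temp}>9.9$.

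The induction then tracks the intervals phase by phase. In the $\fix X$ loop (cooling $\off$) a $\tick$ is taken only after a read returning $\le 10$, hence from $\mathit{temp}\le 10.1$, so afterwards $\mathit{temp}\le 11.5$; consequently cooling is switched $\on$ only at a temperature in $(9.9,11.5]$. The body of $\fix Y$ now forces exactly $\tick^5$ with cooling $\on$ before the next read, decreasing $\mathit{temp}$ by a total in $[3.0,7.0]$, so at that read $\mathit{temp}\in(2.9,8.5]$ and the measured value is at most $8.6\le 10$. Hence the guard $x>10$ is false, the $\OUT{\mathit{warning}}{\mathrm{ID}}$ branch is never enabled, cooling is switched $\off$, and one further $\tick$ yields $\mathit{temp}\in(3.5,9.9]$, re-entering the $\fix X$ loop inside the same interval. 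As $\mathit{temp}=0$ initially, every reachable configuration satisfies $\mathit{temp}\in[0,11.5]\subseteq[0,30]$, which gives (i), while the falsity of the guard at the post-cooling read gives (ii).

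The main obstacle is the middle phase: establishing that the fixed five-step cooling window always suffices to bring not only $\mathit{temp}$ but also its \emph{measured} value back below the threshold $10$, under the worst-case combination of uncertainty $\delta$ and error $\mathit{err}$ accumulated over the preceding rise. This is exactly what rules out the warning branch. The remaining phases are routine interval arithmetic; the only subtlety is to phrase the intervals as a genuine loop invariant that is preserved across the $\fix X$ and $\fix Y$ recursions, and to observe that the bounds, being derived from the continuous intervals, hold for every $g$.
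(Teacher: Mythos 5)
Your proposal is correct and follows essentially the same route as the paper: the paper's proof also reduces everything to a temperature invariant phased by the state of the cooling system (its Lemma~\ref{lem:sys}, proved by induction on the number of $\tick$-actions, with exactly your interval arithmetic yielding $\mathit{temp}\in(9.9,11.5]$ at switch-on and $\mathit{temp}\in(2.9,8.5]$ after the five cooling steps), from which the invariant $[0,11.5]\subseteq[0,30]$ and the impossibility of the $\mathit{warning}$ output both follow. Your phrasing as a loop invariant over the syntactic control points of $\mathit{Ctrl}$, and your explicit appeal to patience and maximal progress for the liveness half, are just a repackaging of the same argument.
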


Actually, knowing that in each of the $5$ time slots of cooling, the temperature will drop of a value laying in the
interval $[1{-}\delta, 1 {+} \delta]_g$, we can be quite precise on the temperature reached by the engine before and after the cooling activity. 
Formally:
\begin{proposition}
\label{prop:X}
 Let $\mathit{Eng}_g \trans{\alpha_1} \ldots \trans{\alpha_n} M$ be an 
arbitrary computation of the engine, for some \CPS{} $M$: 
\begin{itemize}[noitemsep]
\item if $M$ turns  the cooling on  then the value of
the state variable $\mathit{temp}$ in $M$ ranges over $(9.9  ,  11.5]$;
\item if $M$ turns  the cooling off then the value of
the 
variable $\mathit{temp}$ in $M$ ranges over $(2.9  ,   8.5]$. 
\end{itemize}
\end{proposition}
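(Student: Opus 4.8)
The plan is to exploit the fact that the control flow of $\mathit{Ctrl}$ is almost deterministic: the only branching points are the two conditionals on the sensed temperature, and I would show that along every computation they are resolved in a fixed cyclic pattern. Concretely, I would prove by induction on the length of the computation $\mathit{Eng}_g \trans{\alpha_1}\cdots\trans{\alpha_n} M$ that every reachable $M$ lies in one of two control phases: a \emph{heating phase}, in which $\actuatorfun(\mathit{cool})=\off$ and the process is a residual of the recursion $X$ that alternately reads $s_{\mathrm{t}}$ and ticks; and a \emph{cooling phase}, in which $\actuatorfun(\mathit{cool})=\on$ and the process is a residual of $\mathit{Cooling}$ performing exactly five ticks before its next read. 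The induction simultaneously establishes the temperature bounds below and the fact that the $\mathit{warning}$ branch is never taken, so that each cooling phase lasts exactly five time units; these two facts are mutually dependent, which is why they must be proved together, by induction on the number of completed heat/cool cycles.

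Two quantitative ingredients feed the induction. First, a \emph{measurement-error bound}: whenever the controller performs $\rsens{x}{s_{\mathrm{t}}}$, the value obtained is $\sensorfun(s_{\mathrm{t}})$, which was fixed at the previous $\tick$ by the measurement map to lie within $[\statefun(\mathit{temp})-\mathit{err},\,\statefun(\mathit{temp})+\mathit{err}]$ with $\mathit{err}=0.1$ (at the very first read it coincides with $\statefun(\mathit{temp})$), so the sensed value $x$ always satisfies $|x-\statefun(\mathit{temp})|\le 0.1$. Second, an \emph{evolution bound}: by the definition of $\mathit{next}_E$ via $\evolmap$, a single $\tick$ changes $\statefun(\mathit{temp})$ by some $v\in[0.6,1.4]$ when $\actuatorfun(\mathit{cool})=\off$ (heating) and by some $v\in[-1.4,-0.6]$ when $\actuatorfun(\mathit{cool})=\on$ (cooling), since $\delta=0.4$ and the granularity only restricts these intervals to finite subsets.

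For the \emph{cooling-on} claim, the actuator is set to $\on$ exactly at the write $\wact{\on}{\mathit{cool}}$ heading $\mathit{Cooling}$, which is reached precisely after a heating read whose sensed value exceeds $10$ (neither the read nor the write alters $\statefun(\mathit{temp})$). Writing $T$ for the temperature there, the guard $x>10$ and the measurement bound give $T=x-e>10-0.1=9.9$. For the upper bound I use the structural fact, guaranteed by the induction, that this read is preceded one $\tick$ earlier by a heating read whose sensed value was at most $10$: such a preceding read always exists because the first read of any heating phase is taken at temperature at most $9.9$, hence below threshold. At that earlier read the temperature $T'$ satisfies $T'=x'-e'\le 10+0.1=10.1$, and one heating $\tick$ yields $T=T'+v\le 10.1+1.4=11.5$. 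Hence $T\in(9.9,\,11.5]$.

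For the \emph{cooling-off} claim, the actuator is set to $\off$ at the write $\wact{\off}{\mathit{cool}}$, reached after the five cooling ticks and the ensuing read inside the recursion $Y$. Starting from a cooling-on temperature in $(9.9,\,11.5]$, five cooling ticks decrease it by a total amount in $[3,7]$, so the temperature at that read lies in $(9.9-7,\,11.5-3]=(2.9,\,8.5]$, exactly the claimed range; and since it is at most $8.5$, the sensed value is at most $8.6<10$, forcing the $\off$ branch and thereby confirming that $\mathit{warning}$ never fires and cooling lasts exactly five ticks, which closes the induction. The main obstacle is precisely this mutual dependence between the control-flow regularity and the temperature bounds, forcing a single combined induction rather than a naive invariant; the remaining delicate points are the one-$\tick$ offset between the last sub-threshold read and the triggering read (needed for the sharp upper bound $11.5$) and the bookkeeping of open versus closed endpoints, which stem from the strictness of the guard $x>10$ on one side and from the non-strict $x\le 10$ together with the minimal per-tick change $0.6$ on the other.
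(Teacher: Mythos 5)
Your proposal is correct and follows essentially the same route as the paper: the paper proves the statement via a preliminary invariant lemma (its Lemma~\ref{lem:sys}), established by induction on the number of $\tick$-actions, which records exactly the facts you prove in your combined induction --- that when the coolant is off the temperature lies in $[0,11{+}\mathit{err}{+}\delta]$ because the previous sub-threshold read forces it to be at most $10.1$ one tick earlier, and that when the coolant is on it lies in $(10{-}\mathit{err}{-}k(1{+}\delta),\,11{+}\mathit{err}{+}\delta{-}k(1{-}\delta)]$ after $k\le 5$ cooling ticks, with the warning branch never firing. The only cosmetic difference is that the paper tracks the cooling phase with an explicit counter $k$ in a standalone lemma, whereas you bound the total five-tick decrease by $[3,7]$ inside a single induction; the numerical endpoints and the key one-tick-offset argument for the upper bound $11.5$ coincide.
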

The proofs of both propositions can be found in the Appendix,  in \autoref{app:sec:case-study}. 

\begin{figure}[t]
\centering
\includegraphics[width=7cm,keepaspectratio=true,angle=0]{./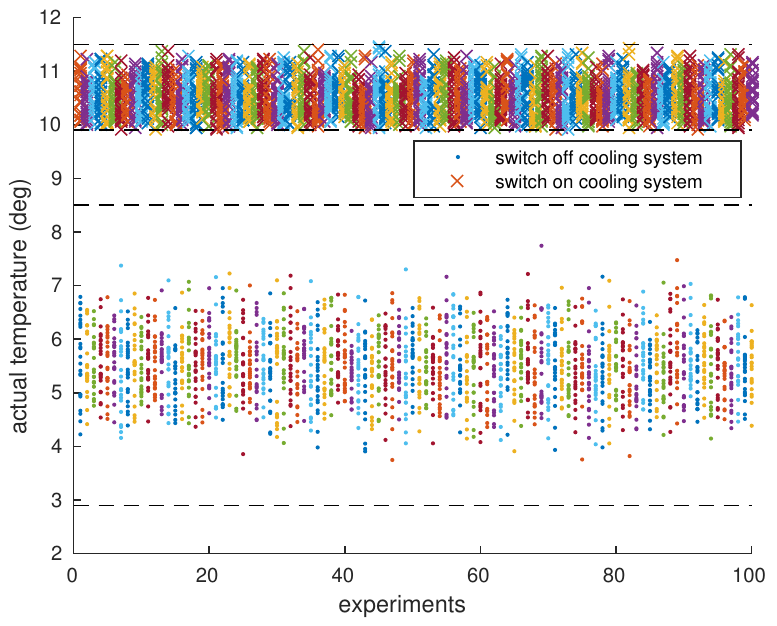}
\Q\Q\Q
\includegraphics[width=7cm,keepaspectratio=true,angle=0]{./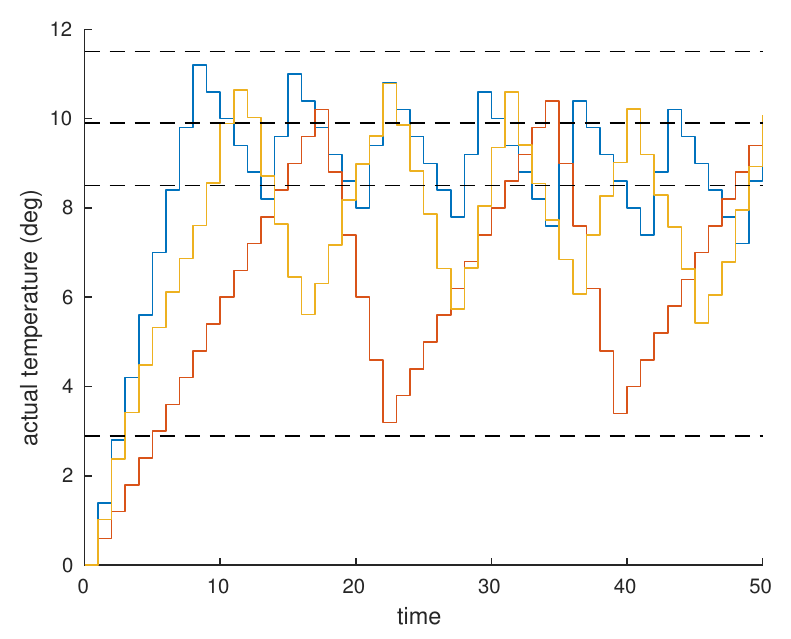}
\caption{Simulations in MATLAB of the engine $\mathit{Eng}$}
\label{f:HS traj}
\end{figure}

The result formally proved in  \autoref{prop:X} finds a correspondence 
in the left graphic of \autoref{f:HS traj}. In that graphic, we collect 
 a campaign of 100 simulations of our engine 
in MATLAB\footnote{MATLAB chooses a value in a real interval
by means of a discrete uniform distribution depending on the granularity  imposed by the finite number of bits used for the representation of floats. }, lasting 250 time units each, showing that the value of the
state variable $\mathit{temp}$ when the cooling system is turned on
(resp., off) lays in the interval $(9.9, 11.5]$ (resp., $(2.9,8.5]$);
these bounds are represented by the dashed horizontal lines. Obviously, when dealing with complex systems even several thousands of simulations do not ensure
the absence of incorrect states, as formally proved in \autoref{prop:sys}  and \autoref{prop:X}.

The right graphic of the same figure shows three  possible evolutions in time of
the state variable $\mathit{temp}$: (i)  the first one (in red),  in which  
the temperature of the engine always grows of $1-\delta = 0.6$ degrees per time step, when the cooling is off, and always decrease of $1+\delta=1.4$ degrees per time unit, when the cooling is on; (ii) the second  
one (in blue), in which the temperature always grows of $1+\delta=1.4$ degrees per time unit, when the cooling is off,  and  always decrease of $1-\delta=0.6$ degrees per time unit, when the cooling is on; (iii) and a third one (in yellow), in which, depending whether the cooling is off or on, at each time step the temperature grows or decreases of an arbitrary
offset laying in the interval $[1-\delta , 1+\delta]$. 

Now, the reader may wonder whether it is possible to design a variant 
of our engine which meets the same specification with better 
performances. 
For instance, an engine consuming less coolant. 
Let us consider a variant of the engine described before: 
\begin{center}
\begin{math}
\widetilde{\mathit{Eng}_g} \; = \; \confCPS {{\widetilde{\env_g}};\state_g} { \mathit{Ctrl} } \, . 
\end{math}
\end{center}
Here,   $\widetilde{\env_g}$ is the same as $\env_g$ except for the  evolution map, as  we set   $[v_1,v_2] =[-0.8{-}\delta \, , \, -0.8{+}\delta]$ 
if $\actuatorfun'{}(\mathit{cool}) = \on$ (active cooling).
 This means that in $\widetilde{\mathit{Eng}_g}$ we reduce the power  of the cooling system by $20\%$. In  \autoref{fig:consumption}, we report the results of our simulations  in MATLAB over $10000$ runs lasting $10000$ time units each. From this graph,  $\widetilde{\mathit{Eng}_g}$ saves in average more than $10\%$ of coolant with respect to $\mathit{Eng_g}$. So, the new question is: are these two engines behavioural equivalent? 
Do they meet the same specification?

Our  bisimilarity  provides us with 
a precise answer to these  questions: the two variants of the engine are bisimilar. 
\begin{proposition}
\label{prop:performances} 
 $\mathit{Eng}_g \approx \widetilde{\mathit{Eng}_g}\,$, for any $g\in \mathbb{N}^{+}$.
\end{proposition}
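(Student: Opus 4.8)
The plan is to exhibit a single weak probabilistic bisimulation (\autoref{def:bisimulation}) relating the two engines, exploiting the fact that their only difference --- the cooling rate --- never becomes observable. The crucial preliminary step is to establish that, exactly as for $\mathit{Eng}_g$ in \autoref{prop:sys} and \autoref{prop:X}, the variant $\widetilde{\mathit{Eng}_g}$ is both \emph{deadlock-free} and \emph{warning-free}. Since the off-phase evolution (cooling inactive) is identical in $\env_g$ and $\widetilde{\env_g}$, in both engines the controller activates the cooling at a temperature in $(9.9,11.5]$: the upper bound follows locally from the fact that the previous sensor reading was $\le 10$, whence the temperature was $\le 10{+}\mathit{err}$, plus one heating step $\le 1{+}\delta$, independently of where the off-phase started. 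In the variant each of the five cooling steps decreases $\mathit{temp}$ by at least $0.8{-}\delta=0.4$, so after the five steps the temperature is at most $11.5-5\cdot 0.4 = 9.5$, and the ensuing reading is at most $9.5+\mathit{err}=9.6 < 10$; hence the guard $x>10$ is always false at the end of a cooling cycle and the $\mathit{warning}$ channel is never used. The same computation keeps $\mathit{temp}$ within $[0,11.5]\subseteq[0,30]$ throughout, so the invariant of $\widetilde{\env_g}$ is never violated and rule \rulename{Deadlock} never fires.

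Consequently, every reachable configuration of either engine can perform only the internal actions $\tau$ (sensor readings, actuator writes, and the probabilistic branching these induce) and the timed action $\tick$; no configuration ever offers an observable channel action, and none ever deadlocks. From \autoref{prop:time} I then read off that every such configuration admits a \emph{weak} $\tick$-step of full probability mass: by patience a configuration that cannot tick directly and is not deadlocked must offer a $\tau$-move, by well-timedness only finitely many consecutive $\tau$-moves are possible, and the absence of deadlock guarantees that no branch of the induced distribution is ever stuck. Driving every branch to a tickable state by such $\tau$-moves and then letting them all tick yields $N \Trans{\hat{\tick}} \gamma'$ with $\size{\gamma'}=1$ and $\support(\gamma')$ inside the reachable set.

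With these facts in hand I would take
\[
\rel \;=\; \big(\mathcal{A}\times\mathcal{B}\big)\cup\big(\mathcal{B}\times\mathcal{A}\big),
\]
where $\mathcal{A}$ and $\mathcal{B}$ are the sets of \CPS{s} reachable from $\mathit{Eng}_g$ and $\widetilde{\mathit{Eng}_g}$, respectively, and verify that $\rel$ is a weak probabilistic bisimulation. Symmetry holds by construction, and both sets are closed under transitions, so the support of any distribution reached from $\mathcal{A}$ (resp.\ $\mathcal{B}$) stays in $\mathcal{A}$ (resp.\ $\mathcal{B}$). For a pair $(M,N)\in\mathcal{A}\times\mathcal{B}$ and a move $M\trans{\alpha}\gamma$, the label $\alpha$ is $\tau$ or $\tick$. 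A $\tau$-move is answered by idling, $N\Trans{\hat{\tau}}\dirac{N}$, taking the unique matching of $(\gamma,\dirac{N})$; a $\tick$-move is answered by the weak $\tick$-step $N\Trans{\hat{\tick}}\gamma'$ constructed above, with the product coupling $\omega(M',N')=\gamma(M')\cdot\gamma'(N')$ as matching. In either case every pair in $\support(\omega)$ lies in $\mathcal{A}\times\mathcal{B}\subseteq\rel$, so the transfer condition holds; the symmetric case is identical with the roles of $\mathcal{A}$ and $\mathcal{B}$ swapped. Since $(\mathit{Eng}_g,\widetilde{\mathit{Eng}_g})\in\rel$, this gives $\mathit{Eng}_g\approx\widetilde{\mathit{Eng}_g}$.

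The delicate point --- and the only place where the time properties are genuinely needed --- is the construction of the full-mass weak $\tick$-response: because a $\tau$-labelled sensor reading splits a configuration into a distribution, I must argue that all branches of this distribution can be synchronised to tickable configurations by finitely many $\tau$-steps before the common $\tick$ is performed, so that no probability mass is lost in the underlying sub-distribution transition $\gamma \trans{\tick} \gamma'$. The coarseness of $\rel$ (relating every reachable state of one engine to every reachable state of the other) is what makes the matching trivial once this weak-tick response is available; it is legitimate precisely because deadlock-freedom and the absence of observable channel actions reduce the observable behaviour of both engines to an unbounded stream of $\tick$s.
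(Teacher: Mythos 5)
Your proof is correct and rests on the same key facts as the paper's: like $\mathit{Eng}_g$, the variant $\widetilde{\mathit{Eng}_g}$ never transmits on $\mathit{warning}$ and never violates the invariant (your local bound $11.5 - 5\cdot(0.8-\delta)=9.5$, hence a sensed value $\le 9.6$ at the end of every cooling cycle, is the right computation), so the observable behaviour of both engines collapses to an unbounded stream of $\tick$s, each realisable as a full-mass weak $\tick$-transition by combining patience, well-timedness and deadlock-freedom exactly as you do. The only difference is packaging: the paper factors the argument through the auxiliary system $\mathit{NIL}$ and the kernel of the bisimilarity metric (\autoref{prop:case-propbis}, \autoref{prop:kernel} and the triangle inequality), whereas you exhibit the coarse relation $(\mathcal{A}\times\mathcal{B})\cup(\mathcal{B}\times\mathcal{A})$ directly --- the two are interderivable, since relating every reachable state of one engine to every reachable state of the other is precisely what bisimilarity of each with $\mathit{NIL}$ plus transitivity yields.
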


The proof can be found in the Appendix, in \autoref{app:sec:case-studybis}. 

At this point, one may wonder whether it is possible to improve the performances 
of our engine even more. For instance,  by  reducing the power of the cooling system by a further $10\%$, by setting  $[v_1 , v_2] =[-0.7{-}\delta \, , \, -0.7{+}\delta]$ if $\actuatorfun'{}(\mathit{cool}) = \on$ (active cooling). We can formally prove that
this is not possible. 
\begin{proposition}
\label{prop:stop} Let $\widehat{\mathit{Eng}_g}$ be the  same as  $\mathit{Eng}_g$, except for the evolution map,  in which the real interval \( [v_1, v_2] \) 
is given by $[-0.7{-}\delta
\, , \, -0.7{+}\delta] $
if $\actuatorfun'{}(\mathit{cool}) = \on$.
Then, $\mathit{Eng}_g \not\approx 
\widehat{\mathit{Eng}_g}\, $, for any $g\in \mathbb{N}^{+}$.
\end{proposition}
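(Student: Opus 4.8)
The plan is to isolate an \emph{observable} behaviour that $\widehat{\mathit{Eng}_g}$ admits but $\mathit{Eng}_g$ does not, namely the emission of the warning $\out{\mathit{warning}}{\mathrm{ID}}$ on the free channel $\mathit{warning}$, and then to turn the assumption $\mathit{Eng}_g\approx\widehat{\mathit{Eng}_g}$ into a contradiction via Definition \ref{def:bisimulation}. The heart of the matter is quantitative: cutting the cooling power to $70\%$ no longer suffices to bring the temperature below the sensing threshold within the five mandatory cooling slots, so the warning branch of $\mathit{Cooling}$ becomes reachable, whereas with full (or $80\%$) power it never is.

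First I would observe that no configuration reachable from $\mathit{Eng}_g$ can ever perform $\out{\mathit{warning}}{\mathrm{ID}}$. Indeed, if some reachable $M$ admitted $M\transS[\out{\mathit{warning}}{\mathrm{ID}}]\gamma$, then appending this step to the computation reaching $M$ (whose labels all lie in $\{\tau,\tick\}$) would produce a computation of $\mathit{Eng}_g$ carrying a label outside $\{\tau,\tick\}$, contradicting Proposition \ref{prop:sys}.

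Next I would exhibit a concrete computation of $\widehat{\mathit{Eng}_g}$ reaching, with positive probability, a configuration that fires $\out{\mathit{warning}}{\mathrm{ID}}$. From $\mathit{temp}=0$ with the cooling off, one lets the temperature climb (each heating slot adds a grid value in $[1-\delta,1+\delta]_g=[0.6,1.4]_g$) while all sensor readings stay $\le 10$, until $\mathit{temp}=10.1$ is read as $10.0$ (error $-0.1$, positive probability); one more heating slot of $+1.4$ gives $\mathit{temp}=11.5$, whose reading exceeds $10$ and switches the cooling on. The value $11.5$ is the maximal switch-on temperature, determined by the heating law and the controller alone, hence the same as in the first item of Proposition \ref{prop:X} and unchanged in $\widehat{\mathit{Eng}_g}$. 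Now each of the five cooling slots of $\widehat{\mathit{Eng}_g}$ lowers $\mathit{temp}$ by a grid value in $[0.7-\delta,0.7+\delta]_g=[0.3,1.1]_g$; taking the minimal drop $0.3$ each time leaves $\mathit{temp}=11.5-5\cdot 0.3=10.0$, which is then read as $10.1$ (error $+0.1$, positive probability), so the guard $x>10$ holds and the residual process $\OUT{\mathit{warning}}{\mathrm{ID}}.Y$ performs $\out{\mathit{warning}}{\mathrm{ID}}$. By contrast, with $80\%$ power the minimal drop per slot is $0.4$, so after five slots $\mathit{temp}\le 11.5-2.0=9.5$ and every reading is $\le 9.6<10$: the threshold $9.9$ that makes $\out{\mathit{warning}}{\mathrm{ID}}$ reachable is crossed precisely between $80\%$ and $70\%$ power. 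All the numeric values involved are multiples of $10^{-g}$ for every $g\in\mathbb{N}^+$, so the trace exists at every granularity.

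Finally I would close the argument by induction along this trace. Using symmetry of $\approx$, at every step the weak bisimulation game (Definition \ref{def:bisimulation}) pairs the reached $\widehat{\mathit{Eng}_g}$-configuration, via the support of a matching $\omega$, with a configuration that $\mathit{Eng}_g$ reaches through a weak $\{\tau,\tick\}$-transition, and which is therefore reachable from $\mathit{Eng}_g$; propagating this bisimilar partner to the end of the trace gives a configuration $M$, reachable from $\mathit{Eng}_g$ and satisfying $M\approx\widehat{M}$, where $\widehat{M}\transS[\out{\mathit{warning}}{\mathrm{ID}}]\gamma$. Definition \ref{def:bisimulation} then forces $M\TransS[\out{\mathit{warning}}{\mathrm{ID}}]\gamma'$; since $\out{\mathit{warning}}{\mathrm{ID}}\ne\tau$, this weak transition contains a genuine $\out{\mathit{warning}}{\mathrm{ID}}$-step fired by some configuration reachable from $M$, hence from $\mathit{Eng}_g$, contradicting the second paragraph. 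I expect the main obstacle to be the quantitative reachability step, i.e.\ verifying that $70\%$ power really pushes the post-cooling temperature above $9.9$ at every granularity $g$; the distributional bookkeeping of the final bisimulation step is routine.
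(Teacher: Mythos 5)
Your proposal is correct and follows essentially the same route as the paper: both arguments hinge on the identical witness trace (climb to $\mathit{temp}=10.1$ read as $10.0$, one more $+1.4$ step to $11.5$, five minimal cooling drops of $0.3$ down to $10.0$ read as $10.1$, triggering the warning), combined with Proposition~\ref{prop:sys} to show $\mathit{Eng}_g$ never emits on $\mathit{warning}$. The only cosmetic differences are that the paper fixes $g=1$ and lifts the trace to coarser granularities while you observe all values lie on every grid, and that you spell out the final bisimulation-game contradiction which the paper leaves implicit.
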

The proof can be found in the Appendix, in \autoref{app:sec:case-study}.

Finally, we show  how we can use the 
compositionality of our 
behavioural semantics  (\autoref{thm:congruence}) to deal with bigger \CPS{s}. 
Suppose that $\mathit{Eng}_g$ denotes the  model in our calculus of an airplane
engine. In this case, we could model a very simple 
\emph{airplane control system} that checks whether
the left engine ($\mathit{Eng}_g^{\mathrm{L}}$) and the right engine
($\mathit{Eng}_g^{\mathrm{R}}$) are signalling warnings. 
The whole \CPS{} is defined as follows:
\begin{displaymath}
\mathit{Airplane}_g  \; = \;  \big( ( \mathit{Eng}_g^{\mathrm{L}} \uplus
\mathit{Eng}_g^{\mathrm{R}}  )  \parallel \mathit{Check} \big) \backslash\{warning \}  
\end{displaymath}%
where {\small $\mathit{Eng}_g^{\mathrm{L}} = \mathit{Eng}_g
\subst{\mathrm L}{\mathrm{ID}}
\subst{\mathit{temp{\_}l}}{\mathit{temp}}
\subst{\mathit{cool{\_}l}}{\mathit{cool}} 
\subst{s_{\mathrm{t}{\_}l}}{s_{\mathrm{t}}}$}, and 
{\small 
$\mathit{Eng}_g^{\mathrm{R}} = \mathit{Eng}_g
\subst{\mathrm R}{\mathrm{ID}}
\subst{\mathit{temp{\_}r}}{\mathit{temp}}
\subst{\mathit{cool{\_}r}}{\mathit{cool}} 
\subst{s_{\mathrm{t}{\_}r}}{s_{\mathrm{t}}}$}, 
and process $\mathit{Check}$ is defined as follows:  
\begin{center}
\begin{math}
\begin{array}{rcl}
\mathit{Check} & = &  \fix{X}
\timeout{\LIN{\mathit{warning}}{x }.
\ifelse {x = {\mathrm{L}}} {\mathit{Check}^{\mathrm L}_1}
{\mathit{Check}^{\mathrm R}_1}}{X}\\[4pt]
\mathit{Check}^{\mathit{id}}_i & = & 
\timeout{\LIN{\mathit{warning}}{y }.
\ifelse {y \neq {\mathit{id}}} {\OUTCCS{\mathit{alarm}}.\tick.X}
{\tick. \mathit{Check}^{\mathit{id}}_{i+1}}}
{\mathit{Check}^{\mathit{id}}_{i+1}}\\[4pt]
\mathit{Check}^{\mathit{id}}_5  & = & \lfloor {\LIN{\mathit{warning}}{z }.
\ifelse {z \neq {\mathit{id}}} {\OUTCCS{\mathit{alarm}}.\tick.X}
{ \OUT{\mathit{failure}}{\mathit{id}}.\tick. X}} \rfloor \\
&&
 {\OUT{\mathit{failure}}{\mathit{id}}.X}
\end{array}
\end{math}
\end{center}
for $1 \leq i \leq 5$. 
Intuitively, if one of the two engines is in a warning state then the 
process $\mathit{Check}^{\mathit{id}}_i$, for ${\mathit{id}} \in \{ \mathrm{L}, 
\mathrm{R}\}$,   checks whether also the second engine moves into a warning state,  in the following 
 $5$ time intervals (i.e.\ during the cooling cycle). If both engines get in a 
 warning state then  an $\mathit{alarm}$ is 
sent, otherwise, if only one  engine is facing a  warning then
 the airplane control system yields a \emph{failure} signalling which engine 
is not working properly.

So, since we know that $\mathit{Eng}_g \approx \widetilde{\mathit{Eng}_g}\,$,  for any $g\in \mathbb{N}^{+}$,
 the final question becomes the following: can we safely equip our airplane with the  more performant engines, $\widetilde{\mathit{Eng}_g^{\mathrm{L}} }$ and $\widetilde{\mathit{Eng}_g^{\mathrm{R}} }$, in which 
$[v_1, v_2] =[-0.8{-}\delta \, , \, -0.8{+}\delta]$,
if $\actuatorfun'{}(\mathit{cool}) = \on$,  without affecting the
whole observable 
behaviour of the airplane?
The answer is  ``yes'', and this result can be formally  proved 
by relying on \autoref{prop:performances} and \autoref{thm:congruence}. 
\begin{proposition}
\label{prop:air}
Let \begin{math}
\widetilde{\mathit{Airplane}_g}  \: = \:   \big( (\widetilde{\mathit{Eng}_g^{\mathrm{L}} } 
\uplus 
 \widetilde{\mathit{Eng}_g^{\mathrm R}}) \parallel \mathit{Check} \big)
\backslash \{warning\} 
\end{math}. Then,  
$\mathit{Airplane}_g \approx \widetilde{\mathit{Airplane}_g} \, $.
\end{proposition}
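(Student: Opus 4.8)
The plan is to establish this equality purely compositionally, by feeding \autoref{prop:performances} into the congruence results of \autoref{thm:congruence}, without ever inspecting the transition graphs of the two airplanes. The key observation is that $\mathit{Airplane}_g$ and $\widetilde{\mathit{Airplane}_g}$ are obtained from $\mathit{Eng}_g$ and $\widetilde{\mathit{Eng}_g}$ by exactly the same sequence of context operations: an injective relabelling of the physical names (together with a uniform substitution of the controller's identifier), a disjoint physical union of a left and a right engine, a parallel composition with the monitor $\mathit{Check}$, and a restriction of the channel $\mathit{warning}$. It therefore suffices to verify that each of these operations preserves $\approx$, and then chain the resulting equalities.

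First I would lift \autoref{prop:performances} to the relabelled engines. For $\mathrm{d}\in\{\mathrm{L},\mathrm{R}\}$, both $\mathit{Eng}_g^{\mathrm{d}}$ and $\widetilde{\mathit{Eng}_g^{\mathrm{d}}}$ are obtained from $\mathit{Eng}_g$ and $\widetilde{\mathit{Eng}_g}$ by one and the same injective renaming of the state variable, sensor and actuator names, together with the same value substitution on the identifier $\mathrm{ID}$. Since we identify \CPS{s} up to renaming of state variables, sensor names and actuator names, and the substitution is applied identically on both systems, any weak probabilistic bisimulation witnessing $\mathit{Eng}_g \approx \widetilde{\mathit{Eng}_g}$ can be relabelled into one witnessing $\mathit{Eng}_g^{\mathrm{d}} \approx \widetilde{\mathit{Eng}_g^{\mathrm{d}}}$; hence both $\mathit{Eng}_g^{\mathrm{L}} \approx \widetilde{\mathit{Eng}_g^{\mathrm{L}}}$ and $\mathit{Eng}_g^{\mathrm{R}} \approx \widetilde{\mathit{Eng}_g^{\mathrm{R}}}$ hold for every $g\in\mathbb{N}^{+}$.

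Next I would assemble the two engines and wrap the monitor around them. The left and right engines use disjoint state variables ($\mathit{temp{\_}l}$ versus $\mathit{temp{\_}r}$), disjoint sensors ($s_{\mathrm{t}{\_}l}$ versus $s_{\mathrm{t}{\_}r}$) and disjoint actuators ($\mathit{cool{\_}l}$ versus $\mathit{cool{\_}r}$), so any left engine is physically-disjoint from any right engine. Applying the first item of \autoref{thm:congruence} with $O = \mathit{Eng}_g^{\mathrm{R}}$ to $\mathit{Eng}_g^{\mathrm{L}} \approx \widetilde{\mathit{Eng}_g^{\mathrm{L}}}$, and then with $O = \widetilde{\mathit{Eng}_g^{\mathrm{L}}}$ to $\mathit{Eng}_g^{\mathrm{R}} \approx \widetilde{\mathit{Eng}_g^{\mathrm{R}}}$, and combining the two by commutativity of $\uplus$ and transitivity of $\approx$, I obtain $\mathit{Eng}_g^{\mathrm{L}}\uplus \mathit{Eng}_g^{\mathrm{R}} \approx \widetilde{\mathit{Eng}_g^{\mathrm{L}}}\uplus \widetilde{\mathit{Eng}_g^{\mathrm{R}}}$. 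Since $\mathit{Check}$ acts only on the channels $\mathit{warning}$, $\mathit{alarm}$ and $\mathit{failure}$ and never touches sensors or actuators, it is a pure-logical process, so the second item of \autoref{thm:congruence} lets me compose both sides with $\mathit{Check}$; finally the third item, instantiated at $c=\mathit{warning}$, lets me restrict that channel, yielding exactly $\mathit{Airplane}_g \approx \widetilde{\mathit{Airplane}_g}$.

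The argument is essentially a bookkeeping chain, and the only genuinely delicate point is the first step: transferring \autoref{prop:performances} across the relabelling and the identifier substitution. The renaming of physical names is covered by the paper's convention of working up to such renamings, but one should check that substituting the concrete value $\mathrm{L}$ (resp.\ $\mathrm{R}$) for $\mathrm{ID}$ --- which determines the payload transmitted on $\mathit{warning}$ --- does not break bisimilarity. This is safe here precisely because the substitution is identical on the two systems and $\mathrm{ID}$ occurs only as a transmitted value, so the matchings underlying $\mathit{Eng}_g \approx \widetilde{\mathit{Eng}_g}$ survive verbatim. I would also note that transitivity of $\approx$ and commutativity of $\uplus$ (the latter holding up to $\equiv$, which is absorbed by $\approx$) are used silently when chaining the two applications of the first congruence item.
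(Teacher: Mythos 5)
Your proposal is correct and follows essentially the same route as the paper's own proof: lift $\mathit{Eng}_g \approx \widetilde{\mathit{Eng}_g}$ (Proposition~\ref{prop:performances}) to the relabelled engines by $\alpha$-conversion, then chain the three items of Theorem~\ref{thm:congruence} (disjoint union with transitivity, parallel composition with the pure-logical $\mathit{Check}$, and restriction of $\mathit{warning}$). Your extra care about the $\mathrm{ID}$ substitution and the commutativity of $\uplus$ is a welcome elaboration of what the paper dismisses as ``simple $\alpha$-conversion,'' but it does not change the argument.
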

%

\begin{figure}[t]
\centering
\includegraphics[width=8cm,keepaspectratio=true,angle=0]{./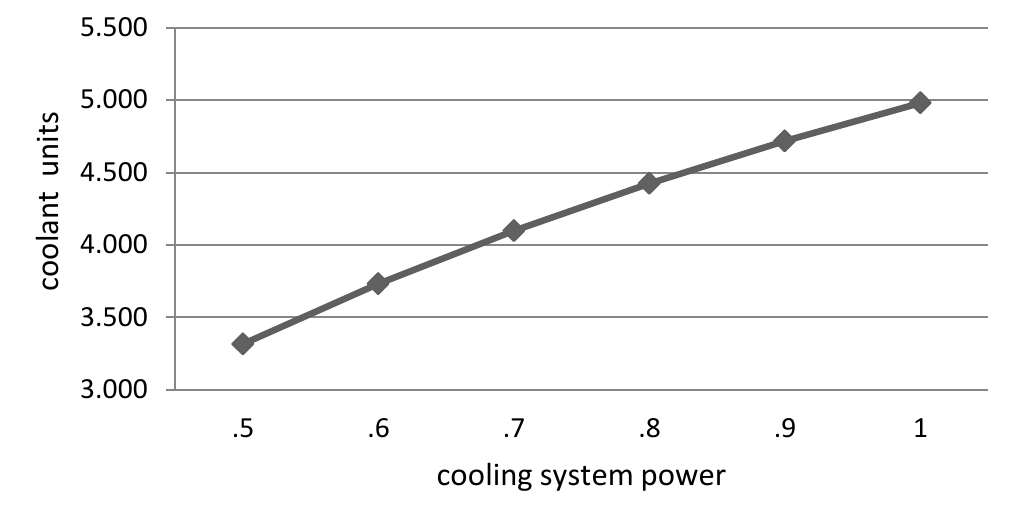}
\caption{Simulations in MATLAB of coolant consumption}
\label{fig:consumption}
\end{figure}

We end this section with an observation. Although, 
  the engine $\widehat{\mathit{Eng}_g}$  is
not behavioural equivalent to the original engine $\mathit{Eng}_g$, 
an airplane maker might be interested in knowing  an 
estimation of the deviation of its behaviour with respect 
the behaviour of the original engine. If this deviation would be very small then 
aeronautical engineers might consider to adopt in their airplanes 
the engine $\widehat{\mathit{Eng}_g}$ instead $\widetilde{\mathit{Eng_g}}$ to save even more coolant. So, the new question is: 
how big is the deviation, in terms of behaviour, of the 
engine $\widehat{\mathit{Eng}_g}$  with respect to
 the original engine $\mathit{Eng}_g$? 

The rest  of the paper is devoted to develop general quantitative 
techniques to estimate the deviation of the probabilistic behaviour of a \CPS{} with respect to another. 
%
%



\section{Bisimulation metrics}
\label{sec:metric}

In this section, we provide a weak behavioural distance to compare the probabilistic behaviour of \CPS{s} up to a given approximation. To this end, we
 adapt the notion of \emph{weak bisimilarity metric}~\cite{DJGP02} to \cname{}. 
Intuitively, we will write $M \approx_{p} N$ if the weak bisimilarity between $M$ and $N$ holds with a \emph{distance} $p$,  with $p \in [0,1]$.
Thus, $\approx_{0}$ will coincide with the weak 
probabilistic bisimilarity of  \autoref{def:bisimulation}, 
whereas $\bigcup_{p \in [0,1]} \approx_{p}$ will correspond to the cartesian product  $\cname{} \times \cname{}$.

Weak bisimilarity metric is  defined as a \emph{pseudometric} measuring the tolerance of the probabilistic weak bisimilarity. 
\begin{definition}[Pseudometric]
\label{def:metric}
A function $d \colon \nome \times \nome \to [0,1]$ is said to 
be a \emph{1-bounded pseudometric} if
\begin{itemize} 
	\item $d(M,M)= 0$, for all $M \in \nome$, 
         \item $d(M,M')= d(M',M)$, for all $M,M' \in \nome$, 
	\item $d(M,M') \le d(M,M'') + d(M'',M')$, for all $M,M',M''\in \nome$.
\end{itemize}
\end{definition}
 
Weak bisimilarity metric provides the quantitative analogous of the weak bisimulation game: two \CPS{s} $M$ and $N$ at  distance $p$ can mimic each other transitions and evolve to distributions $\gamma$ and $\gamma'$, respectively, 
 placed at some distance $q$, with $q \leq p$. 
This requires to lift pseudometrics from \CPS{s} to distributions over \CPS{s}.
To this end, as in \cite{LMT17}, we rely on the notions of 
\emph{matching}~\cite{Vil08} and \emph{Kantorovich lifting}~\cite{K42}.\footnote{The original formulation of weak bisimulation metric~\cite{DJGP02} is technically different but 
equivalent to our definition~\cite{DENG200973}.}

In  \autoref{def_matching}, we  already provided the definition of matching. 
Let us define the Kantorovich lifting.
\begin{definition}[Kantorovich lifting] \label{def:KantorovichLifting}
\label{def:Kantorovich}
Let $d\colon \nome \times \nome  \to [0,1]$ be a pseudometric. The \emph{Kantorovich lifting} of $d$ is the function
$\Kantorovich(d)\colon \distr{\nome} \times \distr{\nome} \to [0,1]$ defined 
as: 
\begin{displaymath}
\Kantorovich(d)(\gamma,\gamma') \, =  \, \min_{\omega \in \Omega(\gamma,\gamma')} \sum_{M,M' \in \nome}\omega(M,M') \cdot d(M,M') 
\end{displaymath}
for all $\gamma, \gamma' \in \distr{\nome}$.
\end{definition}

Note that since we are considering only distributions with finite support, the minimum over the set of matchings $\Omega(\gamma,\gamma')$ is well defined. 

\begin{definition}[Weak bisimulation metric]
\label{def:simulation_metric}
We say that a pseudometric $d \colon \nome \times \nome \to [0,1]$ is a \emph{weak bisimulation metric} if for all $M,N \in \nome$, with $d(M,N)<1$, whenever $M \transS[\alpha] \gamma$ there is a sub-distribution $\gamma'$ such that $N \TransS[\hat{\alpha}] \gamma'$  and $\Kantorovich(d)(\gamma \, , \,   \gamma' + (1{-}\size{\gamma'}) \dirac{\dummyN}) \le d(M,N)$. 
\end{definition}
Note that in the previous definition, if $\size{\gamma'} < 1$ then, with probability $1-\size{\gamma'}$, there is no way to simulate the behaviour of any \CPS{} with a valid invariant in the support of $\gamma$ (the special \CPS{} $\dummyN$ does not perform any action).

A crucial result is the existence of the minimal weak bisimulation metric~\cite{DJGP02}, called \emph{weak bisimilarity metric}, and denoted with $\metric$.
We remark that in \cite{DJGP02} it is shown that the kernel of $\metric$ 
coincides with the definition of  weak probabilistic bisimilarity.
\begin{proposition}
\label{prop:kernel}
For all  $M, N \in \cname{}$ we have $\metric(M,N) = 0$ if and only if $M \approx N$.
\end{proposition}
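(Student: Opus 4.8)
The plan is to prove both directions of the biconditional separately, using the defining property of the weak bisimilarity metric $\metric$ as the \emph{minimal} weak bisimulation metric and relating its ``zero-distance'' kernel to the bisimulation game of \autoref{def:bisimulation}.

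\textbf{Direction $\metric(M,N)=0 \Rightarrow M \approx N$.} First I would define the relation $\rel = \{ (M,N) : \metric(M,N)=0 \}$ and show it is a weak probabilistic bisimulation. Symmetry is immediate from the symmetry axiom of pseudometrics. For the transfer property, suppose $M \rel N$ and $M \transS[\alpha] \gamma$. Since $\metric(M,N)=0<1$, the weak-bisimulation-metric condition of \autoref{def:simulation_metric} gives a sub-distribution $\gamma'$ with $N \TransS[\hat{\alpha}] \gamma'$ and $\Kantorovich(\metric)(\gamma, \gamma' + (1{-}\size{\gamma'})\dirac{\dummyN}) \le \metric(M,N) = 0$. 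The key observation is that $\metric(M',\dummyN)=1$ for every well-formed $M'$ (since $\dummyN$ performs no action, no $M'$ with a valid invariant can simulate it, forcing distance $1$); hence a matching achieving Kantorovich distance $0$ must assign zero mass to any pair $(M',\dummyN)$, which forces $\size{\gamma'}=1$, i.e.\ $\gamma'$ is a genuine distribution, and every pair $(M',N')$ in the support of the optimal matching $\omega$ satisfies $\metric(M',N')=0$, i.e.\ $M' \rel N'$. This matching $\omega \in \Omega(\gamma,\gamma')$ is exactly the witness required by \autoref{def:bisimulation}, so $\rel$ is a weak probabilistic bisimulation and $M \approx N$.

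\textbf{Direction $M \approx N \Rightarrow \metric(M,N)=0$.} For this I would exhibit a specific weak bisimulation metric whose value on bisimilar pairs is $0$, and then invoke minimality of $\metric$. Concretely, define $d(M,N) = 0$ if $M \approx N$ and $d(M,N)=1$ otherwise. I would verify that $d$ is a $1$-bounded pseudometric (the triangle inequality holds because $\approx$ is transitive: if $d(M,M'')+d(M'',M')<1$ both summands are $0$, so $M\approx M''\approx M'$ and $M\approx M'$). Then I would check that $d$ satisfies \autoref{def:simulation_metric}: if $d(M,N)<1$ then $M\approx N$, so any $M\transS[\alpha]\gamma$ is matched by some $N\TransS[\hat{\alpha}]\gamma'$ with a matching $\omega$ relating only bisimilar pairs; since $\gamma'$ is a full distribution, $\size{\gamma'}=1$ so the added $\dummyN$-mass vanishes, and $\omega$ witnesses $\Kantorovich(d)(\gamma,\gamma')=0 = d(M,N)$. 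Thus $d$ is a weak bisimulation metric, and by minimality $\metric \le d$, giving $\metric(M,N)\le d(M,N)=0$ whenever $M\approx N$.

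\textbf{Main obstacle.} The delicate point, and the one I would treat most carefully, is the bookkeeping around sub-distributions and the $\dummyN$ padding in \autoref{def:simulation_metric}. In the forward direction I must rule out that mass ``leaks'' to $\dummyN$ (otherwise the matching in the bisimulation game would be incomplete); this rests on the fact $\metric(M',\dummyN)=1$ for well-formed $M'$, which I would justify from the observation following \autoref{def:simulation_metric}. In the converse direction I must ensure the bisimulation-game matching, which lives in $\Omega(\gamma,\gamma')$ with $\gamma,\gamma'$ full distributions, lifts correctly to a Kantorovich-optimal matching for $d$ once the $(1{-}\size{\gamma'})\dirac{\dummyN}$ term collapses to zero. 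Since this proposition is essentially the kernel property already established in~\cite{DJGP02}, I expect the argument to be a clean instantiation of that result to our pLTS rather than a genuinely new difficulty; indeed the excerpt attributes the result to~\cite{DJGP02}, so the proof may simply cite it after confirming that our metric fits the same framework.
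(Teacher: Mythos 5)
The paper never actually proves this proposition: the sentence immediately before it (``We remark that in \cite{DJGP02} it is shown that the kernel of $\metric$ coincides with the definition of weak probabilistic bisimilarity'') is the entirety of the justification, and no argument appears in the appendix. Your proposal therefore does strictly more than the paper: it reconstructs the standard kernel argument and, importantly, adapts it to the two conventions that make this calculus slightly nonstandard, namely that the metric game quantifies over \emph{sub}-distributions padded with $(1{-}\size{\gamma'})\dirac{\dummyN}$ while the bisimulation game of \autoref{def:bisimulation} demands a full distribution and a matching in $\Omega(\gamma,\gamma')$. Both directions are sound: the converse direction (discrete metric $d$ on $\approx$-classes, then minimality of $\metric$) is exactly the textbook route, and the forward direction (the kernel relation $\rel=\{(M,N):\metric(M,N)=0\}$ is a weak probabilistic bisimulation) correctly isolates the only delicate step. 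One refinement is needed there: your claim that $\metric(M',\dummyN)=1$ for every well-formed $M'$ holds because patience, maximal progress and rule \rulename{Deadlock} guarantee that every \CPS{} of the form $\confCPS{E;S}{P}$ has at least one (weak) transition, but it fails for $M'=\dummyN$ itself, and $\dummyN$ \emph{can} occur in $\support(\gamma)$ (again via \rulename{Deadlock}). In that case an optimal matching may put mass on $(\dummyN,\dummyN)$ at cost $0$, so the inference ``$\size{\gamma'}=1$'' does not follow verbatim; you should either treat $\dummyN$ as a legitimate element of the relation (noting $\dummyN\rel\dummyN$ and that the padded mass then correctly matches the deadlocked part of $\gamma$) or argue separately that the residual mass can be completed to a full distribution reachable by $N\TransS[\hat{\alpha}]$. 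This is a corner case the paper silently inherits from~\cite{DJGP02}; with that one sentence added, your self-contained proof is preferable to the bare citation, since it verifies that the cited result really applies to a pLTS with the $\dummyN$-padding convention.
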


Now, we have all ingredients to define our notion of behavioural  distance between \CPS{s}. 
\begin{definition}[Distance between \CPS{s}]
\label{def:distance}
Let $M,N\in \cname{}$ and $p \in [0,1]$. 
We say that \emph{$M$ and $N$ have distance $p$}, written $M \approx_p N$, if 
and only if $\metric(M,N) = p$.
\end{definition}

In the next section, we will use a more refined notion of distance that considers only the first $n \in \mathbb{N}$ computation steps,  when comparing 
two \CPS{s}.

Such definition requires the  introduction of a complete lattice  ${([0,1]^{\nome \times \nome},\sqsubseteq)}$  of functions of type $ \nome \times \nome \to [0,1]$
ordered by $d_1 \sqsubseteq d_2$ iff $d_1(M, N) \le d_2(M,N)$ for all $M,N \in \nome$, 
where for each set $D \subseteq [0,1]^{ \nome\times  \nome}$ the supremum and infimum are defined as
$\sup(D)(M,N) = \sup_{d \in D}d(M,N)$ and 
$\inf(D)(M,N) = \inf_{d \in D}d(M,N)$, 
for all $M,N \in \nome$.
Notice that the infimum of the lattice is the constant function zero, which we denote by $\zeroF$.

We also need a functional $\Bisimulation$ defined over the lattice mentioned above such that $\Bisimulation(d)(M,N)$ returns the minimum possible value for
 $d(M,N)$ in order to ensure that $d$ is a weak bisimulation metric.
\begin{definition}[Bisimulation metric functional] 
\label{def:metric_sim_functional}
Let $\Bisimulation \colon [0,1]^{\nome\times \nome } \to [0,1]^{ \nome  \times \nome}$ be the functional  such that for any $d \in [0,1]^{\nome\times \nome }$ 
 and $M,N \in \nome$, $\Bisimulation(d)(M,N)$   is 
given by: 
\begin{displaymath}
{\scriptsize
\sup_{\{\alpha \, \colon \,   M \transS[\alpha] \: \vee \: N \transS[\alpha]\}} 
\max \left
\{ \max_{M \transS[\alpha]\gamma_1} \min_{N \TransS[\hat{\alpha}] \gamma_2} \Kantorovich(d)(\gamma_1,\gamma_2 + (1-\size{\gamma_2})\dirac{\dummyN}), 
\max_{N \transS[\alpha] \gamma_2}\min_{M \TransS[\hat{\alpha}]\gamma_1} \Kantorovich(d)(\gamma_1+ (1-\size{\gamma_1})\dirac{\dummyN},\gamma_2)  \right
\}
}
\end{displaymath}
 where $\max \emptyset = 0$ and $\min \emptyset = 1$.
\end{definition}

Notice that \autoref{def:metric_sim_functional} and \autoref{def:simulation_metric} are strictly related as  weak bisimulation metrics are pseudometrics that are prefixed points of $\Bisimulation$. 
Notice also that all $\max$ and $\min$ in \autoref{def:metric_sim_functional} are well defined since our pLTS is image finite and \CPS{s} enjoy the well timedness property.

Since $\Kantorovich$ is monotone \cite{Pan09} it follows that $\Bisimulation$ is a monotone function on $([0,1]^{\nome  \times \nome },\sqsubseteq)$.
Furthermore, since this structure is a lattice, by Knaster-Tarski theorem it follows that $\Bisimulation$ has a least prefixed point (which is also the least fixed point).
Later we will show that this least prefixed point coincides with $\metric$.

Now,  we exploit the functional $\Bisimulation$ to introduce a notion of 
\emph{$n$-weak bisimilarity metric}, denoted $\metric^n$, which intuitively quantifies the tolerance of the weak bisimulation in $n$ steps.
The idea is that  $\metric^0$ coincides with the constant function $\zeroF$ assigning distance $0$ to all pairs of \CPS{s}, whereas 
 $\metric^n(M,N)$, for $n>0$, is defined as  $\metric^n(M,N) = \Bisimulation(\metric^{n-1})(M,N)$.
Thus, the $n$-weak bisimilarity metric between $M$ and $N$ is defined in terms of the $(n{-}1)$-weak bisimilarity metric between the distributions reached (in one step) by $M$ and $N$, respectively.

\begin{definition}[$n$-weak bisimilarity metric]
Let $n \in \mathbb{N}$. 
The function $\Bisimulation^n(\zeroF)$, abbreviated as $\metric^n$, is called  \emph{$n$-weak bisimilarity metric}.
\end{definition}

\begin{proposition}
\label{prop:up-to-k-metric}
For all $n \ge 0$, $\metric^n$ is a 1-bounded pseudometric.
\end{proposition}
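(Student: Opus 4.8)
The plan is to argue by induction on $n$, exploiting the recursive definition $\metric^{n}=\Bisimulation(\metric^{n-1})$ together with $\metric^{0}=\zeroF$. The base case is immediate: the constant function $\zeroF$ takes only the value $0$, hence is $1$-bounded and satisfies reflexivity, symmetry and the triangle inequality of \autoref{def:metric} trivially. For the inductive step it suffices to prove the auxiliary claim that $\Bisimulation$ maps $1$-bounded pseudometrics to $1$-bounded pseudometrics; instantiating it with the inductive hypothesis that $d:=\metric^{n-1}$ is a $1$-bounded pseudometric then yields the statement for $\metric^{n}=\Bisimulation(d)$.

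Three of the four requirements follow quite directly from \autoref{def:metric_sim_functional}. For $1$-boundedness, I would note that $\Kantorovich(d)$ ranges in $[0,1]$ whenever $d$ does, and that the conventions $\max\emptyset=0$ and $\min\emptyset=1$ keep every $\sup$, $\max$ and $\min$ inside $[0,1]$. For reflexivity, given a challenge $M\transS[\alpha]\gamma_{1}$ I would answer with the very same transition read as a weak one, $M\TransS[\hat\alpha]\gamma_{1}$; since strong transitions land on full distributions ($\size{\gamma_{1}}=1$) the $\dummyN$-correction vanishes, and $\Kantorovich(d)(\gamma_{1},\gamma_{1})=0$ via the identity matching, using $d(M',M')=0$. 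Hence both inner maxima, and so $\Bisimulation(d)(M,M)$, equal $0$. Symmetry is structural: the label set $\{\alpha\colon M\transS[\alpha]\vee N\transS[\alpha]\}$ is symmetric in $M,N$, the two arguments of the outer $\max$ are exchanged when $M$ and $N$ are swapped, and $\Kantorovich(d)$ is symmetric as soon as $d$ is.

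The remaining and genuinely delicate requirement is the triangle inequality $\Bisimulation(d)(M,N)\le\Bisimulation(d)(M,O)+\Bisimulation(d)(O,N)$. Fixing a label $\alpha$ and, by symmetry, a strong challenge $M\transS[\alpha]\gamma_{1}$, I would set $c_{1}=\Bisimulation(d)(M,O)$ and $c_{2}=\Bisimulation(d)(O,N)$; image finiteness and well-timedness (which make the minima attained, as observed after \autoref{def:metric_sim_functional}) then provide a weak response $O\TransS[\hat\alpha]\gamma_{2}$ with $\Kantorovich(d)(\gamma_{1},\gamma_{2}+(1{-}\size{\gamma_{2}})\dirac{\dummyN})\le c_{1}$. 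The goal is to produce a weak response $N\TransS[\hat\alpha]\gamma_{3}$ and to conclude through the triangle inequality of the Kantorovich lifting, which holds because $\Kantorovich(d)$ is itself a $1$-bounded pseudometric whenever $d$ is; the $\dummyN$-corrections turn every sub-distribution into a genuine distribution, so the lifting is applied between distributions and the inequality transfers verbatim.

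The main obstacle is precisely the construction of $N$'s response $\gamma_{3}$: the transition of $O$ obtained above is \emph{weak}, whereas $c_{2}=\Bisimulation(d)(O,N)$ only directly governs how $N$ answers the single \emph{strong} steps of $O$. Matching a multi-step weak transition of $O$ against $N$ one step at a time does not work for an arbitrary pseudometric, since after a single step the matched pairs are only known to be $d$-close, not $\Bisimulation(d)$-close, so the control needed to continue the simulation is lost. I would therefore isolate a dedicated \emph{weak-transition transfer lemma}, stating that from $\Bisimulation(d)(O,N)\le c_{2}$ and any $O\TransS[\hat\alpha]\gamma_{2}$ one can extract $N\TransS[\hat\alpha]\gamma_{3}$ with $\Kantorovich(d)(\gamma_{2}+(1{-}\size{\gamma_{2}})\dirac{\dummyN},\gamma_{3}+(1{-}\size{\gamma_{3}})\dirac{\dummyN})\le c_{2}$, to be proved by threading the Kantorovich triangle inequality and the lifting of $\transS[\hat\tau]$ from \CPS{s} to sub-distributions through the successive rounds of the weak transition. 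Once this lemma is available, combining its output with the bound for $c_{1}$ and one further application of the Kantorovich triangle inequality closes the triangle inequality and completes the inductive step; I expect reconciling the strong challenges that $\Bisimulation$ controls with the weak transitions that must be matched to be by far the hardest and most delicate point of the whole argument.
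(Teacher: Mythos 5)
Your proposal follows essentially the same route as the paper: induction on $n$ with the base case $\metric^0=\zeroF$, reflexivity via answering a strong challenge with the same transition and the identity matching, symmetry from the shape of $\Bisimulation$, and the triangle inequality via the fact that $\Kantorovich$ preserves pseudometrics (the paper's \autoref{prop_kant_metric}) combined with a weak-transition transfer lemma (the paper's \autoref{lemma_sim_weak_transitions}, proved exactly by the induction on the length of the weak transition and the matching-splitting argument you sketch). You have correctly identified the transfer lemma as the genuinely delicate ingredient, so the plan is sound and matches the paper's proof.
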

The proof of this proposition can be found in Appendix, in \autoref{Sec:A.5}.

Finally, we are ready to define our notion of $n$-distance between two \CPS{s}. 
\begin{definition}[$n$-distance between \CPS{s}]
\label{def:distance-n}
Let $M, N \in \cname{}$ and  $p \in [0,1]$. We say that \emph{$M$ and $N$ have $n$-distance $p$}, written $M \approx^n_p N$, if and only if $\metric^n(M,N) = p$.
\end{definition}

Since our pLTS is 
image-finite, 
and all transitions lead to distributions with finite support, it is possible to prove that $\Bisimulation$ is continuous~\cite{vB12}.
Since $\Bisimulation$ is also monotone, we can deduce that the closure ordinal of $\Bisimulation$ is $\omega$ (see Section 3 of \cite{vB12}). 
As a consequence, the $n$-weak bisimilarity metrics converge to the weak bisimilarity metric when $n$ grows indefinitely. 
Formally, 
\begin{proposition}
\label{prop:metric_as_a_limit}
$\metric = \lim_{n \to \infty}\metric^n$.
\end{proposition}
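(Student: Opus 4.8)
The plan is to establish $\metric = \lim_{n \to \infty}\metric^n$ by exploiting the fact, already assembled in the preceding discussion, that $\Bisimulation$ is a monotone and continuous functional on the complete lattice $([0,1]^{\nome \times \nome},\sqsubseteq)$ whose closure ordinal is $\omega$. First I would recall that, by \autoref{prop:kernel} together with the Knaster--Tarski theorem, $\metric$ is the least (pre)fixed point of $\Bisimulation$; indeed the discussion preceding the statement already identifies the least prefixed point of $\Bisimulation$ with $\metric$. So the goal reduces to showing that this least fixed point is reached as the supremum of the iterates $\Bisimulation^n(\zeroF)$, which are precisely the $\metric^n$ by definition of the $n$-weak bisimilarity metric.

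Next I would invoke the Kleene fixed-point theorem in its lattice-theoretic form: for a continuous (Scott-continuous) monotone functional $\Bisimulation$ on a complete lattice, the least fixed point equals $\sup_{n \in \mathbb{N}} \Bisimulation^n(\bot)$, where $\bot$ is the bottom element. Here the bottom element of the lattice is the constant-zero function $\zeroF$, as noted in the excerpt, so $\Bisimulation^n(\bot) = \Bisimulation^n(\zeroF) = \metric^n$. Because $\Bisimulation$ is monotone and $\zeroF \sqsubseteq \Bisimulation(\zeroF)$ (the bottom element is below everything), the chain $\zeroF = \metric^0 \sqsubseteq \metric^1 \sqsubseteq \metric^2 \sqsubseteq \cdots$ is ascending, and its supremum in the lattice is exactly the pointwise supremum $\sup_n \metric^n$, which is what the notation $\lim_{n \to \infty}\metric^n$ denotes (a pointwise, monotone limit). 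Continuity of $\Bisimulation$ then guarantees $\Bisimulation(\sup_n \metric^n) = \sup_n \Bisimulation(\metric^n) = \sup_n \metric^{n+1} = \sup_n \metric^n$, so $\sup_n \metric^n$ is a fixed point; minimality of $\metric$ gives $\metric \sqsubseteq \sup_n \metric^n$, while each $\metric^n \sqsubseteq \metric$ by monotone iteration towards the least fixed point gives the reverse inequality. Hence the two coincide.

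The only genuinely delicate point, and the one I expect to be the main obstacle, is justifying continuity of $\Bisimulation$ rather than merely its monotonicity. Monotonicity follows directly from monotonicity of the Kantorovich lifting $\Kantorovich$ (cited from \cite{Pan09}) and the fact that the nested $\max$/$\min$ over transitions preserve order. Continuity, however, requires that $\Bisimulation$ commute with suprema of ascending chains, and this is exactly where image-finiteness of the pLTS and the well-timedness property (\autoref{prop:time}(d)) are essential: because each \CPS{} has only finitely many outgoing transitions per label, all transitions lead to finitely-supported distributions, and divergence-free bounded computations guarantee the relevant $\max$ and $\min$ range over finite sets, the interchange of $\sup_n$ with these finite optimisations and with $\Kantorovich$ is legitimate. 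This is precisely the argument of \cite{vB12}, which the excerpt already cites to conclude that the closure ordinal of $\Bisimulation$ is $\omega$. I would therefore structure the proof as a short appeal to \cite{vB12} for continuity and the closure-ordinal-$\omega$ property, and then apply Kleene's theorem to conclude $\metric = \sup_n \metric^n = \lim_{n \to \infty}\metric^n$.
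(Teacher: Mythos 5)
Your proposal is correct and follows essentially the same route as the paper: the paper likewise justifies this proposition by observing that image-finiteness and finite-support distributions make $\Bisimulation$ continuous (citing \cite{vB12}), so its closure ordinal is $\omega$ and the least fixed point $\metric$ is reached as the supremum of the Kleene iterates $\Bisimulation^n(\zeroF) = \metric^n$. Your additional spelling-out of the ascending chain and the two inequalities is consistent with, and slightly more explicit than, the paper's argument.
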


Last but but not least,  the distances introduced in 
 \autoref{def:distance} and 
\autoref{def:distance-n} 
 allow us to compare \CPS{s} in a compositional manner. 
In particular, these distances are preserved by parallel composition of 
physical-disjoint \CPS{s}, 
by parallel composition of pure-logical processes, and by channel restriction. 
\begin{theorem}[Compositionality of distances]
	Let $M$ and $N$ be two arbitrary \CPS{s} in \cname{}. 
	\label{thm:congruenceP}
	\begin{enumerate}
		\item
		\label{thm:congruence1P}
		$M \approx_p N$ implies $M \uplus O \approx_q N \uplus O$, with $q \le p$, for any 
		physically-disjoint \CPS{} $O$; 
		\item
		\label{thm:congruence2P}
		$M \approx_p N$ implies $M \parallel P \approx_q N \parallel P$, with $q \le p$, for any  
		pure-logical process $P$; 
		\item 
		\label{thm:congruence3P}
		$M \approx_p N$ implies $M \backslash c \; \approx_q \; M \backslash c$, with $q \le p$, for any channel $c$; 
		\item
		\label{thm:congruence1Pk}
		$M \approx^n_p N$ implies $M \uplus O \approx^n_q N \uplus O$, with $q \le p$, for any
		physically-disjoint \CPS{} $O$ and any $n \ge 0$; 
		\item
		\label{thm:congruence2Pk}
		$M \approx^n_p N$ implies $M \parallel P \approx^n_q N \parallel P$, with $q \le p$, for any  
		pure-logical process $P$ and any  $n \ge 0$; 
		\item 
		\label{thm:congruence3Pk}
		$M \approx^n_p N$ implies $M \backslash c \; \approx^n_q \; M \backslash c$, with $q \le p$, for any channel $c$ and $n \ge 0$.
	\end{enumerate} 
\end{theorem}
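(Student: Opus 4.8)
The plan is to reduce each of the six statements to the \emph{non-expansiveness} inequality $\metric(C[M],C[N])\le\metric(M,N)$ (resp.\ $\metric^n(C[M],C[N])\le\metric^n(M,N)$) for the three contexts $C[\cdot]\in\{\,\cdot\uplus O,\ \cdot\parallel P,\ \cdot\backslash c\,\}$. Given such a bound, the implications of \autoref{def:distance} and \autoref{def:distance-n} follow at once, so that the whole theorem becomes the quantitative analogue of the congruence results of \autoref{thm:congruence}. For items 1--3 I would exploit that $\metric$ is the least prefixed point of the functional $\Bisimulation$ in the lattice $[0,1]^{\nome\times\nome}$; for items 4--6 I would argue by induction on $n$, using $\metric^0=\zeroF$ and $\metric^n=\Bisimulation(\metric^{n-1})$.

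For a fixed context type I would close $\metric$ under that family of contexts by setting
\[
d_C(M',N')=\min\Big(\metric(M',N'),\ \inf\{\metric(M,N)\colon M'=C[M],\ N'=C[N]\text{ for some }M,N\}\Big),
\]
with $\inf\emptyset=1$, where for the parallel case the inner infimum ranges over all pure-logical residuals $P$ so that the family is closed under transitions (for $\backslash c$ and $\uplus$ closure is automatic, since restriction and disjoint union are preserved by transitions). By construction $d_C\sqsubseteq\metric$ and $d_C(C[M],C[N])\le\metric(M,N)$. The heart of the proof is to check that $d_C$ is a \emph{prefixed point}, $\Bisimulation(d_C)\sqsubseteq d_C$; minimality of $\metric$ then gives $\metric\sqsubseteq d_C$, whence $d_C=\metric$ and the desired inequality is read off. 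Note that only the prefixed-point property is needed, not that $d_C$ itself be a pseudometric.

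To verify $\Bisimulation(d_C)(C[M],C[N])\le d_C(C[M],C[N])$ I would play the bisimulation game of \autoref{def:metric_sim_functional} on the contextual pair and split each transition $C[M]\transS[\alpha]\gamma_1$ by origin: context-internal moves (a move of $P$ in $M\parallel P$, or any move surviving $\backslash c$) are answered verbatim by $C[N]$ with the context component frozen; moves inherited from the argument, $M\transS[\alpha]\gamma_M$, are answered by invoking the $\metric(M,N)$-game to obtain $N\TransS[\hat\alpha]\gamma_N$ with $\Kantorovich(\metric)(\gamma_M,\gamma_N+(1{-}\size{\gamma_N})\dirac{\dummyN})\le\metric(M,N)$, after which the context is reattached. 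Reattachment is controlled by an auxiliary non-expansiveness lemma for the Kantorovich lifting: transporting an optimal matching $\omega$ for $(\gamma_M,\gamma_N)$ to the matching $\omega'(C[M'],C[N'])=\omega(M',N')$ on the reattached distributions, and using $d_C(C[M'],C[N'])\le\metric(M',N')$ pointwise, shows that the $\Kantorovich(d_C)$-distance between the reattached distributions is at most $\Kantorovich(\metric)(\gamma_M,\gamma_N)$.

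The cases I expect to be the main obstacle are synchronisation and time passage for the parallel contexts. For a synchronising $\tau$ in $M\parallel P$ I must combine a \emph{weak} answer $N\TransS[\hat\alpha]\gamma_N$ to $M$'s visible action with the complementary action of $P$, and argue that this weak transition lifts through $\cdot\parallel P$ while respecting the coupling; this forces careful bookkeeping of the sub-distribution structure of $\TransS[\hat\alpha]$ and of the $\dummyN$-padding when $\size{\gamma_N}<1$. For a $\tick$ in $M\uplus O$ the subtlety is that $\mathit{next}_{E_1\uplus E_2}(S_1\uplus S_2)$ factors as the product $\mathit{next}_{E_1}(S_1)\otimes\mathit{next}_{E_2}(S_2)$; I would use this product form to build a matching on the combined time-successor distributions in which the $O$-component is shipped to itself at distance $0$, so that the Kantorovich distance is governed solely by the $M$/$N$ factor. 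Finally, items 4--6 rerun exactly this transition analysis one level down: in the inductive step one chooses responses via the $\Bisimulation(\metric^{n-1})(M,N)$-game and bounds the $\metric^{n-1}$-Kantorovich distances on contextual successors through the induction hypothesis $\metric^{n-1}(C[M],C[N])\le\metric^{n-1}(M,N)$, the base case $\metric^0=\zeroF$ being trivial.
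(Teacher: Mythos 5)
Your overall strategy coincides with the paper's for the physically-disjoint and restriction cases: the paper likewise defines a candidate function $d$ with $d(M\uplus O,N\uplus O)=\metric(M,N)$ (resp.\ $d(M\backslash c,N\backslash c)=\metric(M,N)$), verifies that it is a weak bisimulation metric by case analysis on how $C[M]\transS[\alpha]\gamma$ is derived, and concludes by minimality of $\metric$; your $\inf$-over-decompositions variant is just a more careful way of making that $d$ well defined, and your observation that only the prefixed-point property (not pseudometricity) is needed matches the paper's use of $\metric$ as the least prefixed point of $\Bisimulation$. The $n$-indexed items are handled by the same induction on $n$ in both treatments. The genuine divergence is item~2 (and~5): you propose to run the full bisimulation game on $M\parallel P$ directly, and correctly identify synchronising $\tau$'s and time passage as the hard cases. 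The paper sidesteps all of that with a one-line reduction: letting $E_\emptyset,S_\emptyset$ be the empty physical environment and state, $M\parallel P$ is \emph{literally the same} \CPS{} as $M\uplus(\confCPS{E_\emptyset;S_\emptyset}{P})$, so the pure-logical case follows from the $\uplus$ case plus the triangle inequality of $\metric$ (resp.\ $\metric^n$). Your route would work but costs exactly the bookkeeping you flag as the main obstacle; the paper's embedding buys that case for free, at the price of having proved the $\uplus$ case for arbitrary (including degenerate) physical components.

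One detail your plan glosses over: for $\uplus$, a ``context-internal'' move of $O$ inside $M\uplus O$ is \emph{not} answered verbatim for free, because every rule of \autoref{tab:lts_systems} (other than \rulename{Deadlock}) requires the \emph{combined} state to satisfy the combined invariant, so matching $O$'s move in $N\uplus O$ needs $N$'s own state to satisfy its invariant. The paper isolates this as a separate lemma (if $\metric(M,N)<1$ then the invariants of $M$ and $N$ agree, proved by exhibiting a weak $\tick$-transition on one side and none on the other). You would need the same lemma; without it the ``frozen context'' step does not go through.
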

The proof of Theorem~\ref{thm:congruenceP} can be found in the Appendix, 
in Section~\ref{Sec:A.5}. 

Now, suppose that $M \approx_p N$, $M' \approx_{p'} N'$, with  $M$ (resp.\ $N$)
and $M'$ (resp.\ $N'$)  physically-disjoint.  
By Theorem~\ref{thm:congruenceP}.\ref{thm:congruence1P},  we can infer  both
$M  \uplus  M' \approx_{q} N \uplus M'$ and $N \uplus M' \approx_{q'} N \uplus N'$, with $q \le p$ and $q' \le p'$.
Then, by triangular property of the pseudometric $\metric$ we get $M  \uplus  M' \approx_{q''} N \uplus N'$,  for some $q'' \le  q + q' \le p+p'$.
Similarly, by applying
Theorem~\ref{thm:congruenceP}.\ref{thm:congruence1Pk} we can infer that
$M \approx^n_p N$ and $M' \approx^n_{p'} N'$ entail $M  \uplus  M' \approx^n_{q} N  \uplus  N'$,  for some $q \le p + p'$. This says that our metrics enjoy  a well-known  compositional property called \emph{non-expansiveness}~\cite{DGJP04,GLT16,GT18}. 

In the next section, the compositional properties of \autoref{thm:congruenceP} will be very useful when reasoning on our case study.


\section{Case study, reloaded}
\label{sec:casebis}

In \autoref{sec:case-study}, we proved that the original version
of the proposed engine,  $\mathit{Eng_g}$, and its  variant  $\widetilde{\mathit{Eng_g}}$ (saving up to $10\%$ of coolant) are behavioural equivalent (i.e., bisimilar). Then, by relying on the 
 compositionality of our probabilistic bisimilarity (\autoref{thm:congruence}), 
we proved that the two compound systems, $\mathit{Airplane}_g$ and  $\widetilde{\mathit{Airplane}_g}$, mounting  engines $\mathit{Eng_g}$ and $\widetilde{\mathit{Eng_g}}$, respectively,  are bisimilar as well. 

Actually, both results can be proved in terms of 
weak probabilistic metric with distance $0$, as this
specific metric coincides  with the probabilistic bisimilarity (\autoref{prop:kernel}). 
\begin{proposition}
\label{prop:case-propbis}
\label{prop:air2}
Let  $g \in \mathbb{N}^{+}$. Then, 
\begin{itemize}[noitemsep] 
\item  $\mathit{Eng}_g \: \approx_0  \: \widetilde{\mathit{Eng}_g}$ 
\item  $\mathit{Airplane}_g \: \approx_0  \: \widetilde{\mathit{Airplane}_g}
\, .$
\end{itemize}
\end{proposition}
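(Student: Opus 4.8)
The plan is to reduce both claims to the probabilistic bisimilarity results already established in \autoref{sec:case-study}, exploiting the fact that the kernel of the weak bisimilarity metric $\metric$ coincides with weak probabilistic bisimilarity. Concretely, by \autoref{def:distance} the assertion $M \approx_0 N$ is by definition the statement $\metric(M,N) = 0$, and by \autoref{prop:kernel} we have $\metric(M,N) = 0$ if and only if $M \approx N$. Hence, for any pair of \CPS{s}, $M \approx_0 N$ holds precisely when $M \approx N$ holds, and both items of the proposition become mere restatements, in the metric semantics, of bisimilarity facts we already possess.

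For the first item, I would instantiate this equivalence with $M = \mathit{Eng}_g$ and $N = \widetilde{\mathit{Eng}_g}$. Since \autoref{prop:performances} establishes $\mathit{Eng}_g \approx \widetilde{\mathit{Eng}_g}$ for every $g \in \mathbb{N}^{+}$, \autoref{prop:kernel} yields $\metric(\mathit{Eng}_g, \widetilde{\mathit{Eng}_g}) = 0$, which by \autoref{def:distance} is exactly $\mathit{Eng}_g \approx_0 \widetilde{\mathit{Eng}_g}$. For the second item, I would proceed identically with $M = \mathit{Airplane}_g$ and $N = \widetilde{\mathit{Airplane}_g}$: the bisimilarity $\mathit{Airplane}_g \approx \widetilde{\mathit{Airplane}_g}$ is precisely the content of \autoref{prop:air}, so a further application of \autoref{prop:kernel} followed by \autoref{def:distance} gives $\mathit{Airplane}_g \approx_0 \widetilde{\mathit{Airplane}_g}$.

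There is essentially no obstacle here: the entire weight of the argument rests on the previously proved bisimilarities (\autoref{prop:performances} and \autoref{prop:air}) and on the kernel characterisation (\autoref{prop:kernel}). The only point worth checking is that \autoref{prop:kernel} is applicable to the specific pairs in question, which it is, since that proposition is stated for all $M, N \in \cname{}$. In summary, the proposition is a direct corollary obtained by reading the already-established equivalences $\approx$ through the lens of the bisimulation metric, and its proof amounts to chaining \autoref{prop:performances} and \autoref{prop:air} with \autoref{prop:kernel} and \autoref{def:distance}.
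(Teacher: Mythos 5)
Your translation between the two semantics is correct: by \autoref{def:distance}, $M \approx_0 N$ means $\metric(M,N)=0$, and by \autoref{prop:kernel} this holds iff $M \approx N$, so the proposition is indeed logically equivalent to the conjunction of \autoref{prop:performances} and \autoref{prop:air}. The problem is the direction in which you use that equivalence: in this paper, \autoref{prop:performances} has no independent proof — its proof in the appendix reads ``Directly by \autoref{prop:case-propbis}(1) and \autoref{prop:kernel}'', i.e.\ it is derived \emph{from} the very statement you are trying to prove. Likewise \autoref{prop:air} is obtained from \autoref{prop:performances} together with \autoref{thm:congruence}. Chaining \autoref{prop:performances} and \autoref{prop:air} back into a proof of \autoref{prop:case-propbis} is therefore circular given the paper's actual logical structure.

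The paper instead proves \autoref{prop:case-propbis} directly, by an argument ``analogous to that of \autoref{prop:case-prop} and \autoref{prop:air3}(1)'': one shows $\metric^n(\mathit{Eng}_g,\widetilde{\mathit{Eng}_g})=0$ for all $n$ by the same induction-on-$n$ scheme used for \autoref{prop:case-prop} (comparing both engines to the idle system $\mathit{NIL}$ and tracking the reachable temperature/coolant configurations, where for $\widetilde{\mathit{Eng}_g}$ the analogue of the ``bad'' probability $q_g(p_g)^5$ is $0$ because no warning is ever reachable), concludes $\metric=\lim_n\metric^n=0$ via \autoref{prop:metric_as_a_limit}, and then obtains the airplane case from the engine case by the compositionality results of \autoref{thm:congruenceP} and the triangle inequality. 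To repair your proof you would either have to follow that route, or supply an independent proof of $\mathit{Eng}_g \approx \widetilde{\mathit{Eng}_g}$ (e.g.\ by exhibiting an explicit weak probabilistic bisimulation in the sense of \autoref{def:bisimulation}), which is precisely the work your proposal currently leaves undone.
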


Then, in  \autoref{sec:case-study} we moved our attention to 
a  more performant engine,  $\widehat{\mathit{Eng_g}}$,  saving 
almost  $20\%$ of coolant with respect to the original engine $\mathit{Eng_g}$.
In our behavioural analysis we rejected this new variant  as it 
may exhibit a different probabilistic behaviour 
when compared to  $\mathit{Eng_g}$. More precisely, 
the two systems   $\mathit{Eng_g}$ and $\widehat{\mathit{Eng_g}}$ \emph{are not} bisimilar (\autoref{prop:air}). 

However, in many complex probabilistic systems, such as \CPS{s}, probabilistic  bisimilarity might reveal to be too strong as the natural behavioural equivalence to take systems apart. Thus, in  \autoref{sec:case-study} we advocated for some appropriate notion of behavioural distance to estimate the effective difference, in terms of behaviour, of these two versions of the  engine. 

In the current section, we apply the bisimulation metrics defined in \autoref{sec:metric}
to estimate the distance between $\mathit{Eng}_g$ and $\widehat{\mathit{Eng}_g}$, by varying 
the granularity $g \in \mathbb{N}^{+}$. 
In particular, we apply the notion of $n$-weak bisimilarity metric.  

\begin{proposition}
\label{prop:case-prop}
Let $g \in \mathbb{N}^{+}$ and $n \in \mathbb{N}$. Then, 
for $p_g= \frac{\mid [0.3, 0.4)_g \mid}{\mid [0.3, 1.1]_g \mid }$ and
$q_g= \frac{\mid (1.3, 1.4] _g \mid}{\mid [0.6, 1.4]_g \mid }$, we have: 
\[\metric^n(\mathit{Eng_g}, \widehat{\mathit{Eng_g}}) \; \le \;   1- \left(1- q_g(p_g)^5\right)^n \enspace. \]  
\end{proposition}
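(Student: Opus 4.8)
The plan is to prove the bound by induction on $n$, exploiting the recursive characterisation $\metric^n = \Bisimulation(\metric^{n-1})$ with $\metric^0 = \zeroF$. Write $r = q_g (p_g)^5$ and $f(n) = 1 - (1-r)^n$, so that $f$ satisfies $f(0) = 0$ and $f(n) = r + (1-r) f(n-1)$, and is non-decreasing in $n$. The whole argument rests on one qualitative observation, already implicit in \autoref{prop:sys} and \autoref{prop:X}: along every computation the two engines run through identical control phases — an arbitrarily long warming phase (cooling off) followed by a cooling phase of exactly five $\tick$-steps ending in a sensor read — and in all of these phases the only transitions are $\tau$ and $\tick$. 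During warming the two systems have \emph{the same} evolution map, so their moves match exactly; during cooling their temperature distributions differ, but the difference in temperature is \emph{unobservable} and becomes behaviourally relevant only at the sensor read that closes the cooling cycle. By \autoref{prop:X}, $\mathit{Eng_g}$ reads a temperature in $(2.9,8.5]$ there and hence, even under the maximal measurement error, always turns the cooling off (\autoref{prop:sys}); $\widehat{\mathit{Eng_g}}$, whose cooling is weaker, may instead read a value above $10$ and emit the observable $\overline{\mathit{warning}}\langle\mathrm{ID}\rangle$, which $\mathit{Eng_g}$ can never match.

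First I would make the coupling explicit. For every pair of reachable configurations that are in the same control phase I define a canonical matching of their one-step distributions: in the warming phase and in the first four cooling ticks the matching pairs configurations with identical control components (the residual temperature mismatch being carried along), so the local Kantorovich contribution is $0$; at the closing sensor read the emitted distribution of $\widehat{\mathit{Eng_g}}$ splits into an \emph{off-branch} (continuing as a coupled pair) and a \emph{warn-branch} (about to perform $\overline{\mathit{warning}}$). I match the off-branch of $\widehat{\mathit{Eng_g}}$ to the off-branch of $\mathit{Eng_g}$, and pair the warn-branch mass with arbitrary off-branch configurations of $\mathit{Eng_g}$; since a warn-branch configuration performs an observable action that no configuration of $\mathit{Eng_g}$ can weakly match, the convention $\min \emptyset = 1$ of \autoref{def:metric_sim_functional} assigns distance $1$ to each such pair.

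The crux is to bound the weight of the warn-branch at a single closing sensor read by $r = q_g (p_g)^5$. A warning requires the \emph{actual} temperature after five cooling ticks to exceed the threshold that, combined with the maximal measurement error, pushes the sensed value strictly above $10$. I would track the temperature distribution of $\widehat{\mathit{Eng_g}}$ through the last warming tick and the five cooling ticks and show, using the quantitative bounds of \autoref{prop:X}, that this event can occur only along the single extremal path in which the last warming step is maximal — increasing the temperature by a value in $(1.3,1.4]$, whose probability is exactly $q_g$ — and each of the five cooling steps is minimal — decreasing the temperature by a value in $[0.3,0.4)$, whose probability is exactly $p_g$. As these six steps are independent, their joint probability, and hence the warn-branch weight, is at most $q_g (p_g)^5 = r$. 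The main obstacle is precisely this temperature bookkeeping: computing the convolution of the warming and cooling increments, isolating the extremal band that can breach the threshold under the sensor error, and checking that the chosen per-step matchings are genuine matchings (correct marginals) so that the Kantorovich inequality of \autoref{def:metric_sim_functional} applies verbatim.

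Finally I assemble the induction. For a coupled pair in a non-closing phase the canonical matching gives $\Kantorovich(\metric^{n-1})(\gamma_1,\gamma_2) \le \sup\{\metric^{n-1}(\text{matched pairs})\} \le f(n-1) \le f(n)$, whence $\metric^n \le f(n)$. For a pair at the closing sensor read, the off-branch (weight $\ge 1-r$) contributes at most $f(n-1)$ and the warn-branch (weight $r' \le r$) contributes $1$; since $(1-r')f(n-1) + r'$ is non-decreasing in $r'$ on $[0,1]$ (as $f(n-1) \le 1$), it is bounded by $(1-r)f(n-1) + r = f(n)$. Taking the maximum over both challenge directions of \autoref{def:metric_sim_functional} — in the reverse direction $\widehat{\mathit{Eng_g}}$ answers the $\tau$/$\tick$ moves of $\mathit{Eng_g}$ with its own, incurring the same off-branch/warn-branch split and hence the same bound — yields $\metric^n(\mathit{Eng_g}, \widehat{\mathit{Eng_g}}) = \Bisimulation(\metric^{n-1})(\mathit{Eng_g}, \widehat{\mathit{Eng_g}}) \le f(n) = 1 - (1-r)^n$, which is the claim. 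Since a closing sensor read occurs only once every six or more steps, the bound is deliberately loose, but monotonicity of $f$ makes it valid for every $n$.
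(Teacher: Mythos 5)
Your high-level reading is the right one --- the only observable discrepancy is $\widehat{\mathit{Eng_g}}$'s possible output on $\mathit{warning}$, its per-cycle probability is $q_g(p_g)^5$, and the bound comes from the recurrence $f(n)=r+(1-r)f(n-1)$ --- but the way you assemble the induction has a genuine gap. You treat the warn/off split as a probabilistic branching that occurs ``at a single closing sensor read'' with weight $r'\le r$. In this calculus it does not: the measurement distribution is resolved by $\operatorname{next}_E$ at each $\tick$ (rule \rulename{Time}), and rule \rulename{SensRead} merely reads the already-sampled sensor value, so whether a given configuration will warn is essentially determined by the six preceding $\tick$-steps, and the mass $r$ accumulates incrementally over the cycle rather than materialising at one transition. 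Consequently your uniform induction hypothesis ``$\metric^{n}\le f(n)$ for every coupled pair'' is false at intermediate configurations: take a reachable state of $\widehat{\mathit{Eng_g}}$ in which the last warming step has already pushed the temperature into $(11.4,11.5]$ and cooling has just been activated. From there the conditional probability of emitting a warning within the next seven or so steps is about $(p_g)^5$, whereas $f(7)\le 7\,q_g(p_g)^5<(p_g)^5$ because $q_g<1/8$; so $\metric^{7}$ at such a pair exceeds $f(7)$ and the one-step Kantorovich estimate for the preceding tick cannot be closed with $f(n-1)$. What is missing is a family of phase-indexed strengthened hypotheses that carry the ``risk already incurred'', of the shape $1-(1-(p_g)^{5-k})(1-q_g(p_g)^5)^n$ for states $k$ ticks into a dangerous cooling phase. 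This is precisely the content of the nine simultaneous properties in the paper's proof, and it is the part your proposal defers to ``temperature bookkeeping'' without noticing that it changes the statement being proved by induction.

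A second, smaller divergence concerns the coupling itself. The paper never couples the two engines directly: it introduces the inert system $\mathit{NIL}=\confCPS{E_\emptyset;S_\emptyset}{\nil}$, observes via \autoref{prop:sys} that $\metric^n(\mathit{Eng_g},\mathit{NIL})=0$ (the original engine performs nothing observable), and reduces the claim to bounding $\metric^n(\mathit{NIL},\widehat{\mathit{Eng_g}})$ by the triangle inequality. This is worth adopting: against a Dirac target the matching is unique, so the Kantorovich lifting collapses to a plain weighted average and one only ever analyses $\widehat{\mathit{Eng_g}}$ in isolation. Your direct coupling, by contrast, promises to keep the two engines ``in the same control phase'', but the proposed increment-by-increment matching does not preserve that invariant: the hatted engine's temperature drifts upward relative to the original's, so it crosses the threshold $10$ and activates cooling in earlier time slots, desynchronising the control components. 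This happens to be harmless only because every move of $\mathit{Eng_g}$ is unobservable anyway --- which is exactly the observation the $\mathit{NIL}$ detour makes explicit and exploits.
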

Note that if the cooling system of $\widehat{\mathit{Eng_g}}$ is off and it is not going to be activated in the current time slot, then the sensed 
temperature is below than or equal to $10$, and  the real temperature  is below than or equal to $10.1$ degrees (we recall that $\mathit{err}=0.1$). 
Assume that the temperature is exactly $10.1$.
If in the current time slot the temperature increases of a value  $ v\in (1.3, 1.4]$ then it will  reach a value in the interval  $(11.4, 11.5]$  (we recall that $\delta=0.4$).
This happens with a probability bounded by $q_g$.
In this case, the cooling system will be turned on, and 
the temperature will drop,  in each of the following $5$  time slots, of some value laying in the interval $ [0.7{-}\delta \, , \, 
0.7{+}\delta]=[0.3,1.1]$. 
However, if  in each of those $5$ slots of cooling the temperature 
is decreased of a value laying in 
$ [0.3, 0,4)$, then the cooling activity might not be enough to 
avoid (observable) warnings, and the two engines $\mathit{Eng_g}$ and $ \widehat{\mathit{Eng_g}}$ will be distinguished. 
Thus, $p_g$ is given by the number of possible ``bad decreases'', $|[0.3, 0.4)_g|$, divided by 
the number of all possible decreases, $\mid [0.3, 1.1]_g \mid$; whereas
$q_g$ is given by the number of possible ``bad increases'', 
$\mid (1.3, 1.4] _g \mid$, divided by 
the number of all possible increases $\mid [0.6, 1.4]_g \mid$.

Notice that $p_g$ and $q_g$ refer to real intervals which are basically 
shifted. Thus,  we have that 
\linebreak
 $ \mid [0.3, 0.4)_{g} \mid \, = \, \mid (1.3, 1.4] _g \mid \, =   10^{g -1} $  and  $  \mid [0.3, 1.1]_{g} \mid \, =  \, \mid [0.6, 1.4]_g \mid \, = \, 8 \cdot 
 10^{g-1} +1$. 
As a consequence,  
$p_g= q_g = \frac{10^{g-1}}{8 \cdot 10^{g -1} +1}  =  \frac{1 }{8+ 10^{-g+1}}$.
 Obviously,  the finer is the granularity $g$ the closer is the value of 
 $p_g$ and $q_g$ to $\frac{1}{8}$. Formally, 
\begin{equation}
\label{eq:lim}
\lim_{ g \rightarrow \infty}  \metric^n(\mathit{Eng_g}, \widehat{\mathit{Eng_g}})
\;  \leq \;  1-\big(1- \frac{1}{8^6}  \big)^n  \enspace . 
\end{equation}

 Thus, for instance, assuming a granularity $g=6$, after $n= 3000$ computation 
 steps the distance between the two systems is less than $0.012$.
 Intuitively, this means that if we limit our analysis to 
$3000$ computation steps the behaviours of two engines may differ 
 with probability at most $0.012$.
By an easy inspection in the (common) logics of the two engines, it is 
easy to see that any two subsequent $\tick$-actions are  separated by  at most $2$ untimed actions. 
Thus, $3000$ computation steps means around $1000$ time slots.
Considering time slots lasting $20$ seconds each, this means
more than five hours. Thus,  an 
utilisation of $\widehat{\mathit{Eng_g}}$  might be feasible in airplanes used for  short-range flights, where the engine is actually used for 
a limited amount of  time.   Actually, aeronautical engineers might consider
perfectly acceptable the risk of mounting the engine $\widehat{\mathit{Eng_g}}$ instead of $\mathit{Eng_g}$,  when compared to the reliability of the other components of the airplane.  

However, since an airplane mounts two engines,   engineers need to estimate
the difference in terms of  behaviour on the whole airplane resulting 
by the adoption of different versions of the engine. This is exactly the point where we can rely on  \autoref{thm:congruenceP} to support compositional reasoning. 

 The following result follows from 
\autoref{eq:lim}, \autoref{prop:case-prop} and \autoref{thm:congruenceP}. 
\begin{proposition}
\label{prop:air3}
Let $g \in \mathbb{N}^{+}$ and $n \in \mathbb{N}$. 
Let 
\( 
\widehat{\mathit{Airplane}_g}  \; = \;  \big( ( \widehat{\mathit{Eng}_g^{\mathrm{L}} } 
\uplus 
(  \widehat{\mathit{Eng}_g^{\mathrm R}} ) \parallel \mathit{Check} \big)
\backslash \{warning\} \, . \) Then, 
\begin{enumerate}
\item
\label{prop:air3.1}
$\metric^n( \mathit{Airplane}_g  \: , \: \widehat{\mathit{Airplane}_g}  ) \: \leq \: 2p  $,
 where  $p= 1- \big(1- q_g(p_g)^5\big)^n $ 
\item  
\(
\lim_{ g \rightarrow \infty}   \metric^n( \mathit{Airplane}_g  \: , \: \widehat{\mathit{Airplane}_g}  ) \: \leq \: 2\big(1- \left(1- \frac{1}{8^6}  \big)^n\right) . 
\)
\end{enumerate}
\end{proposition}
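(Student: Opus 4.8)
The plan is to lift the single-engine estimate of \autoref{prop:case-prop} to the whole airplane purely by compositional reasoning, using the bounded-metric congruence results of \autoref{thm:congruenceP} together with the triangle inequality for the pseudometric $\metric^n$ (\autoref{prop:up-to-k-metric}). Set $p = 1 - (1 - q_g(p_g)^5)^n$. First I would observe that, since we identify \CPS{s} up to renaming of state variables, sensor names and actuator names, the substitutions defining the left and right copies are metric-preserving: they induce an isomorphism of the underlying pLTS, and $\metric^n$ depends only on the transition structure. Hence \autoref{prop:case-prop} yields both $\metric^n(\mathit{Eng}_g^{\mathrm{L}}, \widehat{\mathit{Eng}_g^{\mathrm{L}}}) \le p$ and $\metric^n(\mathit{Eng}_g^{\mathrm{R}}, \widehat{\mathit{Eng}_g^{\mathrm{R}}}) \le p$.

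Next I would assemble the airplane context one layer at a time. For the $\uplus$ layer I use the non-expansiveness argument sketched right after \autoref{thm:congruenceP}: applying \autoref{thm:congruenceP}.\ref{thm:congruence1Pk} first with $O = \mathit{Eng}_g^{\mathrm{R}}$ and then with $O = \widehat{\mathit{Eng}_g^{\mathrm{L}}}$, and chaining the two resulting inequalities with the triangle inequality, gives $\metric^n(\mathit{Eng}_g^{\mathrm{L}} \uplus \mathit{Eng}_g^{\mathrm{R}}, \widehat{\mathit{Eng}_g^{\mathrm{L}}} \uplus \widehat{\mathit{Eng}_g^{\mathrm{R}}}) \le p + p = 2p$. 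This is legitimate because the two copies, as well as the mixed intermediate system $\widehat{\mathit{Eng}_g^{\mathrm{L}}} \uplus \mathit{Eng}_g^{\mathrm{R}}$, use pairwise disjoint state variables, sensors and actuators, so the physical-disjointness hypothesis of \autoref{thm:congruenceP}.\ref{thm:congruence1Pk} is met throughout. I then propagate the bound $2p$ across the remaining two operators: since $\mathit{Check}$ never acts on sensors or actuators it is pure-logical, so \autoref{thm:congruenceP}.\ref{thm:congruence2Pk} preserves $2p$ under $\parallel \mathit{Check}$, and \autoref{thm:congruenceP}.\ref{thm:congruence3Pk} preserves it under the single restriction $\backslash \{ warning \}$. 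Composing the three layers gives $\metric^n(\mathit{Airplane}_g, \widehat{\mathit{Airplane}_g}) \le 2p$, which is claim~\ref{prop:air3.1}.

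For the second claim I would simply pass to the limit $g \to \infty$. Using the identity $p_g = q_g = \frac{1}{8 + 10^{-g+1}}$ established in the text preceding \autoref{eq:lim}, we have $q_g(p_g)^5 \to \frac{1}{8^6}$, so the bound $2p = 2\big(1 - (1 - q_g(p_g)^5)^n\big)$ converges to $2\big(1 - (1 - \frac{1}{8^6})^n\big)$; combined with claim~\ref{prop:air3.1} this gives the stated limit.

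I do not expect a genuinely hard step here: once \autoref{prop:case-prop} and \autoref{thm:congruenceP} are available the argument is essentially bookkeeping. The only points requiring care are the verification of the side conditions of the three congruence clauses --- in particular that every intermediate mixed system stays physically-disjoint so that the triangular chaining is sound, and that the left/right renamings really preserve $\metric^n$ --- together with the recognition that the factor $2$ is the unavoidable cost of using the triangle inequality to combine the two independent per-engine deviations, rather than a symptom of slack in the estimate.
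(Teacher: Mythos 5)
Your proposal is correct and follows essentially the same route as the paper's proof: the single-engine bound from \autoref{prop:case-prop}, preservation under renaming ($\alpha$-conversion), the two applications of \autoref{thm:congruenceP}.\ref{thm:congruence1Pk} chained by the triangle inequality to obtain the factor $2$, then clauses \ref{thm:congruence2Pk} and \ref{thm:congruence3Pk} for the pure-logical process $\mathit{Check}$ and the restriction, and finally the limit via \autoref{eq:lim}. Your added care about the physical-disjointness of the mixed intermediate system is a sound (and welcome) explicit check of a side condition the paper leaves implicit.
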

Thus, for $g=6$,  
the probability that the two airplanes mounting different engines
exhibit a different behaviour within  $n=3000$ computation steps
 is at most $0.024$; a distance 
which may be considered still 
acceptable in specific contexts. Notice that 
 in the (common) logics of the two airplanes, it is 
easy to see that two $\tick$-actions are  separated by at most $5$
untimed actions (two for each engine plus one to signal a possible alarm). 
Thus, $3000$ computation steps means around $600$ time slots, i.e.,  more than 
three hours for time slots lasting $20$ second each. 

Finally, the reader should notice that the bound
of the distance between the two airplanes is given by the summation 
of the bounds of the distances between the two corresponding engines. This 
is perfectly in line with the fact that our bisimulation metrics enjoy the 
\emph{non-expansiveness} property.

The proofs of the previous propositions can be found in the Appendix, in \autoref{app:sec:case-studybis}.

\enlargethispage{.5\baselineskip}

%

\section{Conclusions, related and future work}
\label{sec:conclusioni}

We have proposed a hybrid probabilistic process calculus, called  \cname{}, for specifying and reasoning on cyber-physical systems.
Our calculus allows us to model a \CPS{} by specifying its 
\emph{physical plant\/}, containing information on state variables, sensors, 
 actuators, evolution law, etc., and its \emph{logics}, i.e., controllers, IDSs, supervisors, etc.
Physical and logical components  interact through sensors and actuators, whereas interactions within the logics or between logics of different \CPS{s} rely
on channel-based communication. In 
\cname{}, the representation of the evolution map takes into account the uncertainty of the physical model, whereas the representation of the measurement map consider  measurement errors in sensor reading. As a consequence, the two maps returns discrete  probability distributions over state functions  and sensor functions, respectively.

 \cname{} is equipped with a probabilistic labelled transition semantics  which satisfies  classical time properties: \emph{time determinism}, \emph{patience}, \emph{maximal progress}, and \emph{well-timedness}.
As behavioural semantics we  adopt a natural notion of \emph{weak probabilistic bisimilarity} which is proved to be preserved by appropriate system contexts that are suitable for \emph{compositional reasoning\/}. 
Then, 
we  argue that probabilistic bisimilarity is only partially satisfactory to reason on \CPS{s} as it can only establish whether two \CPS{s} behave exactly in the same way. 
To this end, we generalise our probabilistic bisimilarity to provide a notion of \emph{weak bisimulation metric}  along the lines of~\cite{DJGP02}. 
We also define a notion of weak bisimulation metric in $n$ steps, which 
reveals to be very effective whenever it is not necessary to observe the system ``ad infinitum''  but it is enough to observe its behaviour restricted to  bounded computations. Again, both bisimulation metrics are proved to be suitable for compositional reasonings. The paper provides a case study, taken from an engineering application, and use it to illustrate our definitions and  our compositional probabilistic behavioural theory for \cname{}.

\paragraph*{Related work.}
A number of approaches have been proposed for modelling hybrid systems using formal methods. 
For instance, \emph{hybrid automata}~\cite{ACHH1993} combine
finite state transition systems  (to model the cyber component) and continuous variables and dynamic  (to represent the physical component). 
A number of \emph{hybrid process algebras}~\cite{CuRe05,BergMid05,vanBeek06,RouSong03,HYPE} have been proposed for reasoning about physical systems and provide techniques for analysing and verifying protocols for hybrid automata. Among these approaches, \cname{} shares some similarities with the $\phi$-calculus~\cite{RouSong03}, a hybrid extension of the $\pi$-calculus~\cite{Mil91} equipped with a
weak bisimilarity that is  not compositional. 
Galpin et al.~\cite{HYPE} proposed a process algebra, called  HYPE,
in which the continuous part of the system is represented by  appropriate variables whose changes are 
determined by active influences (i.e., commands on actuators).
The authors define a strong bisimulation that extends the \emph{ic-bisimulation} of~\cite{BergMid05}. Unlike ic-bisimulation, the
bisimulation in HYPE is preserved by a notion of parallel 
composition that is slightly more
permissive than ours. However, bisimilar systems in HYPE must always have the 
same influence. Thus, in HYPE we cannot compare 
\CPS{s} sending different commands on actuators at the same time, 
as we do (for instance) in  \autoref{prop:performances}. 

In order to enrich hybrid models with probabilistic or stochastic behaviour,  a number of different approaches have been proposed in the last years \cite{Spro2000,Hu2000,Buj04,Abate08,Franzle2011,Hahn2013,Wang17}. 
Most of these approaches consist in introducing 
either probabilities in the transitions relation,  or probabilistic choice, or stochastic differential equations.  
For instance, in \emph{Stochastic Hybrid CSP} (SHCSP)  \cite{Wang17} probabilistic choice replaces non-deterministic choice, stochastic differential equations replace  differential equations, and communication interrupts are generalised by communication interrupts with weights. 

\enlargethispage{.2\baselineskip}
The formal analysis of probabilistic and stochastic systems follows the two classic mainstreams: (i)  \emph{model checking}
 (e.g.,  \cite{Abate08}) and \emph{reachability} 
(e.g., \cite{Spro2000,Abate08}), when the focus is on  a single system;  
\emph{behavioural equivalences} (e.g., \cite{LS91,Seg95,ALS00,BKHH02,BHK04,Bouj05}) 
when the goal is to compare the behaviour of two systems (very often, specification and implementation of the same system).
Al already said in the Introduction, probabilistic behavioural equivalences may be too strong in certain probabilistic and stochastic models in which  many interesting systems are only approximately behavioural equivalent. 
This led to several notions of \emph{behavioural distance} that can be grouped in two main families:   quantitative counterparts of  trace equivalence \cite{CK14,DAK12,DHKP16,Wn10}, and   quantitative counterparts of bisimulation equivalence \cite{DJGP02,DGJP04,BW05,DCPP06}.
We refer to \cite{A13,BA17} for a  comparison between these two approaches.
In the present paper,  we have adopted a bisimulation-based definition because, unlike trace semantics, bisimulation is sensitive to system deadlock, a phenomenon that has a great impact in   \CPS{s}.

Vigo et al.~\cite{VNN13} proposed a  calculus for wireless-based 
cyber-physical systems 
endowed with a theory to study cryptographic primitives, together with explicit notions of communication failure and unwanted communication. 
 The calculus does not provide any notion of  behavioural equivalence. 
It also lacks a clear distinction between physical and logical 
components. 

Lanese et al.~\cite{LBdF13} proposed an untimed calculus of mobile IoT devices interacting with the physical environment by means of sensors and  actuators. 
The calculus does not allow any representation of the physical environment, 
and it is equipped with an end-user bisimilarity in which end-users may: (i) provide values to sensors, (ii) check actuators, and (iii) observe the mobility of smart devices. End-user bisimilarity is not preserved by parallel composition. Compositionality is recovered by strengthening its discriminating power.

Lanotte and Merro~\cite{LM16} extended and generalised the work of~\cite{LBdF13}
in a timed setting by providing a 
bisimulation-based 
semantic theory that is suitable for compositional reasoning. 
As in~\cite{LBdF13}, the  physical environment is not represented.

Bodei et al.~\cite{BDFG16,BDFG17} have proposed a new untimed 
process calculus, I{\small o}T-L{\small Y}S{\small A}, supporting  a control flow analysis that safely approximates the abstract behaviour of IoT systems. Essentially, they track how data spread from sensors to the logics of the network, 
and how physical data are manipulated. 
Intra-node generative communications in IoT-L{\small Y}S{\small A} are implemented through a shared store \`a la Linda~\cite{Linda}. In this manner physical data are made available to software entities that analyse them and trigger the relevant actuators to perform the desired behaviour. The calculus adopt asynchronous multi-party communication among nodes taking care of node proximity (the topology 
is static). The dynamics of the calculus is given in terms of a reduction 
relation. No behavioural equivalences are defined.

Finally, the paper at hand extends the conference paper \cite{LaMe17} in 
the following aspects:
\begin{inparaenum}[(i)]
\item the calculus has become a probabilistic calculus, both in its logical and its physical components; the logics has been enriched with probabilistic choice, 
whereas  discrete (finite-support) probability distributions have replaced  continuous non-deterministic uncertainties in the evolution  and continuous non-deterministic error-prone measurements; 
\item standard bisimulation has been replaced with probabilistic bisimulation and then with bisimulation metrics;
\item as a consequence, the case study has been revisited using our bisimulation metrics to estimate the deviation in terms of behaviour of the systems under investigation. 
\end{inparaenum}

\paragraph*{Future work.}
We believe that our paper can  lay and streamline 
\emph{theoretical foundations} for the development of formal and 
automated tools to verify \CPS{s} before their practical implementation.
 To that end, we will consider applying, possibly after proper enhancements,
 existing tools and frameworks for automated verification, such as 
Maude~\cite{Maude},  PRISM~\cite{PMC}, SMC UPPAAL~\cite{SMC-Uppaal}
and Ariadne~\cite{Ariadne}, resorting to the development of a dedicated tool if existing ones prove not up to the task.
 Finally, in~\cite{LMMV16}, we are currently working on a non-probabilistic version of \cname{} extended with security features to provide a formal study of a variety of  \emph{cyber-physical attacks}  targeting  physical devices. In this case, the final goal is to develop formal and automated tools to analyse security properties of \CPS{s}.

\enlargethispage{1.1\baselineskip}
As possible future work, a non-trivial challenge would be to extend the present work in order to deal with 
\emph{continuous probability distributions}.
In our setting,  this would  mean, for instance,  that the evolution map $\evolmap$ should return a continuous distribution over state functions, and that the function $\operatorname{next}_E(S)$ should return a continuous distributions over physical states. However, this would immediately give rise to a serious technical problem: the definition of probabilistic \emph{weak labelled transitions}, and hence the definition of \emph{weak behavioural equivalences and distances}. 
To  better illustrate the problem, suppose to adopt continuous probability distributions in our calculus, and suppose a cyber-physical system $M$ such that $M \trans{\tick}  \gamma$, for some continuous probability distribution $\gamma$ over \CPS{s}. 
Suppose $\gamma$ is a \emph{uniform distribution} such that $\support (\gamma) = \{M_r \colon r \in [0,1]\}$, with $M_r \neq M_{r'}$, for any $r \neq r'$.
Independently on the specific definition of the \CPS{s} $M_r$, as the logics of any \CPS{} is intrinsically discrete,   the cyber-component of any $M_r$  will drive the whole system to a \emph{discrete distribution}. As an example, 
assume a cyber-physical system $N$ such that 
for all reals  $r \in [0,0.5]$ there is a  $\tau$-transition $M_r \trans{\tau} \dirac{N}$;  whereas for all reals 
$r \in (0.5,1]$ there is a  $\tau$-transition $M_r \trans{\tau} \dirac{M_r}$. 
In such a situation,  it is far from obvious to determine what should be the distribution $\gamma_m$ reached by the original \CPS{} $M$ after a  weak $\tick$-transition, $M \Trans{\tick} \gamma_m$.  
In fact, $\gamma_m$ can be neither a discrete nor a continuous distribution. 
This because $\gamma_m$ should map $N$ to a probability weight $0.5$ (as in a discrete distribution), and then it should distribute  the remaining mass probability as a uniform (sub-)distribution to all $M_r$ with $r\in (0.5,1]$, such that $\int_{0.5}^1  \gamma_m(M_t) dt = 0.5$ (as in a continuous distribution).

A possible solution to capture weak transitions when working with continuous probability distributions is to approximate them via discrete ones by adopting  the approach proposed for labelled Markov processes in \cite{Desh03,DGJP04}.
In these papers, 
Desharnais et al.\ propose approximation techniques for continuous-state labelled Markov processes $\mathcal{S}$ in terms of finite-state Markov chains $\mathcal {S}(n,\epsilon)$, parametric in a natural number  $n$ and a rational number $\epsilon >0$.
Here, $n$ is the maximal number of possible consecutive transitions from the start state of $\mathcal {S}(n,\epsilon)$ (the idea being that this Markov chain is the $n$-steps unfolding of the original Markov process $\mathcal{S}$), whereas 
the rational number $\epsilon>0$ measures the accuracy of probabilities in $\mathcal {S}(n,\epsilon)$ when approximating the transitions
 of the original process $\mathcal{S}$.
In their Theorem~4.4 \cite{Desh03} the authors prove that if a state $s$ of $\mathcal S$ satisfies a formula in the logic characterising probabilistic bisimulation then there is some approximation $\mathcal{S}(n, \epsilon)$
satisfying exactly the same formula. 
Furthermore, the same authors show that one can always reconstruct the original process from the approximations.  More precisely, a Markov process bisimilar to the original one can always be derived from the countable approximates $\mathcal{S}(n,  {2^{-n}})$, for some $n \in \mathbb{N}$ (in the current paper we adopted a  granularity $\epsilon=10^{-n}$). 
Actually, they do not 
reconstruct the original state space, but they reconstruct all the transition probability information, i.e., the dynamical aspects of the
process (see Theorem~4.5 of \cite{Desh03}).

\paragraph*{Acknowledgements.} We thank the anonymous reviewers for their insightful and careful reviews.

\bibliography{main}
\bibliographystyle{plain}


\appendix

\section{Proofs}


\subsection{Proofs of  \autoref{sec:calculus}}
\label{app_uno}

\autoref{prop:time} states that \CPS{s} enjoy time determinism, maximal progress, patience and well-timedness. 
We start with showing that processes enjoy the same properties.

\begin{lemma}[Processes time properties]
\label{lem:time2}
Assume a process $P$.
\begin{itemize}

\item[(a)] 
\label{lem:time21}
If $P \trans \tick \pi$ and 
$P\trans \tick  \pi'$, then   
$\pi \equiv  \pi'$. 

\item[(b)] 
\label{lem:time22}
If $P\trans \tau \pi$  then there is no $\pi'$ such that  $P\trans \tick \pi'$. 

\item[(c)] 
\label{lem:time23}
If $P \trans \tick \pi'$  for no  $\pi'$ 
 then  
there is $\pi$ such that $P \trans \lambda \pi$ for some $\lambda \in \{\tau, {\snda a v} ,{\rcva s x}\}$. 

\item[(d)]
\label{lem:time24}
There is a $k\in\mathbb{N}$ such that if
$P  \trans{\lambda_1}\dots \trans{\lambda_n} P'$, with $\lambda_i \neq \tick$, then $n\leq k$. 

\end{itemize}
\end{lemma}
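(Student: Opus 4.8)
The plan is to read off the four items from the process rules in \autoref{tab:lts_processes}. For (a)--(c) I would proceed by induction on the structure of $P$ with a case analysis on the last rule deriving the transition in question; the only delicate point is recursion, which I would handle by a secondary induction on the depth of the transition derivation, since \rulename{Rec} reduces a transition of $\fix X P$ to one of the unfolding $P\subst{\fix X P}{X}$ whose derivation is strictly shorter. For time determinism (a) the observation is that, for each syntactic form, a $\tick$-transition is producible by \emph{at most one} rule among \rulename{TimeNil}, \rulename{Delay}, \rulename{Timeout}, \rulename{TimePar}, \rulename{ChnRes} and \rulename{Rec}; the composite cases then close under the induction hypothesis because $\parallel$ and $\backslash c$ distribute over sub-distributions compatibly with $\equiv$. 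The prefixes $\rsens x s.C$ and $\wact v a.C$ admit no $\tick$-transition, so (a) holds vacuously for them.

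For maximal progress (b), the key is that $\tau$ is generated only by \rulename{Com} (and its symmetric variant) and then propagated by \rulename{Par}, \rulename{ChnRes}, \rulename{Rec}; hence $P\trans\tau\pi$ forces $P_1\parallel P_2\trans\tau$ in any parallel context, which falsifies the negative premise $P_1\parallel P_2\ntrans{\tau}$ of \rulename{TimePar} and thereby blocks every competing $\tick$-transition. The restriction and recursion cases follow from the induction hypothesis, since all transitions of $P\backslash c$ and $\fix X P$ factor through \rulename{ChnRes} and \rulename{Rec}. For patience (c), I would observe that a process lacking a $\tick$-transition is either a read/write prefix, which already offers $\rcva s x$ or $\snda a v$, or a parallel composition for which \rulename{TimePar} fails---so that either a $\tau$ is already available or some component cannot $\tick$ and, by the induction hypothesis, supplies an instantaneous action that lifts through \rulename{Par}---or a restriction/recursion inheriting such an action via \rulename{ChnRes}/\rulename{Rec}, using that none of $\tau,\snda a v,\rcva s x$ is blocked by $\backslash c$.

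The main obstacle is well-timedness (d), precisely because \rulename{Rec} unfolds $\fix X P$ into the structurally larger $P\subst{\fix X P}{X}$, so a naive syntactic measure fails to decrease. I would introduce a weight $w$ on processes bounding the number of consecutive instantaneous steps, set by $w(\nil)=w(X)=w(\tick.C)=0$, $w(\timeout{\mathit{chn}.C}{D})=w(\rsens x s.C)=w(\wact v a.C)=1+\max_{P'\in\support(\sem{C})}w(P')$, $w(P\parallel Q)=w(P)+w(Q)$, $w(P\backslash c)=w(P)$, and $w(\fix X P)=w(P)$ computed with $w(X)=0$ (conditionals being identified with one of their branches). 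A short structural induction shows that $w$ ignores exactly the time-guarded positions, namely the continuation of $\tick.{\cdot}$ and the timeout branch $D$ in $\timeout{\mathit{chn}.C}{D}$; since the process variable bound by a recursion occurs only in such positions, one obtains $w(P\subst Q X)=w(P)$ for every $Q$, so $w$ is stable under unfolding and hence finite and well defined on all well-formed recursions.

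It then remains to prove, by induction on the derivation of the transition, that each instantaneous step strictly decreases $w$: if $P\trans\lambda\pi$ with $\lambda\neq\tick$ then $w(P')<w(P)$ for all $P'\in\support(\pi)$. The prefix rules strip one guard and drop $w$ by at least one (for \rulename{Inpp} using that $w$ is invariant under the value substitution $\subst v x$), \rulename{Com} and \rulename{Par} use additivity of $w$ over $\parallel$, \rulename{ChnRes} is immediate, and \rulename{Rec} invokes the shorter premise together with stability under unfolding. Consequently any chain $P\trans{\lambda_1}\dots\trans{\lambda_n}P'$ of non-$\tick$ steps induces a strictly decreasing sequence of naturals beginning at $w(P)$, whence $n\le w(P)$ and $k:=w(P)$ witnesses the claim. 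As an alternative to this bespoke measure, one may appeal directly to well-timedness of TPL~\cite{HR95}: the only new prefixes $\rsens x s$ and $\wact v a$ are ephemeral, each consumed by a single instantaneous step, so they cannot create divergence and the bound transfers from TPL.
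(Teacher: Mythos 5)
Your proof is correct, and for items (a)--(c) it follows essentially the same route as the paper: induction on the derivation (with structural induction and unfolding for patience), the uniqueness of the applicable $\tick$-rule per syntactic form for time determinism, the negative premise $P_1 \parallel P_2 \ntrans{\tau}$ of \rulename{TimePar} for maximal progress, and the observation that a read/write prefix always offers an instantaneous action for patience. The genuine divergence is in item (d). The paper dismisses well-timedness in one line as ``straightforward from time-guarded recursion'' (and, in an earlier variant, defers to the well-timedness of TPL, noting that the sensor/actuator prefixes are ephemeral), whereas you construct an explicit termination measure: a weight $w$ that counts untimed guards, is blind to time-guarded positions (the continuation of $\tick.{\cdot}$ and the timeout branch of $\timeout{\mathit{chn}.C}{D}$), and is therefore invariant under unfolding $\fix{X}P \mapsto P\subst{\fix{X}P}{X}$ --- which is exactly what defuses the usual objection that \rulename{Rec} produces a structurally larger term. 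Showing that every non-$\tick$ step strictly decreases $w$ on the whole support of the target distribution then yields the bound $n \le w(P)$. What your approach buys is a self-contained, quantitative proof (an explicit $k$) that does not lean on an external result about TPL; what the paper's appeal to TPL buys is brevity and reuse of a known theorem. Your measure is sound: the only points worth double-checking are that $w$ is indeed invariant under the value substitutions introduced by \rulename{Inpp} and \rulename{Read} (true, since values do not contribute guards) and that the identification of $\ifelse{b}{P}{Q}$ with one of its branches keeps $w$ well defined, both of which you address.
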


\begin{proof}
We show the four properties separately.
\begin{itemize}

\item[(a)]
The proof is by induction on the depth $d$ of the derivation tree allowing us to derive $P\trans \tick \pi$.

\underline{Base case $d=1$}. The transition $P\trans \tick \pi$ is derived by applying one of the rules \rulename{TimeNil},  
\rulename{Delay} and \rulename{Timeout}, and the thesis is immediate.

\underline{Inductive step $d>1$}. The transition $P\trans \tick \pi$ is derived by applying one of the rules \rulename{TimePar},  
\rulename{ChnRes} and \rulename{Rec}. We consider the case \rulename{TimePar}, the others are similar. 
Since $P\trans \tick \pi$ is derived by rule \rulename{TimePar}, process $P$ must be of the form $P \equiv P_1 \parallel P_2$ for suitable processes $P_1$ and $P_2$.
Therefore also the rule $P\trans \tick \pi'$ is derived through rule \rulename{TimePar}.
We have

\[
\Txiombis
{
  P_1 \trans{\tick}  {\pi_1}  \Q
  P_2 \trans{\tick} {\pi_2}  \Q P_1 \parallel P_2 \ntrans{\tau}
}
{
  {P_1 \parallel P_2}   \trans{\tick}  {\pi_1 \parallel \pi_2}
}
\Q\Q
\Txiombis
{
  P_1 \trans{\tick}  {\pi_1'}  \Q
  P_2 \trans{\tick} {\pi_2'}  \Q P_1 \parallel P_2 \ntrans{\tau}
}
{
  {P_1 \parallel P_2}   \trans{\tick}  {\pi_1' \parallel \pi_2'}
}
\]
with $\pi =\pi_1 \parallel \pi_2$ and $\pi' =\pi_1' \parallel \pi_2'$.

By the inductive hypothesis we have that $\pi_1 \equiv \pi_1'$  and  $\pi_2 \equiv \pi_2'$, which gives
$\pi_1 \parallel \pi_2 \equiv \pi_1'\parallel \pi_2'$ and concludes the proof. 

\item[(b)]
The proof is by induction on the depth $d$ of the derivation tree allowing us to derive $P\trans \tau \pi$.

\underline{Base case $d=1$}.
There is no rule in
\autoref{tab:lts_processes} allowing us to derive  transition $P\trans \tau \pi$ with depth $1$, hence the thesis follows trivially.

\underline{Inductive step $d>1$}. The transition $P\trans \tau \pi$ is derived by applying one of the rules 
\rulename{Com}, \rulename{Par}, \rulename{ChnRes} and \rulename{Rec}. 
We consider the case \rulename{Com}. 
Since $P\trans \tau \pi$ is derived by rule \rulename{Com}, process $P$ must be of the form $P \equiv P_1 \parallel P_2$ for suitable processes $P_1$ and $P_2$.
To show the thesis that no transition from $P_1 \parallel P_2$ labelled $\tick$ can be derived, it is enough to note that the only 
rule  in
\autoref{tab:lts_processes}.
which may be applied to infer any $\tick$-labelled transition from $P_1 \parallel P_2$ is rule \rulename{TimePar}, which cannot be applied since it has $P_1 \parallel P_2  \ntrans{\tau}$ among its premises.

The other cases follow directly by induction.

\item[(c)]
First  of al we notice that, if $P=\fix X Q$, then, since $P$ is bounded and has time-guarded recursion,
by applying repetitively  the structural congruence $\fix X Q \equiv   {Q{\subst {\fix{X}Q} X}}$, 
  we find a process  $P'  \equiv P$  such that
 $P' \neq \fix Y R$, for any $Y$ and $R$.  
Since $P'  \equiv P$ implies $P' \trans{\lambda}$ iff $P \trans{\lambda}$, for any $\lambda$,
we can prove the thesis  by structural induction on $P$ where $P$ is not of the form $P=\fix X Q$.

The base cases $P = \nil$, $P = \tick.C$ and $P =\timeout {\mathit{chn}.C} {D}$ are immediate since in all these cases a transition labelled $\tick$ from $P$ can be derived.
The base case $P=  \mathit{phy}.C$ holds since we can apply either rule \rulename{Write} to derive a transition from $P$ labelled 
${\snda a v}$, or rule \rulename{Read} to derive a transition labelled ${\rcva s x}$.

The inductive steps are $P = P_1 \parallel P_2$,  $P= \ifelse b{ P_1} {P_2} $ and  $P= Q\backslash c$.
Consider the case $P = P_1 \parallel P_2$.
If no transition from $P_1 \parallel P_2$  labelled $\tick$ can be derived, then rule \rulename{TimePar} cannot be applied.
Then, at least one of the premises $P_1 \trans{\tick} \pi_1$, $P_2 \trans{\tick} \pi_2$ and $P_1 \parallel P_2 \ntrans{\tau}$ does not hold.
If $P_1 \trans{\tick} \pi_1$ does not hold, then by the inductive hypothesis we have  $P_1 \trans \lambda \pi_1$ for some $\lambda \in \{\tau, {\snda a v} ,{\rcva s x}\}$, and
by rule \rulename{Par} we infer $P_1 \parallel P_2 \trans \lambda \pi_1 \parallel \dirac{P_2}$, which gives the thesis.
If $P_2 \trans{\tick} \pi_2$ does not hold,  then by the inductive hypothesis we have  $P_2 \trans \lambda \pi_2$ for some $\lambda \in \{\tau, {\snda a v} ,{\rcva s x}\}$, and
by the rule symmetric to \rulename{Par} we infer $P_1 \parallel P_2 \trans \lambda \dirac{P_1} \parallel \pi_2$, which gives the thesis.
If $P_1 \parallel P_2 \ntrans{\tau}$ does not hold then there is some transition  $P_1 \parallel P_2 \trans{\tau} \pi$, which gives the thesis.
The cases $P= \ifelse b{ P_1} {P_2} $ and $P= Q\backslash c$ are similar.

\item[(d)]
The well-timedness property  is 
straightforward from time--guardedness recursion. 

\end{itemize}
\end{proof}

The challenge in the proof of  \autoref{prop:time} is to lift the results of \autoref{lem:time2} to the \CPS{s} of \cname. \\

\noindent
\textbf{Proof of \autoref{prop:time}}
\begin{enumerate}

\item[(a)]
We note that transitions labelled $\tick$ can be derived only by rule \rulename{Time}.
Therefore, from the hypothesis $M \trans{\tick} \gamma$ and $M \trans{\tick} \gamma'$ with $M = E; \confCPS{S}{P}$, we infer that 
there are process distributions $\pi$ and $\pi'$ such that 
 \[\Txiombis
  {P  \trans{\tick} {\pi} \Q\Q
 \confCPS{S}{P} \ntrans{\tau} \Q\Q
S \in \invariantfun{}}
{\confCPS{S}{P} \trans{\tick} \confCPS {\operatorname{next}_E(S)} {\pi}}
\Q\q \textrm{and} \Q\q \Txiombis
  { P  \trans{\tick} {\pi'} \Q\Q
 \confCPS{S}{P} \ntrans{\tau} \Q\Q
S \in \invariantfun{}}
{\confCPS{S}{P} \trans{\tick} \confCPS {\operatorname{next}_E(S)} {\pi'}}
\]
where $\gamma = E ; \confCPS {\operatorname{next}_E(S)} {\pi}$ and $\gamma' =  E ; \confCPS {\operatorname{next}_E(S)} {\pi'}$.
By the property of time determinism for processes in \autoref{lem:time2} we infer that $P  \trans{\tick} {\pi}$ and $P  \trans{\tick} {\pi'}$ imply $\pi \equiv \pi'$, hence $\gamma \equiv \gamma'$, which completes the proof.

\item[(b)]
From the hypothesis $M \trans{\tau} \gamma$ with $M = E; \confCPS{S}{P}$, we infer that $\gamma = E ; \confCPS{\sigma}{\pi}$ for distributions $\sigma$ and $\pi$ such that
$\confCPS{S}{P} \trans{\tau} \confCPS{\sigma}{\pi}$ is derived from the rules in
\autoref{tab:lts_systems}.
To show the thesis that no transition from $M$ labelled $\tick$ can be derived, it is enough to show that no transition from $\confCPS{S}{P}$ labelled $\tick$ can be derived from the rules in
\autoref{tab:lts_systems}.
This follows by the fact that the only rule which may be applied to infer any $\tick$-labelled transition from $\confCPS{S}{P}$ is rule \rulename{Time}, which cannot be applied since it has $\confCPS{S}{P}  \ntrans{\tau}$ among its premises.

\item[(c)]
From the hypothesis that $M \trans{\tick} \gamma$ with $M = E; \confCPS{S}{P}$ cannot be inferred for any distribution $\gamma$, we infer that 
$\confCPS{S}{P} \trans{\tick} \confCPS{\sigma}{\pi}$ cannot be derived for any $\sigma$ and $\pi$ from the rules in \autoref{tab:lts_systems}.
Therefore, at least one of the premises $P  \trans{\tick} {\pi}$, $\confCPS{S}{P}   \ntrans{\tau}$ and $S \in \invariantfun{}$ of rule  \rulename{Time} does not hold.
If premise $P  \transS[\tick] {\pi}$ does not hold for any  $\pi$, then by the property of patience for processes in \autoref{lem:time22} we have $P \trans \lambda \pi'$ for some  $\pi'$ and $\lambda \in \{\tau, {\snda a v} ,{\rcva s x} \}$.
Let us consider the case $\lambda=\tau$. 
From $P \trans \tau \pi'$, either $S \in \invariantfun{}$ is not valid, or we can apply rule \rulename{Tau} to infer the transition $\confCPS{S}{P} \transS[\tau] \confCPS{\dirac{S}}{\pi'}$, which gives $M \transS[\tau] E; \confCPS{\dirac{S}}{\pi'}$.
In both cases the thesis holds.
The cases $\lambda \in \{ {\snda a v} ,{\rcva s x} \}$ can be proved similarly by using rules \rulename{ActWrite} and \rulename{SensRead}, respectively.
If premise $P \transS[\tick] \pi$ holds for some $\pi$ then either premises  $S \in \invariantfun{}$ or premise $\confCPS{S}{P} \ntrans{\tau}$ does not hold. In the former case the thesis follows.
In the latter case we have a $\tau$-labelled transition from $M$ and the thesis holds as well.

\item[(d)]
The proof is by contradiction.  
Suppose there is no $k$ satisfying the statement of the thesis.
Hence there exists an unbounded derivation 
\[
E;\confCPS{S}{P} = E;\confCPS{S_1}{P_1} \trans{\alpha_1} \dots \trans{\alpha_n}  E;\confCPS{S_n}{P_n}  \trans{\alpha_{n+1}} \dots
\]
with $\alpha_i \neq \tick$ for $i \ge 1$,
namely there exist distributions $\confCPS{\sigma_i}{\pi_i}$ for $i \ge 1$ with $\confCPS{\sigma_i}{P_i} \transS[\alpha_i] \confCPS{\sigma_{i+1}}{\pi_{i+1}}$, $S_{i+1} \in \support(\sigma_{i+1})$ and $P_{i+1} \in \support(\pi_{i+1})$.
This contradicts the property of well-timedness for processes in \autoref{lem:time2}.
\qed

\end{enumerate}

\subsection{Proofs of  \autoref{sec:case-study}} 
\label{app:sec:case-study}

In order to prove \autoref{prop:sys} and \autoref{prop:X} we use the following lemma that formalises the invariant properties binding the state variable $\mathit{temp}$ with the activity of the cooling system.
Intuitively,  when the cooling system is inactive then the value of the state variable $\mathit{temp}$ lays in the interval $[0, 11+\mathit{err}+\delta]$. 
Furthermore, if the coolant is not active and the variable $\mathit{temp}$ lays in the interval $(10+\mathit{err}, 11+\mathit{err}+\delta]$ then the cooling will be turned on in the next time slot. 
Finally, if the cooling system is active then there is some $k=1 \dots 5$ such that the system was activated $k$ time units ago, it was kept active so far and the 
state variable $\mathit{temp}$ lays in the real interval  
 $(10-\mathit{err}-k{*}(1{+}\delta) , 11+\mathit{err}+\delta-k{*}(1{-}\delta)]$.

\begin{lemma} 
\label{lem:sys} 
Let $ \mathit{Eng}_g $ be the system defined in \autoref{sec:case-study}.
Let
\begin{displaymath}
 \mathit{Eng_g}  = {M_1}  \trans{t_1}\trans\tick 
 {M_2} \trans{t_2}\trans\tick  \dots 
\trans{t_{n-1}}\trans\tick  {M_n} 
\end{displaymath}
such that the traces $t_j$ contain no $\tick$-actions, for any $j \in  1 \ldots n{-}1$,  and for any $i \in  1 \ldots n $ we have ${M_i}= \env_g ; \confCPS {S_i}{P_i} $ with 
$\state_i = \stateCPS 
{\statefun^i{}} 
{\sensorfun^i{}}
{\actuatorfun^i{}} 
$ 
and 
$\env_g = \envCPS 
{\evolmap{}}
{\measmap{}}   
{\invariantfun{}}$. 
Then, for any $i \in 1 \ldots n{-}1 $ we have the following:
\begin{enumerate}

\item 
\label{one}
 if   $ \actuatorfun^i{}(\mathit{cool})= \off $ then
 $\statefun^i{}(\mathit{temp})  \in [0, 11+\mathit{err}+\delta ]$; 

\item 
\label{two}
  if   $ \actuatorfun^i{}(\mathit{cool})= \off $ and 
$\statefun^i{}(\mathit{temp})\in (10+\mathit{err}, 11+\mathit{err}+\delta]$ then, in the next time slot,  $\actuatorfun^{i{+}1}{}(\mathit{cool})=\on$; 

\item 
\label{three}
if  $ \actuatorfun^i{}(\mathit{cool})=\on$ then   $\statefun^i{}(\mathit{temp}) \in ( 10-\mathit{err}-k {*}(1{+}\delta) , 11+\mathit{err}+\delta -k{*}(1{-}\delta)] $, 
for some  $k  \in 1 \dots 5$  such that $\actuatorfun^{i-k}{}(\mathit{cool})=\off $ and 
$\actuatorfun^{i-j}{}(\mathit{cool}) =\on $, for all $j \in 0 \ldots k{-}1$. 

\end{enumerate}
\end{lemma}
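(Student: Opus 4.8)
The plan is to prove the three properties \emph{simultaneously} by induction on the slot index $i$, since they are mutually dependent: property (1) at slot $i$ is what bounds the temperature just before the cooling is switched on (feeding property (3)), whereas the contrapositive of property (2) is what forces the temperature to stay low whenever the cooling remains off (feeding property (1) again). Before starting, I would record two elementary facts about the measurement map, both immediate from $\measmap$ with $\mathit{err}=0.1$: if $\statefun^i(\mathit{temp}) > 10+\mathit{err}$ then \emph{every} sensor reading exceeds $10$, and if $\statefun^i(\mathit{temp}) \le 10-\mathit{err}$ then every reading is at most $10$. These determine deterministically whether the controller switches the cooling on or off; in the intermediate band $(10-\mathit{err},10+\mathit{err}]$ both outcomes are possible, which is exactly what the half-open interval bounds and the existential ``for some $k$'' in property (3) are designed to accommodate.

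Second, I would strengthen the induction hypothesis with an auxiliary invariant on the control state $P_i$, because the temperature bounds cannot be propagated without knowing where the controller's program counter sits, i.e.\ when a read or a write actually fires. Concretely: whenever $\actuatorfun^i(\mathit{cool})=\off$ the process $P_i$ is, up to $\equiv$, at the head location $\mathit{Ctrl}$; and whenever $\actuatorfun^i(\mathit{cool})=\on$ with the parameter $k$ of property (3), $P_i$ is the continuation $\tick^{5-k}.\rsens{x}{s_{\mathrm{t}}}.(\ldots)$ inside the $\fix Y$ body, so that the next sensor read occurs precisely after the remaining $5-k$ ticks. A delicate bookkeeping point I would flag explicitly is the one-slot offset between the actuator value recorded in $S_i$ (taken at the start of slot $i$, before the $\tau$-actions in $t_i$) and the actuator value that actually governs the physical evolution during the tick ending slot $i$ (taken after $t_i$, hence equal to $\actuatorfun^{i+1}$, since a tick never alters the actuator). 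The clean phrasing is: the evolution from slot $i$ to slot $i{+}1$ increases $\mathit{temp}$ by a value in $[1-\delta,1+\delta]$ if $\actuatorfun^{i+1}=\off$ and decreases it by such a value if $\actuatorfun^{i+1}=\on$.

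With this in place, the inductive step establishing the properties at slot $i{+}1$ from those at slots $\le i$ is a case analysis on the pair $(\actuatorfun^i,\actuatorfun^{i+1})$, i.e.\ on the four transition shapes off$\to$off, off$\to$on, on$\to$on, on$\to$off, each constrained by the auxiliary invariant: the two off-origin cases can only arise from the $\mathit{Ctrl}$ location, and on$\to$off can occur only from $k=5$ (for $k<5$ the process is still executing $\tick^{5-k}$ and performs no write, so the cooling stays on). In each case I would combine the IH bound on $\statefun^i(\mathit{temp})$ with the increment/decrement above, using the two sensor facts to pin down which branch the controller takes. Property (2) is the most direct of the three, following immediately from ``off $\Rightarrow$ at $\mathit{Ctrl}$'' together with the first sensor fact. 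The base case $i=1$ is trivial ($\mathit{temp}=0$, cooling off, (2) and (3) vacuous).

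The crux, and the step carrying the real content, is establishing $k\le 5$ in property (3) (which is also what guarantees the warning branch is never taken). This is a numeric consequence of the accumulated bounds: at $k=5$ the upper bound is $11+\mathit{err}+\delta-5(1-\delta)=8.5$, strictly below $10-\mathit{err}=9.9$, so by the second sensor fact the read after the fifth cooling tick is deterministically $\le 10$ and the cooling is switched off, ruling out any $k=6$; dually, the lower bound $10-\mathit{err}-5(1+\delta)=2.9>0$ confirms the invariant is never violated during cooling, so all the claimed transitions are genuine and no deadlock intervenes. The remaining work, telescoping the per-step decrements into the $k$-indexed intervals of property (3) (so that on$\to$on sends $k$ to $k{+}1\le 5$), is routine once the offset and the control-state invariant are fixed.
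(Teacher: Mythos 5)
Your proposal is correct and follows essentially the same route as the paper's proof: induction on the number of time slots, case analysis on the on/off status of the coolant in consecutive slots, the observation that a sensed value determines the controller's branch deterministically outside the band $(10-\mathit{err},10+\mathit{err}]$, and the arithmetic fact that after five cooling slots the temperature is at most $11+\mathit{err}+\delta-5(1-\delta)=8.5<10-\mathit{err}$, which forces $k\le 5$. The only difference is that you make explicit (as a strengthened induction hypothesis) the control-state invariant and the one-slot offset between the recorded actuator value and the one governing the tick, which the paper uses implicitly.
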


\begin{proof}
Let  us denote with  $v_i$  the values  of
the state variable $\mathit{temp}$ in the systems 
 ${M_i}$, i.e., $\statefun^i{}  (\mathit{temp})=v_i $.
Moreover  we will say that  the coolant is active (resp., is not active) in  ${M_i}$ if $\actuatorfun^i{}(\mathit{cool})=\on$
(resp., $\actuatorfun^i{}(\mathit{cool})=\off$).

The proof is  by mathematical induction on $n$, i.e., the 
number of $\tick$-actions of our traces. 

The case base $n=1$ follows directly from the definition of $\mathit{Eng}_g$. 
Let prove the inductive case. 
We assume that the three statements holds for $n-1$ and we prove that they  
also hold for $n$.
\begin{enumerate}[noitemsep]

\item 
Let us assume that the cooling  is not active  in ${M_{n}}$, 
then we prove that $v_n \in [0, 11+\mathit{err} +\delta ]$. 
We consider separately the cases in which  the coolant is active or not in ${M_{n-1}}$.

\begin{itemize}[noitemsep]

\item 
Suppose the coolant is not active  in ${M_{n{-}1}}$ (and inactive in  ${M_{n}}$).

By the inductive hypothesis we have $v_{n-1} \in [0, 11+\mathit{err}  +\delta ]$. 
Since we know that in ${M_n}$ the cooling is not active, it follows that 
$v_{n-1} \in [0, 10+\mathit{err} ]$, the reason being that 
$v_{n-1} \in (10+\mathit{err}  , 11+\epsilon +\delta ]$ and the inductive hypothesis would imply that the coolant is active in ${M_{n}}$.
Furthermore, in ${M_{n}}$  the temperature
will increase of a value laying in the interval $[1-\delta,1+\delta]_g=[0.6,1.4]_g$. Thus $v_{n}$ will be  in 
$ [0.6, 11+\mathit{err}  +\delta ]\subseteq[0, 11+\mathit{err}  +\delta ]$.

\item 
Suppose the coolant is active  in ${M_{n{-}1}}$ (and  inactive in  ${M_{n}}$).

By the inductive hypothesis we have  $v_{n-1} \in (
10-\mathit{err}  -k *(1+\delta) , 11+\mathit{err}  +\delta -k*(1-\delta)] $ for some $k \in 1\dots 5$ such that the coolant is
not active in ${M_{n{-}1{-}k}}$ and is active in all ${M_{n{-}k}},
\ldots, {M_{n-1}}$.

The case $k \in \{1,\ldots,4\}$  is not admissible, the reason being that $k \in \{1,\ldots,4\}$ together with the fact that the coolant is inactive in  ${M_{n}}$ would imply that the coolant bas been kept active for less than 5 steps, which cannot happen.

Hence it must be $k=5$. Since $\delta=0.4$, $\mathit{err} =0.1$  and $k=5$, it holds that $v_{n-1 }\in (10-0.1 -5*1.4, 11+ 0.1 +0.4 -5*0.6]=(2.8, 8.6] $. Moreover, since
the coolant is active for $5$ $\tick$ actions, the controller of 
${M_{n{-}1}}$ checks the
temperature.  
However, since $v_{n-1} \in (2.8, 8.6]$ then the coolant is turned off. 
Thus, in the next time slot,  the temperature
will increase of a value in $[1-\delta,1+\delta]_g=[0.6,1.4]_g$. As
a consequence in ${M_{n}}$  we will have $v_{n}
\in [2.8+0,6, 8.6+1.4]=[3.4,10] \subseteq [0, 11+\mathit{err}  +\delta ]$.

\end{itemize}

\item 
Let us assume that the coolant  is not active  in ${M_{n}}$ and $v_n \in (10+\mathit{err} , 11+\mathit{err}  +\delta ]$, then  we prove that  the coolant is active  in ${M_{n+1}}$.
Since the coolant is not active in
${M_{n}}$ then it will check the
temperature before the next time slot. 
Since $v_n \in (10+\mathit{err}  , 11+\mathit{err}  +\delta ]$ and $\mathit{err} =0.1$, then the
process $\mathit{Ctrl}$ will sense a temperature greater than $10$ and 
the coolant will be turned on. Thus the coolant will be active in
${M_{n{+}1}}$. 

\item 
Let us assume that the coolant is active in
${M_{n}}$, then  we prove that  $v_{n} \in ( 10-\mathit{err}  -k *(1+\delta), 11+\mathit{err}  +\delta -k*(1-\delta)] $ for some $k \in 1\ldots 5$
 and  the coolant is not active in ${M_{n{-}k}}$ and  active
in all ${M_{n-k+1}}, \ldots, {M_{n}}$.

We separate the case in which the coolant is active in ${M_{n{-}1}}$ from that in which is not active. 

\begin{itemize}[noitemsep]

\item 
Suppose the coolant is not active  in ${M_{n{-}1}}$ (and active in ${M_{n}}$).

In this case $k=1$ as the coolant is not
active in ${M_{n-1}}$ and it is active in ${M_{n}}$. 
Since  $k=1$, we have to prove $v_n \in (10-\mathit{err}  -(1+\delta), 11+\mathit{err}  +\delta-(1-\delta)]$.

However, since the coolant is not
active in ${M_{n-1}}$ and is active in ${M_{n}}$ it means that the coolant has been switched on in ${M_{n-1}}$ because the sensed temperature  was above $10$ (this may happen
 only if $v_{n-1} > 10-\mathit{err} $).
By inductive hypothesis, since  the coolant is not active  in ${M_{n-1}}$, we have that
$v_{n-1} \in [0, 11+\mathit{err}  +\delta ]$.
Therefore, from $v_{n-1} > 10-\mathit{err}$  and 
$v_{n-1} \in [0, 11+\mathit{err}  +\delta ]$ it follows that  $v_{n-1} \in (10-\mathit{err}  , 11+\mathit{err} +\delta ]$. 
Furthermore, 
since the coolant is active in ${M_{n}}$, the temperature will
decrease of a value in $[1-\delta,1+\delta]_g$ and therefore
$v_n \in (10-\mathit{err}  -(1+\delta), 11+\mathit{err}  +\delta-(1-\delta)]$  which concludes this case of the proof.

\item 
Suppose the coolant is active in ${M_{n{-}1}}$ (and active in ${M_{n}}$ as well).

By inductive hypothesis there is $h \in 1\ldots5$ such that  $v_{n-1} \in (
10-\mathit{err}  -h *(1+\delta) , 11+\mathit{err}  +\delta -h*(1-\delta)] $ and  the coolant is
not active in ${M_{n{-}1{-}h}}$ and is active in ${M_{n{-}h}},
\ldots, {M_{n{-}1}}$.

The case   $h=5$ is not admissible. In fact, since $\delta=0.4$ and $\mathit{err} =0.1$,
 if $h=5$ then  
 $v_{n-1 }\in (10-0.1 -5*1.4, 11+ 0.1 +\delta -5*0.6]=(2.8, 8.6] $. Furthermore, since
the coolant is already active since $5$ $\tick$ actions, the controller of 
 ${M_{n{-}1}}$ is supposed to check the
temperature. As  $v_{n-1 }\in (2.8, 8.6] $ the coolant 
should be turned off. 
In contradiction  with the the fact that  the coolant is   active  in ${M_{n }}$.

Hence it must be 
$h \in 1 \ldots 4$. Let us prove that for $k=h+1$ we obtain  our result. 
Namely we have to prove  that, for $k=h+1$,  (i)   $v_{n} \in ( 10-\mathit{err}  -k *(1+\delta), 11+\mathit{err}  +\delta -k*(1-\delta)] $,   and (ii)
 the coolant is not active in ${M_{n{-}k}}$ and  active
in all ${M_{n-k+1}}, \dots, {M_{n}}$.

Let us prove the  statement (i). By inductive hypotheses, it holds that  
$v_{n-1} \in ( 10-\mathit{err}  -h *(1+\delta) , 11+\mathit{err}  +\delta -h*(1-\delta)] $.
Since the coolant is active in  ${M_{n}}$ then the temperature will decrease. 
Hence,  
$v_{n }   \in ( 10-\mathit{err}  -(h+1) *(1+\delta) , 11+\mathit{err}  +\delta -(h+1)*(1-\delta)]   $. 
Therefore, since  $k=h+1$, we have that 
$v_{n} \in ( 10-\mathit{err}  -k *(1+\delta) , 11+\mathit{err}  +\delta -k*(1-\delta)] $.

Let us prove the statement (ii).
By inductive hypothesis  the coolant is
inactive in ${M_{n-1-h}}$ and  it is active in all ${M_{n-h}},
\ldots, {M_{n-1}}$. 
Now, since the coolant is active in  ${M_{n}}$, for  $k=h+1$, we have that  the coolant is 
not active in ${M_{n-k}}$ and is active in all ${M_{n-k+1}},
\ldots, {M_{n}}$ which concludes this case of the proof.

\end{itemize}

\end{enumerate}

\end{proof}

\noindent 
\textbf{Proof of \autoref{prop:sys}} \hspace{0.2 cm}
By the first two items of \autoref{lem:sys} and since $\delta=0.4$ and $\mathit{err} =0.1$, we infer that the value of the state variable $\mathit{temp}$ is always in the real interval $[0, 11.5]$. 
As a consequence, the invariant of the system is never violated and the system never deadlocks.
Then, the last item of \autoref{lem:sys} ensures that after $5$ $\tick$-actions happening when the coolant is active, the state variable $\mathit{temp}$ is always in the real interval  $( 10-0.1-5 *1.4 , 11+0.1+0.4-5*0.6]=(2.9, 8.5]$. 
Hence the process $\mathit{Ctrl}$ will never transmit on the channel $\mathit{warning}$.
\qed
\\

\noindent 
\textbf{Proof of \autoref{prop:X}}  \hspace{0.2 cm}
Let us prove the two statements separately. 
\begin{itemize}
\item   
If process $\mathit{Ctrl}$ senses a
temperature above $10$  (and hence $\mathit{Eng}$ turns on the cooling) 
then the value of the state variable
$\mathit{temp}$ is greater than $10-\mathit{err} $. 
By  \autoref{lem:sys}
 the value of the state variable $\mathit{temp}$ is always less or equal than
$11+\mathit{err}  +\delta $. 
Therefore, if $\mathit{Ctrl}$ senses a temperature above $10$,
then the value of the state variable $\mathit{temp}$ is in $(10-\mathit{err}  ,11+\mathit{err}  +\delta ] = (9.9,11.5]$.

\item
By   \autoref{lem:sys}  (third item) the coolant can
 be active for no more than $5$ time slots.
Hence,  by  \autoref{lem:sys}, when 
$\mathit{Eng}$ turns off the cooling system 
the state variable $\mathit{temp}$ ranges over  $( 10-\mathit{err}  -5 *(1+\delta) , 11+\mathit{err}  +\delta-5*(1-\delta)] = (2.9,8.5]$.
\end{itemize}
\qed
\\

\noindent 
\textbf{Proof of \autoref{prop:stop}}  \hspace{0.2 cm}
It is is enough to prove that there exists an execution trace of the engine $\widehat{\mathit{Eng}}_g$  containing an  output along channel $\mathit{warning}$.
Then the  result follows by an application of \autoref{prop:sys}. 

We prove the thesis for $g=1$. Indeed a trace of $\widehat{\mathit{Eng}}_g$ with $g=1$ is  a trace of 
$\widehat{\mathit{Eng}}_{g'}$ with $g' \ge g$.

We can easily build up a trace for $\widehat{\mathit{Eng}_g}$ with $g=1$ in which, after $10$ $\tick$-actions, in the $11$-th time slot,  the value of the state variable  $\mathit{temp}$  is  $10.1$. 
In fact, it is enough to increase the temperature of $1 $ degree
for the first $9$ rounds and an increase   of $1.1 $ degrees
in the  $10$-th time slot. Notice that these are admissible values, since both $1$ and $1.1$ are in $  [  1-\delta,1+\delta ]_g= [  0.6,1.4]_g$ with $g=1$.
Being $10.1 $ the value  of the state variable $\mathit{temp}$, there is an execution trace  
in which the  sensed temperature is $10$ (recall that $\mathit{err}=0.1$ and $-0.1\in [-0.1,0.1]_g$ with $g=1$) and hence 
the cooling system is not activated. 
However, 
in the following time slot, i.e.\ the $12$-th time slot,
 the temperature may reach  the value
$10.1  + 1+\delta=11.5$, imposing the activation of the cooling system. 
After $5$ time units of cooling, in the $17$-th time slot, 
the  variable $\mathit{temp}$ could be    
$11.5 -5 \ast (0.7-\delta)=11.5-1.5=10$.
The sensed temperature would be  in 
the real interval $[9.9 ,10.1 ]_g $ with $g=1$. Thus, there 
is an execution trace in which the sensed temperature is $ 10.1 $. 
As a consequence,  the warning will be emitted, in the $17$-th time slot.
\qed
\\

\subsection{Proofs of \autoref{sec:metric}}  
\label{Sec:A.5}
To prove that all $\metric^n$ are 1-bounded pseudometrics (\autoref{prop:up-to-k-metric}), we need some preliminary results.
First we show that the Kantorovich functional $\Kantorovich$ maps pseudometrics to pseudometrics.

\begin{proposition}
\label{prop_kant_metric}
If $d \colon \nome \times \nome \to [0,1]$ is a 1-bounded pseudometric, then also $\Kantorovich(d) \colon \distr{\nome} \times \distr{\nome}$ is a 1-bounded pseudometric.
\end{proposition}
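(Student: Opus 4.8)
The plan is to verify the three pseudometric axioms directly from the definition of $\Kantorovich$. First I would record boundedness: every matching $\omega \in \Omega(\gamma,\gamma')$ is a probability distribution, and $d \le 1$, so $\sum_{M,M'}\omega(M,M')\,d(M,M') \le \sum_{M,M'}\omega(M,M') = 1$; since the set $\Omega(\gamma,\gamma')$ is nonempty (it contains the product matching) and, by finiteness of supports, the minimum is attained (as already observed after \autoref{def:Kantorovich}), the value $\Kantorovich(d)(\gamma,\gamma')$ is well defined and lies in $[0,1]$.

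For the identity axiom I would exhibit the diagonal matching $\omega_\Delta$ given by $\omega_\Delta(M,M') = \gamma(M)$ if $M = M'$ and $0$ otherwise. One checks immediately that $\omega_\Delta \in \Omega(\gamma,\gamma)$, and its cost is $\sum_{M}\gamma(M)\,d(M,M) = 0$ because $d(M,M)=0$; as $\Kantorovich(d)(\gamma,\gamma)$ is a minimum of nonnegative costs, it must equal $0$. For symmetry I would use that transposition $\omega \mapsto \omega^{\mathrm{T}}$, with $\omega^{\mathrm{T}}(M,M') = \omega(M',M)$, is a bijection between $\Omega(\gamma,\gamma')$ and $\Omega(\gamma',\gamma)$ which preserves the cost, using $d(M,M') = d(M',M)$; hence the two minima coincide.

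The triangle inequality is the substantive step, and I would prove it by a gluing (composition of couplings) argument, which is the part requiring care. Fix matchings $\omega_1 \in \Omega(\gamma,\gamma')$ and $\omega_2 \in \Omega(\gamma',\gamma'')$ realising $\Kantorovich(d)(\gamma,\gamma')$ and $\Kantorovich(d)(\gamma',\gamma'')$, respectively. Define a distribution $\Theta$ over $\nome \times \nome \times \nome$ by $\Theta(M,M',M'') = \omega_1(M,M')\,\omega_2(M',M'')/\gamma'(M')$ for $M' \in \support(\gamma')$, and $0$ otherwise, and set $\omega(M,M'') = \sum_{M'}\Theta(M,M',M'')$. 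Using $\sum_{M''}\omega_2(M',M'') = \gamma'(M')$ and $\sum_{M}\omega_1(M,M') = \gamma'(M')$, one verifies that the $(M,M')$- and $(M',M'')$-marginals of $\Theta$ are exactly $\omega_1$ and $\omega_2$; consequently the marginals of $\omega$ are $\gamma$ and $\gamma''$, so $\omega \in \Omega(\gamma,\gamma'')$. Applying $d(M,M'') \le d(M,M') + d(M',M'')$ inside $\sum_{M,M',M''}\Theta(M,M',M'')\,d(M,M'')$ and splitting the sum yields exactly $\Kantorovich(d)(\gamma,\gamma') + \Kantorovich(d)(\gamma',\gamma'')$, which bounds the cost of $\omega$ and hence bounds the minimum $\Kantorovich(d)(\gamma,\gamma'')$.

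The only subtlety worth flagging is the division by $\gamma'(M')$, which is why $\Theta$ is restricted to $\support(\gamma')$: any $M' \notin \support(\gamma')$ carries zero mass in both $\omega_1$ and $\omega_2$, so nothing is lost and no ill-defined quotient arises. Finiteness of the supports guarantees that all sums involved are finite and that the minima are attained, so the argument is purely elementary once the gluing distribution $\Theta$ is in place.
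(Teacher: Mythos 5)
Your proposal is correct and follows essentially the same route as the paper: the diagonal matching for the identity axiom, transposition for symmetry, and for the triangle inequality the composed matching $\omega(M,M'') = \sum_{M' \in \support(\gamma')} \omega_1(M,M')\,\omega_2(M',M'')/\gamma'(M')$, which is exactly the paper's construction (your intermediate gluing distribution $\Theta$ is just a more explicit packaging of it). The marginal checks and the splitting of the cost via $d(M,M'') \le d(M,M') + d(M',M'')$ match the paper's argument step for step.
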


\begin{proof}
To show $\Kantorovich(d)(\gamma,\gamma) = 0$ for all $\gamma \in \distr{\nome}$ it is enough to take the matching $\omega \in \Omega(\gamma,\gamma)$ defined by $\omega(M,M) = \gamma(M)$, for all $M \in \nome$, and $\omega(M,N) = 0$, for all $M,N \in \nome$ with $M \neq N$. 
In fact, we have $\Kantorovich(d)(\gamma,\gamma) \le
\sum_{M,N \in \nome}\omega(M,N) \cdot d(M,N) = \sum_{M \in \nome} \gamma(M) \cdot d(M,M) =0$.

The symmetry $\Kantorovich(d)(\gamma,\gamma') = \Kantorovich(d)(\gamma',\gamma)$ for all $\gamma,\gamma' \in \distr{\nome}$ follows directly by the fact that if we take two functions $\omega,\omega' \colon \nome \times \nome \to [0,1]$ such that $\omega(M,N) = \omega'(N,M)$ for all $M,N \in \nome$, then $\omega \in \Omega(\gamma,\gamma')$ if and only if $\omega' \in \Omega(\gamma',\gamma)$.

To prove the triangle inequality $\Kantorovich(d)(\gamma_1,\gamma_2) \le \Kantorovich(d)(\gamma_1,\gamma_3) + \Kantorovich(d)(\gamma_3,\gamma_2)$ for all $\gamma_1,\gamma_2,\gamma_3 \in \distr{\nome}$, first we consider
the function $\omega \colon \nome \times \nome \to [0,1]$ defined for all $M_1,M_2 \in \nome$ as $\omega(M_1,M_2) = \sum_{M_3 \in \nome \mid \gamma_3(M_3) \neq 0} \frac{\omega_1(M_1,M_3) \cdot \omega_2(M_3,M_2)}{\gamma_3(M_3)}$, where the function $\omega_1 \in \Omega(\gamma_1,\gamma_3)$ is one of the optimal matchings realising $\Kantorovich(d)(\gamma_1,\gamma_3)$ and $\omega_2 \in \Omega(\gamma_3,\gamma_2)$ one of the optimal matchings realising $\Kantorovich(d)(\gamma_3,\gamma_2)$.
Then, we prove that
\begin{inparaenum}[(i)]
\item \label{Kant_triang_uno}
$\omega$ is a matching in $\Omega(\gamma_1,\gamma_2)$, and
\item \label{Kant_triang_due}
$\sum_{M_1,M_2 \in \nome} \omega(M_1,M_2) \cdot d(M_1,M_2) \le \Kantorovich(d)(\gamma_1,\gamma_3)  + \Kantorovich(d)(\gamma_3,\gamma_2)$, which immediately implies $\Kantorovich(d)(\gamma_1,\gamma_2) \le \Kantorovich(d)(\gamma_1,\gamma_3)  + \Kantorovich(d)(\gamma_3,\gamma_2)$.
\end{inparaenum}
To show (\ref{Kant_triang_uno}) we prove that the left marginal of $\omega$ is $\gamma_1$ by
\[
\begin{array}{rlr}
& 
\sum_{M_2 \in \nome} \omega(M_1,M_2)
\\
= \quad & \sum_{M_2 \in \nome}  \sum_{M_3 \in \nome \mid \gamma_3(M_3) \neq 0} \frac{\omega_1(M_1,M_3) \cdot \omega_2(M_3,M_2)}{\gamma_3(M_3)}
\\[1.7 ex]
= \quad & \sum_{M_3 \in \nome \mid \gamma_3(M_3) \neq 0} \frac{\omega_1(M_1,M_3) \cdot \gamma_3(M_3)}{\gamma_3(M_3)} & \text{(by $\omega_2 \in \Omega(\gamma_3,\gamma_2)$)}
\\[1.7 ex]
= \quad & \sum_{M_3 \in \nome \mid \gamma_3(M_3) \neq 0} \omega_1(M_1,M_3) 
\\[1.7 ex]
= \quad & \gamma_1(M_1) & \text{(by $\omega_1 \in \Omega(\gamma_1,\gamma_3)$)}
\end{array}
\]
and we observe that the proof that the right marginal of $\omega$ is $\gamma_2$ is analogous.
Then, we show (\ref{Kant_triang_due}) by
\[
\begin{array}{rlr}
&  \sum_{M_1,M_2 \in \nome} \omega(M_1,M_2) \cdot d(M_1,M_2)
\\ 
=  \quad &\sum_{M_1,M_2 \in \nome} \sum_{M_3 \in \nome \mid \gamma_3(M_3) \neq 0} \frac{\omega_1(M_1,M_3) \cdot \omega_2(M_3,M_2)}{\gamma_3(M_3)} \cdot d(M_1,M_2)
\\[1.7 ex]
\le   \quad &  \sum_{M_1,M_2 \in \nome,M_3 \in \nome \mid \gamma_3(M_3) \neq 0} \frac{\omega_1(M_1,M_3) \cdot \omega_2(M_3,M_2)}{\gamma_3(M_3)} \cdot d(M_1,M_3)  \; + 
\\[1.7 ex]
 \quad  &  \sum_{M_1,M_2 \in \nome, M_3 \in \nome \mid \gamma_3(M_3) \neq 0} \frac{\omega_1(M_1,M_3) \cdot \omega_2(M_3,M_2)}{\gamma_3(M_3)} \cdot d(M_3,M_2) 
\\[1.7 ex]
=  \quad &  \sum_{M_1,M_3 \in \nome} \frac{\omega_1(M_1,M_3) \cdot \gamma_3(M_3)}{\gamma_3(M_3)} \cdot  d(M_1,M_3)  +
   \sum_{M_2,M_3 \in \nome} \frac{\gamma_3(M_3) \cdot \omega_2(M_3,M_2)}{\gamma_3(M_3)}  \cdot d(M_3,M_2)
\\[1.7 ex]
=  \quad &  \sum_{M_1,M_3 \in \nome} \omega_1(M_1,M_3) \cdot d(M_1,M_3)  +
   \sum_{M_2,M_3 \in \nome} \omega_2(M_3,M_2)  \cdot d(M_3,M_2)
\\[1.7 ex]
=  \quad &  \Kantorovich(d)(\gamma_1,\gamma_3)  + \Kantorovich(d)(\gamma_3,\gamma_2) 
\end{array}
\]
where the inequality follows from the triangular property of $d$ and the third last equality follows by $\omega_2 \in \Omega(\gamma_3,\gamma_2)$ and $\omega_1 \in \Omega(\gamma_1,\gamma_2)$.
\end{proof}

Now we show that, given any weak bisimulation metric $d$ with $d(M,N) < 1$, then $N$ can mimic weak transitions $M \TransS[\hat{\alpha}]$ besides those of the form $M \transS[\alpha]$.

\begin{lemma} 
\label{lemma_sim_weak_transitions}
Assume a weak bisimulation metric $d$ and $M,N \in \nome$ with $d(M,N) <1$.
If $M \TransS[\hat \alpha] \gamma_M$, then there is a transition $N \TransS[\hat \alpha] \gamma_N$ such that $ \Kantorovich(d)(\gamma_M + (1-\size{\gamma_M}) \dirac{\dummyN}, \gamma_N + (1-\size{\gamma_N}) \dirac{\dummyN}) \le d(M,N)$.
\end{lemma}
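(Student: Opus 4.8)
The plan is to reduce the composite weak transition $M \TransS[\hat\alpha] \gamma_M$ to the single strong steps governed by \autoref{def:simulation_metric}, and then reassemble the response of $N$ by exploiting the convexity of the Kantorovich lifting. Throughout I abbreviate the $\dummyN$-completion $\gamma + (1{-}\size{\gamma})\dirac{\dummyN}$ of a sub-distribution $\gamma$ by $\gamma^{\dummyN}$; note that $\gamma^{\dummyN}$ is always a full distribution and that $\Kantorovich(d)((\dirac{M})^{\dummyN}, (\dirac{N})^{\dummyN}) = \Kantorovich(d)(\dirac{M}, \dirac{N}) = d(M,N)$, since $\size{\dirac{M}}=\size{\dirac{N}}=1$ and $d(\dummyN,\dummyN)=0$. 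The point of the completion is exactly to absorb, as $\dummyN$-mass, whatever probability gets stuck because some \CPS{s} in a support cannot mimic the action at hand.

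The technical core is a transfer lemma at the level of (sub-)distributions: if $\Kantorovich(d)(\gamma_1^{\dummyN}, \gamma_2^{\dummyN}) \le r$ and $\gamma_1 \transS[\hat\beta] \gamma_1'$ is a single lifted weak step (with $\beta \in \{\tau,\alpha\}$), then there is a weak transition $\gamma_2 \TransS[\hat\beta] \gamma_2'$ with $\Kantorovich(d)((\gamma_1')^{\dummyN}, (\gamma_2')^{\dummyN}) \le r$. To prove it I would fix an \emph{optimal} matching $\omega \in \Omega(\gamma_1^{\dummyN},\gamma_2^{\dummyN})$, so that $\sum_{M',N'}\omega(M',N')\,d(M',N') = \Kantorovich(d)(\gamma_1^{\dummyN},\gamma_2^{\dummyN}) \le r$, and split the step along $\omega$, treating each pair $(M',N')$ in its support separately. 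For a live pair with $d(M',N')<1$ in which $M'$ performs the step $M' \transS[\hat\beta] \gamma_{M'}$, I invoke \autoref{def:simulation_metric} to obtain $N' \TransS[\hat\beta] \gamma_{N'}$ with $\Kantorovich(d)(\gamma_{M'}^{\dummyN}, \gamma_{N'}^{\dummyN}) \le d(M',N')$. For the remaining pairs — where $M'$ is dead or stuck, or where $d(M',N')=1$ — the matched $N'$-mass is carried along to the $\dummyN$-completion, the induced local distance being at most $d(M',N')$ by $1$-boundedness of $d$ together with $d(\dummyN,\dummyN)=0$. Assembling $\gamma_2' := \sum_{M',N'}\omega(M',N') \cdot \gamma_{N'}$ as the matching convex combination yields a legitimate weak transition $\gamma_2 \TransS[\hat\beta] \gamma_2'$ (using linearity of $\TransS[\hat\beta]$ under convex combinations), and the convexity of the Kantorovich lifting gives $\Kantorovich(d)((\gamma_1')^{\dummyN},(\gamma_2')^{\dummyN}) \le \sum_{M',N'}\omega(M',N')\,\Kantorovich(d)(\gamma_{M'}^{\dummyN},\gamma_{N'}^{\dummyN}) \le \sum_{M',N'}\omega(M',N')\,d(M',N') \le r$.

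With the transfer lemma available, the statement follows by iteration, crucially using that each application \emph{preserves} the bound $r$ rather than accumulating it — the feature that distinguishes bisimulation metrics from trace metrics. Starting from $(\dirac{M})^{\dummyN}$ and $(\dirac{N})^{\dummyN}$ at distance $d(M,N) =: r$: for $\alpha = \tau$ I decompose $M \TransS[\hat\tau] \gamma_M$ as a finite chain $\dirac{M} = \delta_0 \transS[\hat\tau] \delta_1 \transS[\hat\tau] \cdots \transS[\hat\tau] \delta_k = \gamma_M$ and apply the transfer lemma $k$ times to build $N \TransS[\hat\tau] \gamma_N$ with $\Kantorovich(d)(\gamma_M^{\dummyN}, \gamma_N^{\dummyN}) \le r = d(M,N)$; for $\alpha \neq \tau$ I decompose $M \TransS[\hat\alpha]\gamma_M$ as $M \TransS[\hat\tau] \transS[\hat\alpha] \TransS[\hat\tau] \gamma_M$ and apply the transfer lemma to each of the three phases (the middle one being the single $\hat\alpha$-step), again holding the bound at $r$ throughout, and compose the three responses of $N$ into the required $N \TransS[\hat\alpha] \gamma_N$.

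The main obstacle I anticipate lies inside the transfer lemma, in the bookkeeping of the split along $\omega$: one must consistently partition a lifted weak step $\gamma_1 \transS[\hat\beta] \gamma_1'$ over a matching whose support simultaneously contains genuinely moving components, stuck components (whose mass migrates to $\dummyN$ in $(\gamma_1')^{\dummyN}$), pairs at distance $1$, and the artificial $\dummyN$-mass, and then verify that the recombined family is indeed a valid weak transition of the whole sub-distribution with correctly aligned masses on the two sides — the genuinely delicate point being the interaction of the must-move requirement of $\TransS[\hat\alpha]$ (for $\alpha\neq\tau$) with dead or stuck left-components, which I expect to resolve by absorbing the ``staying'' behaviour into the flanking $\TransS[\hat\tau]$ phases. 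The second ingredient to pin down is the convexity inequality $\Kantorovich(d)(\sum_i p_i \gamma_i, \sum_i p_i \gamma_i') \le \sum_i p_i\, \Kantorovich(d)(\gamma_i, \gamma_i')$, which I would establish (as is standard, in the spirit of \cite{Pan09}) by observing that if $\omega_i$ is optimal for $(\gamma_i,\gamma_i')$ then $\sum_i p_i \omega_i$ is a feasible matching for the combined pair realising the weighted sum.
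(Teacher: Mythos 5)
Your proposal is correct and follows essentially the same route as the paper's proof: the paper argues by induction on the length of the derivation $M \TransS[\hat\alpha] \gamma_M$, and its inductive step is precisely your transfer lemma --- it fixes an optimal matching $\omega$ for the intermediate pair, splits the single lifted step along $\omega$ into moving pairs (handled by the defining clause of \autoref{def:simulation_metric}), stuck components (absorbed as $\dummyN$-mass), and pairs at distance $1$, and recombines the responses into an explicit coupling $\omega'$ whose marginals and cost are checked by hand, which is exactly the concrete form of your appeal to the convexity of $\Kantorovich$. The only presentational differences are that the paper inlines the transfer argument into the induction instead of isolating it as a separate lemma, and that it notes $\size{\rho_M}=\size{\rho_N}=1$ after the initial $\hat\tau$-phase so the $\dummyN$-completions only matter at the last step.
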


\begin{proof}
We proceed by induction on the length $n$ of $M \TransS[\hat \alpha] \gamma_M$.

\underline{Base case $n=1$}. We have two sub-cases: 
The first is $\alpha = \tau$ and $\gamma_M = \dirac{M}$, the second is $M \transS[\alpha] \gamma_M$.  
In the first case, by definition of $\TransS[\widehat \tau]$ we have $N \TransS[\widehat \tau] \dirac{N}$ and
the thesis holds for $\gamma_N = \dirac{N}$ by observing that $\Kantorovich(d)(\dirac{M} + (1-\size{\dirac{M}})\dirac{\dummyN}),\dirac{N} + (1-\size{\dirac{N}})\dirac{\dummyN}) = \Kantorovich(d)(\dirac{M},\dirac{N}) = d(M,N)$.
In the second case, the thesis follows directly by the definition of weak simulation metric.

\underline{Inductive step $n>1$}.
The derivation $M \TransS[\hat \alpha] \gamma_M$ is obtained by $M \TransS[\hat \beta_1] \rho_M$ and $\rho_M \transS[\hat{\beta}_2] \gamma_M$, for some distribution $\rho_M \in\distr{\nome}$.
The length of the derivation $M \TransS[\hat \beta_1] \rho_M$ is $n-1$ and hence, by the inductive hypothesis, 
there is a transition $N \TransS[\hat \beta_1] \rho_N$ such that $\Kantorovich(d)(\rho_M + (1-\size{\rho_M}) \dirac{\dummyN},\rho_N+ (1-\size{\rho_N}) \dirac{\dummyN}) \le d(M,N)$.
The sub-distributions $\rho_M$ and $\rho_N$ are of the form $\rho_M = \sum_{i \in I}p_i \cdot \dirac{M_i}$ and $\rho_N = \sum_{j \in J}q_j \cdot \dirac{N_j}$.
We have two sub-cases: The first is $\beta_1=\tau$ and $\beta_2=\alpha$, the other $\beta_1=\alpha$ and $\beta_2=\tau$.

We consider the case $\beta_1=\tau$ and $\beta_2=\alpha$, the other is analogous.
In this case we have $\size{\rho_M} = \size{\rho_N} = 1$ and $\Kantorovich(d)(\rho_M ,\rho_N) \le d(M,N)$.
The transition $\rho_M \transS[\hat{\beta}_2] \gamma_M$ is derived from a $\beta_2$-transition by some of the \CPS{s} $M_i$, namely $I$ is partitioned into sets $I_1 \cup I_2$ such that for all $i \in I_1$ we have $M_i \transS[\beta_2] \gamma_i$ for suitable distributions $\gamma_i$, for each $i \in I_2$ we have $M_i \ntransS[\beta_2]$, and $\rho_M = \sum_{i \in I_1} p_i \cdot \gamma_i$.
Analogously, $J$ is partitioned into sets $J_1 \cup J_2$ such that for all $j \in J_1$ we have $N_j \TransS[\hat{\beta}_2] \gamma_j$ for suitable distributions $\gamma_j$ and for each $j \in J_2$ we have $N_j \nTransS[\hat{\beta}_2]$. 
This gives $\rho_N \TransS[\hat{\beta}_2] \gamma_N$ with $\gamma_N = \sum_{j \in J_1} q_j \cdot \gamma_j$.
Since we had $N \TransS[\hat{\beta}_1] \rho_N$, we can conclude $N \TransS[\hat{\alpha}] \gamma_N$.
In the following we prove that the transitions $N_j \TransS[\hat{\beta}_2] \gamma_j$ can be chosen so that $\Kantorovich(d)(\gamma_M + (1-\size{\gamma_M}) \dummyN,\gamma_N + (1-\size{\gamma_N}) \dummyN) \le d(M,N)$, which concludes the proof.

Let $\omega$ be one of the optimal matchings realising $\Kantorovich(d)(\rho_M ,\rho_N)$.
We can rewrite the distributions $\rho_M$ and $\rho_N$ as 
$\rho_M = \sum_{i \in I, j \in J} \omega(M_i,N_j) \cdot  \dirac{M_i}$ and 
$\rho_N = \sum_{i \in I, j \in J} \omega(M_i,N_j)  \cdot \dirac{N_j}$.
For all $i \in I_1$ and $j \in J$, define $\gamma_{i,j} = \gamma_i$.
We can rewrite $\gamma_M$ as $\gamma_M = \sum_{i \in I_1,j\in J} \omega(M_i,N_j) \cdot \gamma_{i,j}$.
Analogously, for each $j \in J_1$ and $i \in I$ we note that the transition $q_j \cdot \dirac{N_j} \TransS[\hat{\beta}_2] \gamma_{j}$ can always be splitted into $\sum_{i \in I} \omega(M_i,N_j) \cdot \dirac{N_j} \TransS[\hat{\beta}_2] \sum_{i \in I} \omega(M_i,N_j) \cdot \gamma'_{i,j}$
so that we can rewrite $\gamma_j$ as $\gamma_j = \sum_{i \in I}\omega(M_i,N_j)\cdot  \gamma'_{i,j}$ and $\gamma_N$ as
$\gamma_N = \sum_{i\in I, j \in J_1} \omega(M_i,N_j) \cdot \gamma'_{i,j}$. 
Then we note that for all $i \in I_1$ and $j \in J_1$ with $d(M_i,N_j) < 1$, the transition $N_j \TransS[\hat{\beta}_2] \gamma'_{i,j}$ can be chosen so that  
$\Kantorovich(d)(\gamma_{i,j},\gamma'_{i,j} + (1-\size{\gamma'_{i,j}})\dummyN) \le d(M_i,N_j)$.

For all $i \in I_1$ and $j \in J_1$ with $d(M_i,N_j) <1$, let $\omega_{i,j}$ be one of the optimal matchings realising $\Kantorovich(d)(\gamma_{i,j}, \gamma_j + (1-\size{\gamma_j}) \dummyN)$.
Define $\omega' \colon \nome \times \nome \to [0,1]$ as the function such that 
\[ 
\omega'(M',N') = 
\begin{cases} 
\sum_{i \in I_1, j \in J_1}\omega(M_i,N_j) \cdot \omega_{i,j}(M',N') 
& \text{ if } M' \neq \dummyN \neq N'\\
\sum_{i \in I_1, j \in J_1}\omega(M_i,N_j) \cdot \omega_{i,j}(M',N') + \sum_{i \in I_1, j \in J_2}\omega(M_i,N_j) \cdot \gamma_{i,j}(M')
& \text{ if } M' \neq \dummyN = N' \\
\sum_{i \in I_1, j \in J_1}\omega(M_i,N_j) \cdot \omega_{i,j}(M',N') + \sum_{i \in I_2, j \in J_1}\omega(M_i,N_j) \cdot \gamma'_{i,j}(N')
& \text{ if } M' = \dummyN \neq N' \\
\sum_{i \in I_1, j \in J_1}\omega(M_i,N_j) \cdot \omega_{i,j}(M',N') +  \sum_{i \in I_1, j \in J_2}\omega(M_i,N_j) \cdot \gamma_{i,j}(M')  \\
 + \sum_{i \in I_2, j \in J_1}\omega(M_i,N_j) \cdot \gamma'_{i,j}(N')+ \sum_{i \in I_2, j \in J_2}\omega(M_i,N_j) 
& \text{ if } M' = \dummyN = N'. 
\end{cases}
\] 

To infer the proof obligation $\Kantorovich(d)(\gamma_M + (1-\size{\gamma_M}) \dirac{\dummyN},\gamma_N + (1-\size{\gamma_N}) \dirac{\dummyN}) \le d(M,N)$ we show that
\begin{inparaenum}[(i)]
\item \label{matching} $\omega'$ is a matching in $\Omega(\gamma_M + (1-\size{\gamma_M}) \dirac{\dummyN},\gamma_N + (1-\size{\gamma_N}) \dirac{\dummyN})$, and
\item \label{metric_condition} $\sum_{M',N' \in \nome} \omega'(M',N') \cdot d(M',N') \le d(M,N)$.
\end{inparaenum}

To show (\ref{matching}) we prove that the left marginal of $\omega'$ is $\gamma_M + (1-\size{\gamma_M}) \dirac{\dummyN}$.
The proof that the right marginal is $\gamma_N + (1-\size{\gamma_N}) \dirac{\dummyN})$ is analogous.
For any \CPS{} $M' \neq \dummyN$, we have
\[
\begin{array}{rcl}
& & \sum_{N' \in \nome}\omega'(M',N') 
\\[0.5 ex]
= & \quad & 
\sum_{N' \neq \dummyN} \sum_{i \in I_1, j \in J_1}\omega(M_i,N_j) \cdot \omega_{i,j}(M',N') 
+
\sum_{i \in I_1, j \in J_1}\omega(M_i,N_j) \cdot \omega_{i,j}(M',\dummyN)
\\
& \quad & 
+ 
\sum_{i \in I_1, j \in J_2}\omega(M_i,N_j) \cdot \gamma_{i,j}(M') 
\\[0.5 ex]
= & \quad & 
\sum_{i \in I_1, j \in J_1}\omega(M_i,N_j) \sum_{N' \in \nome} \omega_{i,j}(M',N') 
+ 
\sum_{i \in I_1, j \in J_2}\omega(M_i,N_j) \cdot \gamma_{i,j}(M')
\\[0.5 ex]
 = & \quad & 
\sum_{i \in I_1, j \in J_1}\omega(M_i,N_j) \cdot \gamma_{i,j}(M') 
+ 
\sum_{i \in I_1, j \in J_2}\omega(M_i,N_j) \cdot \gamma_{i,j}(M') 
\\[0.5 ex]
= & \quad & 
\sum_{i \in I_1, j \in J}\omega(M_i,N_j) \cdot \gamma_{i,j}(M') 
\\[0.5 ex]
= & \quad & 
(\gamma_M+ (1-\size{\gamma_M}) \dummyN)(M')
\end{array}
\]
with the third equality by the fact that $\omega_{i,j}$ is a matching in $\Omega(\gamma_{i,j},\gamma'_{i,j})$.

Consider now the \CPS{} $\dummyN$. In this case we have that
\[
\begin{array}{rcl}
& & 
\sum_{N' \in \nome}\omega'(\dummyN,N')
\\[0.5 ex] 
= & \quad & 
\sum_{N' \neq \dummyN}\sum_{i \in I_1, j \in J_1}\omega(M_i,N_j) \cdot \omega_{i,j}(\dummyN,N')
+ \sum_{N' \neq \dummyN} \sum_{i \in I_2, j \in J_1}\omega(M_i,N_j) \cdot \gamma'_{i,j}(N')
\\
& \quad & 
+ \sum_{i \in I_1, j \in J_1}\omega(M_i,N_j) \cdot \omega_{i,j}(\dummyN,\dummyN)
+ \sum_{i \in I_1, j \in J_2}\omega(M_i,N_j) \cdot \gamma_{i,j}(\dummyN)
\\
& \quad & 
+ \sum_{i \in I_2, j \in J_1}\omega(M_i,N_j) \cdot \gamma'_{i,j}(\dummyN)
+ \sum_{i \in I_2, j \in J_2}\omega(M_i,N_j)
\\[0.5 ex] 
= & \quad & 
\sum_{N' \in \nome}\sum_{i \in I_1, j \in J_1}\omega(M_i,N_j) \cdot \omega_{i,j}(\dummyN,N') +
\sum_{N' \in \nome} \sum_{i \in I_2, j \in J_1}\omega(M_i,N_j) \cdot \gamma'_{i,j}(N') 
\\
& \quad & 
+ \sum_{i \in I_1, j \in J_2}\omega(M_i,N_j) \cdot \gamma_{i,j}(\dummyN) +
\sum_{i \in I_2, j \in J_2}\omega(M_i,N_j)
\\[0.5 ex]
= & \quad & 
\sum_{i \in I_1, j \in J_1}\omega(M_i,N_j) \cdot \gamma_{i,j}(\dummyN)+
\sum_{i \in I_2, j \in J_1}\omega(M_i,N_j)
\\ 
& \quad & 
+ \sum_{i \in I_1, j \in J_2}\omega(M_i,N_j) \cdot \gamma_{i,j}(\dummyN)
+ \sum_{i \in I_2, j \in J_2}\omega(M_i,N_j)
\\[0.5 ex]
= & \quad &
\sum_{i \in I_1, j \in J}\omega(M_i,N_j) \cdot \gamma_{i,j}(\dummyN) +
\sum_{i \in I_2, j \in J}\omega(M_i,N_j)
\\[0.5 ex]
= & \quad &
(\gamma_M + (1-\size{\gamma_M}) \dirac{\dummyN})(\dummyN) 
\end{array}
\]
where the third equality follows by observing that, being $\omega_{i,j}$ a matching in $\Omega(\gamma_{i,j},\gamma'_{i,j})$, then we have 
$\sum_{N' \in \nome}\sum_{i \in I_1, j \in J_1}\omega(M_i,N_j) \cdot \omega_{i,j}(\dummyN,N')= \sum_{i \in I_1, j \in J_1}\omega(M_i,N_j) \cdot \gamma_{i,j}(\dummyN)$, and being
$\gamma'_{i,j}$ a distribution, then 
$\sum_{N' \in \nome} \sum_{i \in I_2, j \in J_1}\omega(M_i,N_j) \cdot \gamma'_{i,j}(N') = \sum_{i \in I_2, j \in J_1}\omega(M_i,N_j)$, and the last equality follows by
$\sum_{i \in I_1, j \in J}\omega(M_i,N_j) = \sum_{i \in I_1} p_i = \size{\gamma_M}$.

To prove (\ref{metric_condition}), by looking at the definition of $\omega'$ above we get that $\sum_{M',N' \in \nome} \omega'(M',N') \cdot d(M',N')$ is the summation of the following values:
\begin{itemize}
\item
$\sum_{M' \neq \dummyN \neq N'} \sum_{i \in I_1, j \in J_1}\omega(M_i,N_j) \cdot \omega_{i,j}(M',N') \cdot d(M',N')$
\item
$\sum_{ M' \neq \dummyN}\sum_{i \in I_1, j \in J_1}\omega(M_i,N_j) \cdot \omega_{i,j}(M',\dummyN) \cdot d(M',\dummyN) + \sum_{i \in I_1, j \in J_2}\omega(M_i,N_j) \cdot \gamma_{i,j}(M') \cdot d(M',\dummyN)$
\item
$\sum_{N' \neq \dummyN}\sum_{i \in I_1, j \in J_1}\omega(M_i,N_j) \cdot  \omega_{i,j}(\dummyN,N') \cdot d(\dummyN,N') + \sum_{i \in I_2, j \in J_1}\omega(M_i,N_j) \cdot \gamma'_{i,j}(N') \cdot d(\dummyN,N')$
\item
$\sum_{i \in I_1, j \in J_1}\omega(M_i,N_j) \cdot \omega_{i,j}(\dummyN,\dummyN) \cdot d(\dummyN,\dummyN)+  \sum_{i \in I_1, j \in J_2}\omega(M_i,N_j) \cdot \gamma_{i,j}(\dummyN) \cdot d(\dummyN,\dummyN) \\
+ \sum_{i \in I_2, j \in J_1}\omega(M_i,N_j) \cdot \gamma'_{i,j}(\dummyN)\cdot d(\dummyN,\dummyN)+ \sum_{i \in I_2, j \in J_2}\omega(M_i,N_j)\cdot d(\dummyN,\dummyN)$.
\end{itemize}
By moving the first summand of the second, third and fourth items to the first item, we rewrite this summation as the summation of the following values:
\begin{itemize}
\item
$\sum_{M' , N' \in \nome} \sum_{i \in I_1, j \in J_1} \omega(M_i,N_j) \cdot \omega_{i,j}(M',N') \cdot d(M',N')$
\item
$\sum_{i \in I_1, j \in J_2}\omega(M_i,N_j) \cdot \gamma_{i,j}(M') \cdot d(M',\dummyN)$
\item
$\sum_{i \in I_2, j \in J_1}\omega(M_i,N_j) \cdot \gamma'_{i,j}(N') \cdot d(\dummyN,N')$
\item
$\sum_{i \in I_1, j \in J_2}\omega(M_i,N_j) \cdot \gamma_{i,j}(\dummyN) \cdot d(\dummyN,\dummyN)
+ \sum_{i \in I_2, j \in J_1}\omega(M_i,N_j) \cdot \gamma'_{i,j}(N')\cdot d(\dummyN,\dummyN)+ \sum_{i \in I_2, j \in J_2}\omega(M_i,N_j)\cdot d(\dummyN,\dummyN)$.
\end{itemize}
By the definition of $\omega_{i,j}$ the first item is $\sum_{i \in I_1, j \in J_1} \omega(M_i,N_j)\cdot  \Kantorovich(d)(\gamma_{i,j},\gamma'_{i,j})$.
If $d(M_i,N_j) < 1$, we chosen $\gamma'_{i,j}$ such that $\Kantorovich(d)(\gamma_{i,j},\gamma'_{i,j}) \le d(M_i,N_j)$.
If $d(M_i,N_j) = 1$, then $\Kantorovich(d)(\gamma_{i,j},\gamma'_{i,j}) \le d(M_i,N_j)$ is immediate.
Henceforth we are sure that in all cases the first item is less or equal $\sum_{i \in I_1, j \in J_1} \omega(M_i,N_j) \cdot d(M_i,N_j)$.
The second item is clearly less or equal than $\sum_{i \in I_1, j \in J_2}\omega(M_i,N_j)$.
The third item is clearly less or equal than $\sum_{i \in I_2, j \in J_1}\omega(M_i,N_j)$.
Finally, the last item is 0 since $d(\dummyN,\dummyN) = 0$.
Summarising, we have $\sum_{M',N' \in \nome} \omega'(M',N') \cdot d(M',N') \le \sum_{i \in I_1, j \in J_1} \omega(M_i,N_j) \cdot d(M_i,N_j) + \sum_{i \in I_1, j \in J_2}\omega(M_i,N_j)
+ \sum_{i \in I_2, j \in J_1}\omega(M_i,N_j)$.
Since $\Kantorovich(d)(\rho_M ,\rho_N )$ is the summation of the following values:
\begin{itemize}
\item
$\sum_{i \in I_1,j\in J_1} \omega(M_i,N_j) \cdot d(M_i,N_j)$
\item
$\sum_{i \in I_1,j\in J_2} \omega(M_i,N_j) \cdot d(M_i,N_j) = \sum_{i \in I_1,j\in J_2} \omega(M_i,N_j)$ ($M_i \trans{\beta_2}$ and $N_j \not\!\!\TransS[\hat{\beta}_2]$ give $d(M_i,N_j) =1$) 
\item
$\sum_{i \in I_2,j\in J_1} \omega(M_i,N_j) \cdot d(M_i,N_j)  = \sum_{i \in I_2,j\in J_1} \omega(M_i,N_j)$ ($N_j \trans{\beta_2}$ and $M_i \not\!\!\TransS[\hat{\beta}_2]$ give $d(M_i,N_j) =1$) 
\item
$\sum_{i \in I_2,j\in J_2} \omega(M_i,N_j) \cdot d(M_i,N_j)$.
\end{itemize}
it follows $\sum_{i \in I_1, j \in J_1} \omega(M_i,N_j) \cdot d(M_i,N_j) + \sum_{i \in I_1, j \in J_2}\omega(M_i,N_j) + \sum_{i \in I_2, j \in J_1}\omega(M_i,N_j) \le \Kantorovich(d)(\rho_M ,\rho_N )$.
Since we had $\Kantorovich(d)(\rho_M ,\rho_N) \le d(M,N)$ we can conclude $\sum_{M',N' \in \nome} \omega'(M',N') \cdot d(M',N') \le d(M,N)$, as required.
\end{proof}

We are now ready to prove that all $\metric^n$ are pseudometrics.
\\

\noindent
\textbf{Proof of  \autoref{prop:up-to-k-metric}} \hspace{0.2 cm}
We have to prove that $\metric^{n}(M,M) = 0$,  $\metric^{n}(M,N) = \metric^n(N,M)$ and $\metric^{n}(M,N) \le \metric^{n}(M,O)+\metric^{n}(O,N)$ for all $M,N,O \in \nome$.
We reason by induction over $n$. 
The base case $n=0$ is immediate since $\metric^0(M,N) = 0$ for all $M,N \in \nome$.
We consider the inductive step $n+1$.

Let us start with proving $\metric^{n+1}(M,M) = 0$.
We have to show that for each transition $M \transS[\alpha] \gamma$ there is a transition $M \TransS[\hat{\alpha}] \rho$ with 
$ \Kantorovich(\metric^{n})(\gamma, \rho + (1-\size{\rho})\dirac{\dummyN}) = 0$.
We choose $\rho = \gamma$ and the transition $M \transS[\alpha] \gamma$.
We obtain $ \Kantorovich(\metric^{n})(\gamma, \rho + (1-\size{\rho})\dirac{\dummyN})$ = $ \Kantorovich(\metric^{n})(\gamma,\gamma)$ = $0$, with the last equality by the inductive hypothesis and  \autoref{prop_kant_metric}. 

The symmetry $\metric^{n+1}(M,N) = \metric^{n+1}(N,M)$ follows by 
$ \metric^{n+1}(M,N) = \Bisimulation(\metric^{n})(M,N) = \Bisimulation(\metric^{n})(N,M) = \metric^{n+1}(N,M)$, where the second equality follows immediately by the definition of $\Bisimulation$.

Finally we prove the triangular property $\metric^{n+1}(M,N) \le \metric^{n+1}(M,O) +\metric^{n+1}(O,N)$.
This result is immediate if $\metric^{n+1}(M,O) =1$ or $\metric^{n+1}(O,N) =1$.
Otherwise, it is enough to prove that any $M \transS[\alpha] \gamma_M$ is mimicked by some transition $N \TransS[\hat{\alpha}] \gamma_N$ with $ \Kantorovich(\metric^{n})(\gamma_M,\gamma_N + (1-\size{\gamma_N})\dirac{\dummyN}) \le \metric^{n+1}(M,O) + \metric^{n+1}(O,N)$.
From $M \transS[\alpha] \gamma_M$ and $\metric^{n+1}(M,O) < 1$ we immediately infer that there is a transition $O \TransS[\hat{\alpha}] \gamma_O$ with $ \Kantorovich(\metric^{n})(\gamma_M ,\gamma_O + (1-\size{\gamma_O})\dirac{\dummyN}) \le \metric^{n+1}(M,O)$.
By  \autoref{lemma_sim_weak_transitions}, from $O \TransS[\hat{\alpha}] \gamma_O$ and $\metric^{n+1}(O,N) <1$ there is a transition $N \TransS[\hat{\alpha}] \gamma_N$ such that $ \Kantorovich(\metric^{n})(\gamma_O + (1-\size{\gamma_O})\dirac{\dummyN},\gamma_N + (1-\size{\gamma_N})\dirac{\dummyN}) \le \metric^{n+1}(O,N)$.
By the inductive hypothesis and  \autoref{prop_kant_metric} we get that  $\Kantorovich(\metric^{n})$ is a pseudometric, hence it satisfies the triangle inequality, namely $\Kantorovich(\metric^{n})(\gamma_M,\gamma_N+ (1-\size{\gamma_N})\dirac{\dummyN}) \le \Kantorovich(\metric^{n})(\gamma_M, \gamma_O + (1-\size{\gamma_O})\dirac{\dummyN}) + \Kantorovich(\metric^{n})(\gamma_O+ (1-\size{\gamma_O}) \dirac{\dummyN},\gamma_N + (1-\size{\gamma_N})\dirac{\dummyN})$.
Therefore we can conclude the proof by $\Kantorovich(\metric^{n})(\gamma_M,\gamma_N+ (1-\size{\gamma_N})\dirac{\dummyN}) \le  \Kantorovich(\metric^{n})(\gamma_M, \gamma_O + (1-\size{\gamma_O})\dirac{\dummyN}) +  \Kantorovich(\metric^{n})(\gamma_O+ (1-\size{\gamma_O}) \dirac{\dummyN},\gamma_N + (1-\size{\gamma_N})\dirac{\dummyN})
 \le \metric^{n+1}(M,O) + \metric^{n+1}(O,N)$.
\qed
\\


In order to prove the compositionality or our weak bisimilarity metrics, i.e.\ \autoref{thm:congruenceP}, we divide its statement in six different propositions.
To prove that $\approx_p$ preserves the compositionality we need a number of technical lemmas.

Given a distribution $\gamma$ over \CPS{s} and a \CPS{} $O$, we denote with $\gamma \uplus O$ the distribution defined by $(\gamma \uplus O)(M \uplus O) = \gamma(M)$ for all \CPS{s} $M$.

\autoref{lem:aux1P} serves to propagate untimed actions on parallel \CPS{s}.

\begin{lemma} 
\label{lem:aux1P}
Assume two physically disjoint \CPS{}s $M_1$ and $M_2$ such that $M_2 =  E_2; \confCPS {\state_2}  {P_2}$ and 
$E_2 = \envCPS
{\evolmap^2{} }
{\measmap^2{} }   
{\invariantfun^2{} }$.
If $M_1  \trans{\alpha} \gamma$, with $\alpha \neq \tick$, and $\state_2 \in \invariantfun{}^2$ 
then $M_1 \uplus M_2 \trans{\alpha} \gamma \uplus   M_2 $.  
\end{lemma}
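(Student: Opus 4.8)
The plan is to proceed by a case analysis on the last rule of \autoref{tab:lts_systems} used to derive the transition $M_1 \trans{\alpha} \gamma$. Write $M_1 = E_1; \confCPS{\state_1}{P_1}$ with $\state_1 = \stateCPS{\statefun^1{}}{\sensorfun^1{}}{\actuatorfun^1{}}$ and $E_1 = \envCPS{\evolmap^1{}}{\measmap^1{}}{\invariantfun^1{}}$, so that by definition $M_1 \uplus M_2 = (E_1 \uplus E_2); \confCPS{\state_1 \uplus \state_2}{P_1 \parallel P_2}$. Since $\alpha \neq \tick$, the transition can only have been produced by one of the rules \rulename{Out}, \rulename{Inp}, \rulename{Tau}, \rulename{SensRead}, \rulename{ActWrite} (plus the degenerate rule \rulename{Deadlock}). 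Each of the five non-degenerate rules carries a premise of the form $P_1 \trans{\lambda} \pi_1$ for a suitable process label $\lambda \neq \tick$, together with the invariant premise $\statefun^1{} \in \invariantfun^1{}$.

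First I would dispatch two observations common to all cases. The invariant premise for the compound system holds: by the definition of $E_1 \uplus E_2$ we have $\statefun^1{} \uplus \statefun^2{} \in \invariantfun{}$ iff $\statefun^1{} \in \invariantfun^1{}$ and $\statefun^2{} \in \invariantfun^2{}$, and both conjuncts are available (the first from the premise of the rule applied to $M_1$, the second from the hypothesis $\state_2 \in \invariantfun^2{}$). Moreover, since $\lambda \neq \tick$, rule \rulename{Par} lifts the process move to $P_1 \parallel P_2 \trans{\lambda} \pi_1 \parallel \dirac{P_2}$. The strategy in each case is then to re-apply the very same system rule, now to $M_1 \uplus M_2$, and to check that the resulting target distribution coincides with $\gamma \uplus M_2$ by unfolding the definitions of $\confCPS{\sigma}{\pi}$ and of $\pi_1 \parallel \dirac{P_2}$.

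For \rulename{Tau}, \rulename{Out} and \rulename{Inp} this is immediate: applying the rule to $M_1 \uplus M_2$ yields target $(E_1 \uplus E_2); \confCPS{\dirac{\state_1 \uplus \state_2}}{\pi_1 \parallel \dirac{P_2}}$, which, writing $\pi_1 = \sum_{i} p_i \cdot \dirac{P_1^i}$ and unfolding, is exactly $\gamma \uplus M_2$. The two remaining cases exploit the physical disjointness of $M_1$ and $M_2$. For \rulename{ActWrite}, the actuator $a$ written by $P_1$ lies in the domain of $\actuatorfun^1{}$ only, hence $(\actuatorfun^1{} \uplus \actuatorfun^2{})[a \mapsto v] = \actuatorfun^1{}[a \mapsto v] \uplus \actuatorfun^2{}$, so the updated compound state is the disjoint union of the updated $\state_1$ with $\state_2$, and the target is again $\gamma \uplus M_2$. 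For \rulename{SensRead}, the sensor $s$ read by $P_1$ likewise lies only in the domain of $\sensorfun^1{}$, so $(\sensorfun^1{} \uplus \sensorfun^2{})(s) = \sensorfun^1{}(s) = \sum_{i \in I} p_i \cdot \dirac{v_i}$; the compound rule then produces $\sum_{i \in I} p_i \cdot (\pi_1 \parallel \dirac{P_2})\subst{v_i}{z}$, which I would simplify using that substitution distributes over parallel composition.

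The main obstacle is precisely this last commutation, namely $(\pi_1 \parallel \dirac{P_2})\subst{v_i}{z} = (\pi_1 \subst{v_i}{z}) \parallel \dirac{P_2}$, which is valid only because the bound variable $z$ does not occur free in $P_2$; this is guaranteed by working up to $\alpha$-conversion, as adopted in \autoref{sec:calculus}. Once this identity is in place, the target distribution becomes $(E_1 \uplus E_2); \confCPS{\dirac{\state_1 \uplus \state_2}}{(\sum_{i \in I} p_i \cdot \pi_1 \subst{v_i}{z}) \parallel \dirac{P_2}}$, which unfolds to $\gamma \uplus M_2$. Finally, the degenerate case \rulename{Deadlock} is handled separately: there $\state_1 \notin \invariantfun^1{}$ forces $\state_1 \uplus \state_2 \notin \invariantfun{}$, so $M_1 \uplus M_2 \trans{\tau} \dirac{\dummyN}$ by \rulename{Deadlock}, and since $\dummyN \uplus M_2 = \dummyN$ we get $\dirac{\dummyN} \uplus M_2 = \dirac{\dummyN} = \gamma \uplus M_2$, closing the proof.
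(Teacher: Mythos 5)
Your proposal is correct and follows essentially the same route as the paper's proof: a case analysis on the rule deriving $M_1 \transS[\alpha] \gamma$, lifting the process move with \rulename{Par}, discharging the invariant premise via the definition of $E_1 \uplus E_2$, and re-applying the same system rule to the compound \CPS{}, with physical disjointness handling the \rulename{SensRead} and \rulename{ActWrite} cases. The paper only details \rulename{SensRead} and leaves the rest as "similar"; you spell out the remaining cases (including \rulename{Deadlock} and the substitution/parallel commutation), which is a welcome but not substantively different elaboration.
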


\begin{proof}
If $M_1$ is the \CPS{} $\dummyN$ then also $M_1 \uplus M_2$ is $\dummyN$  and the thesis is immediate. 
Consider the case $M_1 \neq \dummyN$.
Let us assume that $M_1 = E_1; \confCPS {\state_1}  {P_1}$ with $E_1 = \envCPS
{\evolmap^1{} }
{\measmap^1{} }   
{\invariantfun^1{} }$ 
and $\state_1 = \stateCPS {\statefun^1{}} {\sensorfun^1{} } {\actuatorfun^1{} }$.
Moreover, assume that $\state_2 = \stateCPS {\statefun^2{}} {\sensorfun^2{} } {\actuatorfun^2{} }$.
We consider the case in which $M_1 \trans{\alpha} \gamma$ is derived by rule \rulename{SensRead}.
The other cases where the transition is derived by the other rules in \autoref{tab:lts_systems_P} can be proved in a similar manner.
In this case, we have $\alpha=\tau$ and there are a sensor $s$, probability values $p_i$ and real values $v_i$ with $i\in I$ and a distribution $\pi$ such that 
the rule \rulename{SensRead} instances as  
\[
\Txiombis{P_1 \trans{\rcva s z} \pi \Q \Q 
 \sensorfun^1{}(s) = \sum_{i \in I} p_i \cdot \dirac{v_i}  \Q \Q 
\statefun^1{}  \in \invariantfun^1{} 
}
{\confCPS {{\stateCPS {\statefun^1{}} {\sensorfun^1{}} {\actuatorfun^1{}}}}  P_1 \trans{\tau} \confCPS {\dirac{{\stateCPS {\statefun^1{}} {\sensorfun^1{}} {\actuatorfun^1{}}}}} {\sum_{i \in I}p_i \cdot   \pi \subst{v_i}{z}}} 
\]
and $\gamma= E_1 ; \confCPS {\dirac{{\stateCPS {\statefun^1{}} {\sensorfun^1{}} {\actuatorfun^1{}}}}} {\sum_{i \in I}p_i \cdot  \pi \subst{v_i}{z}}$.

Now we argue that we can apply rule \rulename{SensRead} to infer a transition by $M_1 \uplus M_2$. 
Recall that $M_1 \uplus M_2$ is the \CPS{} $(E_1 \uplus E_2) ; \confCPS { \stateCPS {\statefun^1{} \uplus \statefun^2{}} {\sensorfun^1{} \uplus \sensorfun^2{}} {\actuatorfun^1{} \uplus \actuatorfun^2{}}}{P_1 \parallel P_2}$.
Let $E_1 \uplus E_2 =  \envCPS
{\evolmap{} }
{\measmap{} }   
{\invariantfun{} }$.
From $P_1  \trans{\rcva s z} \pi$, by rule \rulename{Par} in \autoref{tab:lts_processes} we can derive the transition
$P_1 \parallel P_2 \trans{\rcva s z} \pi \parallel \dirac{ P_2}$, which is one of the premises of  rule \rulename{SensRead} necessary to infer a transition by $\confCPS { \stateCPS {\statefun^1{} \uplus \statefun^2{}} {\sensorfun^1{} \uplus \sensorfun^2{}} {\actuatorfun^1{} \uplus \actuatorfun^2{}}}{P_1 \parallel P_2}$.
Then, the premise $\statefun^1{} \uplus \statefun^2 \in \invariantfun{} $ of  \rulename{SensRead} follows by $\statefun^1{}  \in \invariantfun^1{}$, the hypothesis $\statefun^2{}  \in \invariantfun{}^2$ and the property 
$\statefun^1{} \uplus \statefun^2 \in \invariantfun{}$ iff  $\statefun^1{}  \in \invariantfun^1{}$ and $\statefun^2{}  \in \invariantfun{}^2$.
Finally, the premise $(\sensorfun^1{} \uplus \sensorfun^2{})(s) = \sum_{i \in I} p_i \cdot \dirac{v_i}$  follows by $(\sensorfun^1{} \uplus \sensorfun^2{})(s) = \sensorfun^1{}(s)$ and
$\sensorfun^1{}(s) = \sum_{i \in I} p_i \cdot \dirac{v_i}$. 
Therefore we have
\[
\Txiombis{P_1 \parallel P_2 \trans{\rcva s z} \pi \parallel \dirac{ P_2}\Q \Q 
(\sensorfun^1{} \uplus \sensorfun^2{})(s) = \sum_{i \in I} p_i \cdot \dirac{v_i} \Q \Q 
\statefun^1{} \uplus \statefun^2 \in \invariantfun{} 
}
{\confCPS { \stateCPS {\statefun^1{} \uplus \statefun^2{}} {\sensorfun^1{} \uplus \sensorfun^2{}} {\actuatorfun^1{} \uplus \actuatorfun^2{}}}{P_1 \parallel P_2}\trans{\tau} \confCPS {\dirac{ \stateCPS {\statefun^1{} \uplus \statefun^2{}} {\sensorfun^1{} \uplus \sensorfun^2{}} {\actuatorfun^1{} \uplus \actuatorfun^2{}}}} {\sum_{i \in I}p_i \cdot  (\pi \parallel \dirac{ P_2}) \subst{v_i}{z}}}
\]
with $(E_1\uplus E_2) ; \confCPS {\dirac{ \stateCPS {\statefun^1{} \uplus \statefun^2{}} {\sensorfun^1{} \uplus \sensorfun^2{}} {\actuatorfun^1{} \uplus \actuatorfun^2{}}}} {\sum_{i \in I}p_i \cdot (\pi\parallel \dirac{P_2}) \subst{v_i}{z}} = 
\gamma \uplus M_2$. 
\end{proof}

\autoref{lem:aux1P} can be generalised to weak transitions.

\begin{lemma}
\label{lem:aux1Pbis}
Assume two physically disjoint \CPS{}s $M_1$ and $M_2$ such that $M_2 =  E_2; \confCPS {\state_2}  {P_2}$ and 
$E_2 = \envCPS
{\evolmap^2{} }
{\measmap^2{} }   
{\invariantfun^2{} }$.
If $M_1  \Trans{\widehat{\alpha}} \gamma$, with $\alpha \neq \tick$, and $\state_2 \in \invariantfun{}^2$ 
then $M_1 \uplus M_2 \Trans{\widehat{\alpha}} \gamma \uplus   M_2 $. 
\end{lemma}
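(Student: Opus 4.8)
The plan is to unfold the weak transition $M_1 \TransS[\widehat{\alpha}] \gamma$ into its elementary steps and to lift each of them to $M_1 \uplus M_2$ by repeated use of \autoref{lem:aux1P}. The only algebraic fact needed to glue the steps together is that $(\cdot) \uplus M_2$ is linear on sub-distributions, i.e.\ $\big(\sum_{i} p_i \cdot \gamma_i\big) \uplus M_2 = \sum_{i} p_i \cdot (\gamma_i \uplus M_2)$, which is immediate from the definition of $\gamma \uplus M_2$.

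First I would lift the silent segments. Since $\TransS[\widehat{\tau}]$ is the reflexive and transitive closure of the lifted relation $\transS[\widehat{\tau}]$, it is enough to show, by induction on the length of the chain, that a single step $\delta \transS[\widehat{\tau}] \delta'$ between sub-distributions physically disjoint from $M_2$ entails $\delta \uplus M_2 \transS[\widehat{\tau}] \delta' \uplus M_2$. Writing $\delta = \sum_{i \in I} p_i \cdot \dirac{N_i}$, for the label $\tau$ every component admits a $\transS[\widehat{\tau}]$ move (in the worst case the idle move to $\dirac{N_i}$), so there is no inactive component and $\delta' = \sum_{i \in I} p_i \cdot \gamma_i$ with $N_i \transS[\widehat{\tau}] \gamma_i$. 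For each $i$ either $\gamma_i = \dirac{N_i}$, and then $N_i \uplus M_2 \transS[\widehat{\tau}] \dirac{N_i \uplus M_2} = \gamma_i \uplus M_2$ by the idle move, or $N_i \trans{\tau} \gamma_i$, and then \autoref{lem:aux1P} applies, since $\state_2 \in \invariantfun^2$ and physical disjointness hold, giving $N_i \uplus M_2 \trans{\tau} \gamma_i \uplus M_2$. Collecting these moves with index set $J = I$ yields $\delta \uplus M_2 \transS[\widehat{\tau}] \delta' \uplus M_2$, and hence the statement for $\alpha = \tau$.

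For $\alpha \neq \tau$, that is $\alpha \in \{\out c v, \inp c v\}$, I would use the factorisation $\TransS[\widehat{\alpha}] = \TransS[\widehat{\tau}]\,\transS[\widehat{\alpha}]\,\TransS[\widehat{\tau}]$. The two outer silent segments are lifted by the previous paragraph, so it remains to lift the single middle step $\delta_1 \transS[\widehat{\alpha}] \delta_2$. Writing $\delta_1 = \sum_{i \in I} p_i \cdot \dirac{N_i}$ with active index set $J$, for every $j \in J$ we have $N_j \trans{\alpha} \gamma_j$, whence \autoref{lem:aux1P} gives $N_j \uplus M_2 \trans{\alpha} \gamma_j \uplus M_2$; these contributions add up to $\sum_{j \in J} p_j \cdot (\gamma_j \uplus M_2) = \delta_2 \uplus M_2$, exactly the target of the composed middle step.

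The step I expect to be the main obstacle is the side condition on the inactive part of the support. The definition of $\transS[\widehat{\alpha}]$ demands $N_i \ntransS[\widehat{\alpha}]$ for every $i \in I \setminus J$, and to reuse the same index set $J$ for the composed transition I must establish $N_i \uplus M_2 \ntransS[\widehat{\alpha}]$. I would argue this by inversion on the rules of \autoref{tab:lts_systems_P}: for $N_i = E'; \confCPS{\state'}{P'}$ an $\alpha$-labelled move of $N_i \uplus M_2 = (E' \uplus E_2); \confCPS{\state' \uplus \state_2}{(P' \parallel P_2)}$ can only be produced by \rulename{Out} or \rulename{Inp}, hence from a process move $P' \parallel P_2 \trans{\alpha} \pi$, which by rule \rulename{Par} forces either $P' \trans{\alpha}$, contradicting $N_i \ntransS[\widehat{\alpha}]$, or $P_2 \trans{\alpha}$. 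The sub-case $P_2 \trans{\alpha}$ is the delicate one, since a context able to offer the very same observable action would turn an inactive component into an active one and break the reuse of $J$; this is precisely the point that has to be discharged for the three lifted segments to compose into $M_1 \uplus M_2 \TransS[\widehat{\alpha}] \gamma \uplus M_2$.
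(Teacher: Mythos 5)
Your decomposition is essentially the one the paper uses: the paper proves the lemma by induction on the length of $M_1 \TransS[\widehat{\alpha}] \gamma$, peeling off the last lifted step $\gamma' \transS[\widehat{\alpha_2}] \gamma$, applying \autoref{lem:aux1P} to every component that moves strongly and the idle move to the others, and recombining by exactly the linearity of $(\cdot) \uplus M_2$ that you invoke. Your observation that a silent lifted step never has inactive components (every \CPS{} has the idle $\transS[\widehat{\tau}]$ move, so $J=I$) is correct and is the reason the $\tau$-segments compose without trouble.

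The obstacle you single out at the end is real, and you are right that your argument does not discharge it --- but neither does the paper's. The paper concludes $\gamma' \uplus M_2 \transS[\widehat{\alpha_2}] \gamma \uplus M_2$ from the moves of the indices in $J$ alone, without ever verifying the side condition $M_i \uplus M_2 \ntransS[\widehat{\alpha_2}]$ for $i \in I \setminus J$ that the definition of the lifted relation requires. For $\alpha_2 = \tau$ the point is vacuous, and that is the only instance of the lemma that the proof of \autoref{lem:cong1P} spells out. For a visible $\alpha \in \{ {\out c v}, {\inp c v} \}$ the condition can genuinely fail: if $P_2$ itself offers $\alpha$, then every formerly inactive $N_i \uplus M_2$ acquires an $\alpha$-transition through \rulename{Par} and \rulename{Out}/\rulename{Inp}, so the lifted step with active set $J$ is no longer a legal transition; moreover, when $J \subsetneq I$ the target $\gamma \uplus M_2$ has mass $\sum_{j \in J} p_j < 1$, while once every component of the source can perform $\alpha$ the available lifted steps all preserve full mass, so the claimed weak transition need not exist at all. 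In other words, the delicate sub-case you isolate is not a step you failed to find: it is a hypothesis the lemma is missing (e.g.\ that $P_2$ cannot perform $\alpha$, or that the active set is all of $I$), and a complete proof has to add such a restriction rather than discharge the condition.
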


\begin{proof}
By induction over the length $n$ of $\TransS[\widehat{\alpha}]$.
The base case $n=1$ is given by \autoref{lem:aux1P}.
Consider the inductive step $n+1$.
We have $M_1 \TransS[\widehat{\alpha_1}] \gamma' \transS[\widehat{\alpha_2}] \gamma$ with either $\alpha_1 = \alpha$ and $\alpha_2 = \tau$, or $\alpha_1 = \tau$ and $\alpha_2 = \alpha$.
Since the length of $\TransS[\widehat{\alpha_1}]$ is $n$, we can apply the inductive hypothesis and infer 
$M_1 \uplus M_2  \TransS[\widehat{\alpha_1}]  \gamma' \uplus M_2$.
Assume $\gamma' = \sum_{i \in I}p_i \cdot \dirac{M_i}$, for suitable probability values $p_i$ and \CPS{} $M_i$. 
By definition, $\gamma' \transS[\widehat{\alpha_2}] \gamma$ implies that there exists a subset $J \subseteq I$ with
$M_j \transS[\widehat{\alpha_2}] \gamma_j$ for all $j \in J$, $M_i \ntransS[\alpha_2]$ for $i \in I \setminus J$ and $\gamma = \sum_{j \in J} p_j \cdot \dirac{M_j}$.
We can prove now that for any $j \in J$ we have $M_j \uplus M_2 \transS[\widehat{\alpha_2}] \gamma_j\ \uplus M_2$.
We distinguish two cases.
The first case is $M_j \transS[\alpha_2] \gamma_j$. 
By \autoref{lem:aux1P} we get $M_j \uplus M_2 \transS[\alpha_2] \gamma_j\ \uplus M_2$, and, therefore, $M_j \uplus M_2 \transS[\widehat{\alpha_2}] \gamma_j\ \uplus M_2$.
The second case is $\alpha_2 = \tau$ and $\gamma_j = \dirac{M_j}$.
We immediately have $M_j \uplus M_2 \transS[\widehat{\tau}] \gamma_j\ \uplus M_2$.
Hence $\sum_{j\in J} M_j \uplus M_2 \transS[\widehat{\alpha_2}] \sum_{j \in J} \gamma_j  \uplus M_2$, namely  $\gamma' \uplus M_2 \transS[\widehat{\alpha_2}] \gamma \uplus M_2$.
Then, from $M \uplus M_2  \TransS[\widehat{\alpha_1}]  \gamma' \uplus M_2$ and $\gamma' \uplus M_2 \transS[\widehat{\alpha_2}] \gamma \uplus M_2$ we get $M \uplus M_2  \TransS[\widehat{\alpha}] \gamma \uplus M_2$, which completes the proof.
\end{proof}

Next lemma says that the invariants of \CPS{s} in distance $< 1$ must agree.

\begin{lemma}
\label{lem:bis-invP}
Assume two \CPS{}s $M_1$ and $M_2$ such that $M_i =  E_i; \confCPS {\state_i}  {P_i}$ and 
$E_i = \envCPS
{\evolmap^i{} }
{\measmap^i{} }   
{\invariantfun^i{} }$, for $i=1,2$.
If
$\metric(M_1, M_2) < 1$ 
then $S_1 \in \invariantfun{}^1 $ iff $ S_2 \in \invariantfun{}^2$.
\end{lemma}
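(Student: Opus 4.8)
The plan is to prove the contrapositive. Since $\metric$ is a pseudometric it is symmetric, and the statement $S_1 \in \invariantfun{}^1 \iff S_2 \in \invariantfun{}^2$ is symmetric under swapping the two systems, so it suffices to assume $S_1 \in \invariantfun{}^1$ while $S_2 \notin \invariantfun{}^2$ and derive $\metric(M_1,M_2) = 1$, which contradicts the hypothesis $\metric(M_1,M_2)<1$. The whole argument rests on one observable discrepancy: a \CPS{} whose state function satisfies the invariant can always eventually let time pass (it has a weak $\tick$-transition), whereas a \CPS{} whose state violates the invariant is frozen into $\dummyN$ and can never tick.

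First I would record the behaviour of $M_2$. Since $S_2 \notin \invariantfun{}^2$, the only applicable rule is \rulename{Deadlock}, so $M_2 \transS[\tau] \dirac{\dummyN}$ is its unique transition and $\dummyN$ has no transition at all. I claim $M_2$ has \emph{no} weak $\tick$-transition. Unfolding $\TransS[\hat{\tick}] = \TransS[\hat{\tau}]\transS[\hat{\tick}]\TransS[\hat{\tau}]$, the only sub-distributions reachable from $M_2$ via $\TransS[\hat{\tau}]$ are $\dirac{M_2}$ and $\dirac{\dummyN}$; and since for $\tick \neq \tau$ the relation $\transS[\hat{\tick}]$ requires some \CPS{} in the support to admit a strong $\tick$-transition, while neither $M_2$ (whose sole transition is the \rulename{Deadlock} $\tau$) nor $\dummyN$ does, the middle step can never fire. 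Hence there is no $\gamma$ with $M_2 \TransS[\hat{\tick}] \gamma$.

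Next I would establish the complementary fact, which is the technical heart of the proof: every \CPS{} $M$ whose state function lies in the invariant admits some weak transition $M \TransS[\hat{\tick}] \gamma$. I would prove this by induction on the bound $k(M)$ on the length of $\tick$-free computation traces out of $M$, which is finite by well-timedness (\autoref{prop:time}(d)). If $M$ performs a strong $\tick$ we are done, since a strong $\tick$ is in particular a weak one. Otherwise, by patience (\autoref{prop:time}(c)) together with the assumption that the state function satisfies the invariant, there is a transition $M \transS[\tau] \gamma'$; every $M' \in \support(\gamma')$ again has its state function in the invariant (a $\tau$-step, arising from \rulename{Tau}, \rulename{SensRead} or \rulename{ActWrite}, changes at most the actuator component and leaves the state function, hence its membership in $\invariantfun{}$, untouched) and satisfies $k(M') \le k(M)-1$, so the induction hypothesis yields $M' \TransS[\hat{\tick}] \gamma_{M'}$ for each such $M'$. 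Prepending the step $M \transS[\tau] \gamma'$ (which is in particular an $\TransS[\hat{\tau}]$-step) and lifting the per-point transitions $M' \TransS[\hat{\tick}] \gamma_{M'}$ to the distribution $\gamma'$ produces the desired weak $\tick$-transition of $M$. I expect the main obstacle to be exactly this last lifting, i.e.\ checking that per-point weak transitions combine into a single weak transition of the sub-distribution; this is routine bookkeeping of the same kind already carried out in the proof of \autoref{lemma_sim_weak_transitions}.

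Finally I would combine the two facts through \autoref{lemma_sim_weak_transitions}. Since $\metric$ is the minimal weak bisimulation metric and, by assumption, $\metric(M_1,M_2) < 1$, and since $M_1$ (being valid) admits a weak transition $M_1 \TransS[\hat{\tick}] \gamma_M$ from the previous step, \autoref{lemma_sim_weak_transitions} forces the existence of a matching transition $M_2 \TransS[\hat{\tick}] \gamma_N$ with $\Kantorovich(\metric)\big(\gamma_M + (1{-}\size{\gamma_M})\dirac{\dummyN}\, ,\, \gamma_N + (1{-}\size{\gamma_N})\dirac{\dummyN}\big) \le \metric(M_1,M_2)$. But $M_2$ has no weak $\tick$-transition whatsoever, a contradiction. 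Hence $\metric(M_1,M_2) < 1$ is incompatible with $S_1 \in \invariantfun{}^1$ and $S_2 \notin \invariantfun{}^2$, which is precisely the claim.
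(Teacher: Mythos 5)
Your proof is correct and follows essentially the same route as the paper's: show that $M_1$ (whose state satisfies its invariant) admits a weak $\tick$-transition via well-timedness and patience, that $M_2$ is confined to the \rulename{Deadlock} $\tau$-step into $\dummyN$ and hence has none, and conclude that this mismatch contradicts $\metric(M_1,M_2)<1$. You are in fact somewhat more careful than the paper at two points it leaves implicit — that $\tau$-steps preserve membership of the state function in the invariant along the chain, and that the mismatch of \emph{weak} transitions yields the contradiction only via \autoref{lemma_sim_weak_transitions} — but these are refinements of, not departures from, the same argument.
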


\begin{proof}
The proof is by contradiction. 
Assume that $\metric(M_1, M_2) < 1$, $S_1 \in \invariantfun{}^1 $ and $ S_2 \not \in \invariantfun{}^2$.
We show that $M_1\TransS[\widehat{\tick}]$ and $M_2 \nTransS[\widehat{\tick}]$, which contradicts $\metric(M_1, M_2) < 1$.
By the well timedness property for \CPS{s} (\autoref{prop:time}, last item), there exists a natural $n$ such that all derivations $M_1 \trans{\tau} N_1 \trans{\tau}\dots \trans{\tau} N_k$ are such that $k \le n$, then we have $N_k \ntrans \tau$. 
Since $N_k \ntrans \tau$, by the maximal progress property for \CPS{s} (\autoref{prop:time}, second item) it follows that $N_k \trans \tick \gamma$, for some $\gamma$.
We conclude $M_1\TransS[\widehat{\tick}]$. 
Since $ S_2 \not \in \invariantfun{}^2$, the \CPS{} $M_2$ can  perform only the step  $M_2\trans \tau \dummyN$ and $\dummyN$ can not perform  any 
action, and hence, $M_2 \nTransS[\widehat{\tick}]$. 
\end{proof}

Here comes one of the main technical result: the bisimilarity metric is preserved by the parallel composition of physically disjoint \CPS{s}. 

\begin{proposition}
\label{lem:cong1P}
$\metric(M \uplus O ,N \uplus O) \le \metric(M , N) $, for any physically disjoint \CPS{} $O$. 
\end{proposition}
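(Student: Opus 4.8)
The plan is to avoid building a fresh candidate pseudometric on composite systems (which would force a delicate triangle inequality across \emph{incompatible} $\uplus$-decompositions of a middle system) and instead to argue through the approximants. By \autoref{prop:metric_as_a_limit} we have $\metric = \lim_{n}\metric^n$, so it suffices to prove, for every $n\ge 0$ and all physically disjoint data,
\[
\metric^n(M\uplus O,\,N\uplus O)\;\le\;\metric(M,N),
\]
and then pass to the limit. I would prove this by induction on $n$. The base case $n=0$ is immediate since $\metric^0=\zeroF$. For the step I use $\metric^{n+1}=\Bisimulation(\metric^{n})$ together with the fact that $\metric$, being a fixed point of $\Bisimulation$, is itself a weak bisimulation metric in the sense of \autoref{def:simulation_metric}: whenever $\metric(M,N)<1$ and $M\transS[\alpha]\gamma_M$ there is $N\TransS[\hat\alpha]\gamma_N$ with $\Kantorovich(\metric)(\gamma_M,\gamma_N+(1{-}\size{\gamma_N})\dirac{\dummyN})\le\metric(M,N)$. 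If $\metric(M,N)=1$ the claim is trivial, as $\metric^{n+1}$ is $1$-bounded.

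Next I would establish a \emph{product bound} that lifts the inductive hypothesis from components to $\uplus$-products at the level of the Kantorovich functional: for any $O$, any distribution $\rho$ over the $O$-side systems, and any $\mu,\nu$ over the $M$/$N$-side systems, writing $\mu\uplus\rho$ for the distribution assigning weight $\mu(A)\,\rho(B)$ to $A\uplus B$ (with $\dummyN$ absorbing, since $A\uplus\dummyN=\dummyN$),
\[
\Kantorovich(\metric^{n})(\mu\uplus\rho,\,\nu\uplus\rho)\;\le\;\Kantorovich(\metric)(\mu,\nu).
\]
This follows by taking an optimal matching $\omega$ for $(\mu,\nu)$ and lifting it to $\omega'(A\uplus B,A'\uplus B')=\omega(A,A')\,\rho(B)$ when $B=B'$ and $0$ otherwise; one checks the marginals and then
\[
\Kantorovich(\metric^{n})(\mu\uplus\rho,\nu\uplus\rho)\le\sum_{A,A',B}\omega(A,A')\rho(B)\,\metric^{n}(A\uplus B,A'\uplus B)\le\sum_{A,A'}\omega(A,A')\,\metric(A,A')=\Kantorovich(\metric)(\mu,\nu),
\]
the middle step being exactly the inductive hypothesis.

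For the inductive step I bound $\Bisimulation(\metric^{n})(M\uplus O,N\uplus O)$ by considering (by symmetry) a move $M\uplus O\transS[\alpha]\gamma_1$ and supplying a response $N\uplus O\TransS[\hat\alpha]\gamma_2$ with $\Kantorovich(\metric^{n})(\gamma_1,\gamma_2+(1{-}\size{\gamma_2})\dirac{\dummyN})\le\metric(M,N)$; assuming $\metric(M,N)<1$, I case-split on the origin of the transition. If the action comes from $M$ alone ($\alpha\neq\tick$) then $\gamma_1=\gamma_M\uplus O$ by \autoref{lem:aux1P}; I take $N\TransS[\hat\alpha]\gamma_N$ to be the bisimulation-metric response of $N$, lift it to $N\uplus O\TransS[\hat\alpha]\gamma_N\uplus O$ by \autoref{lem:aux1Pbis}, and close with the product bound using $\rho=\dirac{O}$. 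If the action comes from $O$ alone, $N\uplus O$ replays the same $O$-move with $M,N$ untouched, and the product bound with $\mu=\dirac{M},\nu=\dirac{N}$ gives $\Kantorovich(\metric)(\dirac{M},\dirac{N})=\metric(M,N)$. For a deadlock, \autoref{lem:bis-invP} and $\metric(M,N)<1$ give $S_M\in\invariantfun{}^M\iff S_N\in\invariantfun{}^N$, and by the product definition of invariants $M\uplus O$ deadlocks iff $N\uplus O$ does, both reaching $\dirac{\dummyN}$ at distance $0$. For $\alpha=\tick$, the product form of $\operatorname{next}_{E_M\uplus E_O}$ factors the reached distribution as $\gamma^{\tick}_M\uplus\rho_O$ with $\rho_O$ the $\tick$-evolution of $O$; $N\uplus O$ answers with $\gamma^{\tick}_N\uplus\rho_O$ and the product bound with $\rho=\rho_O$ reduces the goal to $\metric$'s own $\tick$-matching.

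The hard part will be the synchronisation case, where $M$ and $O$ communicate on a shared channel to produce a composite $\tau$ with $\gamma_1=\gamma'_M\uplus\gamma'_O$ arising from $M\transS[\out c v]\gamma'_M$ and $O\transS[\inp c v]\gamma'_O$. Here $N$'s matching move is only a \emph{weak} transition $N\TransS[\widehat{\out c v}]\gamma'_N$, which I must synchronise with $O$'s input to obtain $N\uplus O\TransS[\hat\tau]\gamma_2$ of the expected $\uplus$-product shape. This requires strengthening \autoref{lem:aux1Pbis} to a composition lemma that pushes the $\tau$-prefix and $\tau$-suffix of $N$'s weak transition through $\uplus O$ and fires rule \rulename{Com} at the emitting step (padding with $\dummyN$ the mass of $N$ that fails to emit). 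Proving this synchronisation-composition lemma, and checking that the resulting $\gamma_2$ combines with the product bound to yield $\metric(M,N)$, is the technical heart of the argument; the remaining cases are then routine.
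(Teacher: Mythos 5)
Your proposal is correct in strategy but takes a genuinely different route from the paper. The paper defines a candidate function $d(M \uplus O, N \uplus O) := \metric(M,N)$, argues that $d$ is a weak bisimulation metric by a case analysis on how $M\uplus O \transS[\alpha]\gamma$ is derived, and concludes $\metric \sqsubseteq d$ by minimality of $\metric$; you instead prove $\metric^n(M\uplus O,\,N\uplus O)\le\metric(M,N)$ by induction on $n$ and pass to the limit via \autoref{prop:metric_as_a_limit}. Your route avoids a real weakness of the paper's: $d$ is defined only on pairs of the form $(M\uplus O,N\uplus O)$, and neither its well-definedness under different $\uplus$-decompositions nor its triangle inequality through an arbitrary middle system is ever checked --- which is exactly the worry you voice at the outset. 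The price is reliance on continuity of $\Bisimulation$ and an explicit Kantorovich product-lifting lemma (whose matching construction is right, and which the paper uses only implicitly when closing each case). Note also that your induction is essentially the paper's own proof of the $n$-indexed statement \autoref{lem:cong1Pk}; since $\metric^n\sqsubseteq\metric$, your per-step bound by $\metric(M,N)$ is weaker than the bound by $\metric^n(M,N)$ proved there, but it suffices after taking limits. Finally, the case you flag as open --- a \rulename{Com}-synchronisation between $M$ and $O$ answered by a weak output of $N$ that must be recomposed with $O$'s input, with the non-emitting mass of $N$ accounted for as in the $\dummyN$-padding of \autoref{def:simulation_metric} --- is indeed the hardest case, but the paper elides it as well (it is buried in the \rulename{Tau} sub-case dismissed as ``analogous''), so your sketch is not below the published standard of detail; carrying it out as you describe would in fact make the argument more complete than the paper's.
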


\begin{proof}
The case $\metric(M , N) = 1$ is immediate, therefore we assume $\metric(M , N) < 1$. 
Let us define the function $d \colon \nome \times \nome \to [0,1]$ by $d(M \uplus O , N \uplus O) = \metric(M,N)$ for all $M,N,O \in \nome$.
To prove the thesis it is enough to show that $d$ is a weak bisimulation metric. 
In fact, since $\metric$ is the minimal weak bisimulation metric, we infer $\metric \sqsubseteq d$, thus giving $\metric(M \uplus O , N \uplus O) \le d(M \uplus O , N \uplus O) = \metric(M,N)$.
To prove that $d$ is a weak bisimulation metric, we show that any transition $M \uplus  O \trans{\alpha} \gamma$ is simulated by some transition $N \uplus  O \TransS[\widehat{\alpha}] \gamma'$ with $\Kantorovich(d)(\gamma,\gamma' + (1-\size{\gamma'})\dummyN ) \le d(M \uplus  O,N \uplus  O)$.
The cases where one of the \CPS{s} $M$, $N$ and $O$ are $\dummyN$ is immediate.
Hence, assume that $M$, $N$ and $O$ are not $\dummyN$.
Let us assume that $M_1 = E_1; \confCPS {\state_1}  {P_1}$ with $E_1 = \envCPS
{\evolmap^1{} }
{\measmap^1{} }   
{\invariantfun^1{} }$ 
and $\state_1 = \stateCPS {\statefun^1{}} {\sensorfun^1{} } {\actuatorfun^1{} }$.
Moreover, assume that $O = E_2; \confCPS {\state_2}  {P_2}$ with $E_2 = \envCPS
{\evolmap^2{} }
{\measmap^2{} }   
{\invariantfun^2{} }$ 
and $\state_2 = \stateCPS {\statefun^2{}} {\sensorfun^2{} } {\actuatorfun^2{} }$.
Finally  $E_1 \uplus E_2 =  \envCPS 
{\evolmap{} } 
{\measmap{} }   
{\invariantfun{} }
$.

We proceed by case analysis on how $M \uplus O \trans{\alpha} \gamma$ is derived.
The cases are the following:
\begin{itemize}

\item
The transition $M \uplus O \trans{\tau} \gamma$ is derived by rule \rulename{SensRead} in \autoref{tab:lts_systems_P}, instantiated as
\[
\Txiombis{P_1 \parallel P_2 \trans{\rcva s z} \pi \Q \Q 
(\sensorfun^1{} \uplus \sensorfun^2{})(s) = \sum_{i \in I} p_i \cdot \dirac{v_i} \Q \Q 
\statefun^1{} \uplus \statefun^2 \in \invariantfun{} 
}
{ \confCPS { S_1 \uplus S_2}{P_1 \parallel P_2}\trans{\tau} \confCPS {\dirac{ S_1 \uplus S_2}} {\sum_{i \in I}p_i \cdot  {\pi \subst{v_i}{z}}}} 
\]
with $\gamma = (E_1\uplus E_2) ; \confCPS {\dirac{ S_1 \uplus S_2} }{\sum_{i \in I}p_i \cdot   {\pi \subst{v_i}{z}}} $. 
 
\item
The transition $M \uplus  O \trans{\tau} \gamma$ is derived by rule \rulename{ActWrite} in \autoref{tab:lts_systems_P}
instantiated as
\[
\Txiombis{P_1 \parallel P_2 \trans{\snda a v} {\pi}  \Q \Q  { \statefun^1{} \uplus \statefun^2 \in \invariantfun{} }}
{\confCPS {\stateCPS {\statefun^1{} \uplus \statefun^2{}} {\sensorfun^1{}\uplus \sensorfun^2{}} {\actuatorfun^1{} \uplus \actuatorfun^2{}}}  P_1 \parallel P_2 \trans{\tau}
 \confCPS {\dirac{\stateCPS {\statefun^1{} \uplus \statefun^1{}} {\sensorfun^1{}\uplus \sensorfun^2{}} {\actuatorfun^1{}\uplus\actuatorfun^1{} [a \mapsto v]} }}{\pi}}
\]

\item
The transition $M \uplus O \trans{\tau} \gamma$ is derived by rule \rulename{Tau} in \autoref{tab:lts_systems_P}, instantiated as
\[
\Txiombis{P_1 \parallel P_2 \trans{\tau} \pi \Q\Q (S_1 \uplus S_2) \in \invariantfun{}}
{ \confCPS{S_1 \uplus S_2} {P_1 \parallel P_2}\trans{\tau} \confCPS {\dirac{S_1 \uplus S_2}}  {\pi} }
\]
with $\gamma = \confCPS {(E_1 \uplus E_2); \dirac{S_1 \uplus S_2}}  {\pi}$. 

\item
The transition $M \uplus O \trans{\tick} \gamma$ is derived by rule \rulename{Time} in \autoref{tab:lts_systems_P}, instantiated as 
 \[  \Txiombis
  {P_1\parallel P_2  \trans{\tick} {\pi} \Q\Q
 \confCPS{S_1 \uplus S_2} {P_1 \parallel P_2} \ntrans{\tau} \Q\Q
(S_1 \uplus S_2) \in \invariantfun{}}
{\confCPS{S_1 \uplus S_2} {P_1 \parallel P_2}  \trans{\tick} 
\confCPS  {\operatorname{next}_{(E_1 \uplus E_2 )} (S_1 \uplus S_2 )} {\pi} } 
\]
with $ \gamma=\confCPS {   (E_1 \uplus E_2 ) : \operatorname{next}_{(E_1 \uplus E_2 )} (S_1 \uplus S_2 )} {\pi}  $.  

\item
The transition $M \uplus O \trans{\inp c v} \gamma$ is derived by rule \rulename{Inp} in \autoref{tab:lts_systems_P}, instantiated as
\[
\Txiombis
{P_1 \parallel P_2 \trans{\inp c v} \pi \Q (S_1 \uplus S_2) \in \invariantfun{}}
{\confCPS{S_1 \uplus S_2} {P_1 \parallel P_2}  \trans{\inp c v}   \confCPS {\dirac{S_1 \uplus S_2}} {\pi}}  
\]
with $\gamma=\confCPS {(E_1 \uplus E_2); \dirac{S_1 \uplus S_2} } {\pi}$.

\item
The transition $M \uplus O \trans{\out c v} \gamma$ is derived by rule \rulename{Out} in \autoref{tab:lts_systems_P}
instantiated as
\[
\Txiombis
{P_1 \parallel P_2  \trans{\out c v}  \pi  \Q \Q  S_1 \uplus S_2 \in \invariantfun{}}
{\confCPS {S_1 \uplus S_2}{  P_1 \parallel P_2}    \trans{\out c v}   \confCPS {\dirac{S_1 \uplus S_2}}  {\pi}}.
\]

\end{itemize}

We show only the first case, the other are analogous.
We recall that, by definition of operator $\uplus$, the physical environments $E_1$ and $E_2$ have different physical devices. 
Thus, there are two cases:
\begin{itemize}

\item 
$s$ is a sensor of $E_1$.
In this case, the transition $P_1 \parallel P_2 \trans{\rcva s z} \pi$ derives by rule \rulename{Par} in \autoref{tab:lts_processes} from 
$P_1 \trans{\rcva s z} \pi'$, where $\pi'$ is a process distribution such that $\pi=\pi' \parallel \dirac{ P_2}$.

First we argue that rule \rulename{SensRead} can be used to derive a transition by $M$.
From $(S_1\uplus S_2) \in \invariantfun{}$, by definition of $E_1 \uplus E_2$,  we get both $ S_1\in \invariantfun{}^1$ and 
$S_2 \in \invariantfun{}^2$.
From $(\sensorfun^1{} \uplus \sensorfun^2{})(s) = \sum_{i \in I} p_i \cdot \dirac{v_i}$,  since $s$ is a sensor of 
$\sensorfun^1{} $,
we derive $ \sensorfun^1{}  (s) = \sum_{i \in I} p_i \cdot \dirac{v_i}$. 
Summarising, we have $P_1 \trans{\rcva s z} \pi'$,   $S_1\in \invariantfun{}^1$, and 
$ \sensorfun^1{}  (s) = \sum_{i \in I} p_i \cdot \dirac{v_i}$, which allows us to apply rule \rulename{SensRead} and derive 
$\confCPS{S_1}{P_1} \trans{\tau}   \confCPS {\dirac{ S_1 } }{\sum_{i \in I}p_i \cdot   {(\pi') \subst{v_i}{z}}} $,
 namely $M \trans{\tau} \gamma'' =  \confCPS { E_1  ;\dirac{ S_1 } }{\sum_{i \in I}p_i \cdot {(\pi') \subst{v_i}{z}}} $. 
  
Then, from $M \trans{\tau} \gamma''$ and $\metric (M,N) < 1$, there is a distribution $\gamma'''$ such that $N \Trans{\widehat{\tau}} \gamma'''$ with $\Kantorovich(\metric)(\gamma'',\gamma''' + (1-\size{\gamma'''})\dummyN) \le \metric(M,N)$.
Since $S_2 \in  \invariantfun{}^2 $, 
by \autoref{lem:aux1Pbis} it follows that $N \uplus O \TransS[\widehat{\tau}]  \gamma''' \uplus O$.
Finally, we conclude that  $\gamma''' \uplus O$ is the distribution $\gamma'$ we were looking for by $\Kantorovich(d)(\gamma,\gamma''' \uplus O + (1-\size{\gamma''' \uplus O})\dummyN) = \Kantorovich(d)(\gamma'' \uplus O ,\gamma''' \uplus O + (1-\size{\gamma''' \uplus O})\dummyN) =  \Kantorovich(\metric)(\gamma'',\gamma''' (1-\size{\gamma'''})\dummyN) \le \metric(M,N) = d(M \uplus O , N \uplus O)$.

\item 
$s$ is a sensor of $E_2$.
In this case, the transition $P_1 \parallel P_2 \trans{\rcva s z} \pi$ derives by rule \rulename{Par} in \autoref{tab:lts_processes} from 
$P_2 \trans{\rcva s z} \pi'$, where $\pi'$ is a process distribution such that $\pi=\dirac{P_1} \parallel \pi'$.

Assume  $N = E_3; \confCPS {\state_3}  {P_3}$ with $E_3 = \envCPS
{\evolmap^3{} }
{\measmap^3{} }   
{\invariantfun^3{} }$ 
and $\state_3 = \stateCPS {\statefun^3{}} {\sensorfun^3{} } {\actuatorfun^3{} }$.
We show that rule \rulename{SensRead} allow us to infer $N \uplus O \trans{\tau} N \uplus \gamma''  $ for some $\gamma''$. 
  
By the rule \rulename{Par} we get $P_3 \parallel P_2 \trans{\rcva s z} \dirac{ P_3} \parallel \pi'$.
From $(S_1\uplus S_2) \in \invariantfun{}$, by definition of $E_1 \uplus E_2$. we get both $S_1 \in \invariantfun{}^1$ and 
$S_2 \in \invariantfun{}^2$.
Let  $E_1 \uplus E_3 =  \envCPS 
{\evolmap{'} } 
{\measmap{'} }   
{\invariantfun{'} }
$.
From $\metric(M ,N) < 1$ and $ S_1 \in \invariantfun{}^1$,  by \autoref{lem:bis-invP} it follows that 
$S_3 \in \invariantfun{}^3$ and so  $ (S_3 \uplus S_2)  \in \invariantfun{'}$.
From $(\sensorfun^1{} \uplus \sensorfun^2{})(s) = \sum_{i \in I} p_i \cdot \dirac{v_i}$,  since $s$ is a sensor of 
$\sensorfun^2{} $,
we derive $ \sensorfun^2{}  (s) = \sum_{i \in I} p_i \cdot \dirac{v_i}$. 
Hence  we derive $(\sensorfun^3{} \uplus \sensorfun^2{})(s) = \sum_{i \in I} p_i \cdot \dirac{v_i}$.
Summarising we have $P_3 \parallel P_2 \trans{\rcva s z} \dirac{P_3} \parallel \pi'$, $(S_3 \uplus S_2) \in \invariantfun{'}$ and 
$(\sensorfun^3{} \uplus \sensorfun^2{})(s) = \sum_{i \in I} p_i \cdot \dirac{v_i}$.
Hence, we can apply rule
\rulename{SensRead} to infer $N \uplus O \trans{\tau} \confCPS { (E_3 \uplus E_2 ) ;\dirac{ S_3 \uplus S_2 } }{\sum_{i \in I}p_i \cdot  (\dirac{P_3} \parallel \pi') \subst{v_i}{z}}  = N\uplus \gamma'' $ with 
$\gamma'' =\confCPS {  E_2  ;\dirac{ S_2   } }{\sum_{i \in I}p_i \cdot   {(\pi') \subst{v_i}{z}}}  $.
Finally, we can conclude that  $  \gamma '= N\uplus \gamma''$ is the distribution we were looking for by $\Kantorovich(d)(M \uplus \gamma'',N \uplus \gamma'') = \Kantorovich(\metric)(\dirac{M},\dirac{N}) = \metric(M,N) = d(M \uplus O, N \uplus O)$.

\end{itemize}

\end{proof}

Also the $n$-weak bisimilarity metric is preserved by the parallel composition of physically disjoint  \CPS{s}.

\begin{proposition}
\label{lem:cong1Pk}
$\metric^n(M \uplus O ,N \uplus O) \le \metric^n(M , N) $, for any  physically disjoint \CPS{} $O$ and $n \ge 0$. 
\end{proposition}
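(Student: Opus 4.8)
The plan is to argue by induction on $n$, in close analogy with the proof of \autoref{lem:cong1P}, but replacing the appeal to minimality of $\metric$ with a direct use of the inductive hypothesis and of the monotonicity $\metric^{n} \sqsubseteq \metric^{n+1}$ (which follows from $\metric^0 = \zeroF$ being the bottom of the lattice and $\Bisimulation$ being monotone). The base case $n=0$ is immediate since $\metric^0 = \zeroF$ assigns distance $0$ to every pair, so both sides vanish. For the inductive step I assume $\metric^{n}(M' \uplus O', N' \uplus O') \le \metric^{n}(M',N')$ for all $M',N'$ and all physically disjoint $O'$, and I must show $\Bisimulation(\metric^{n})(M \uplus O, N \uplus O) \le \Bisimulation(\metric^{n})(M,N)$. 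If $\metric^{n+1}(M,N)=1$ the claim is trivial, so assume it is ${<}1$. By \autoref{def:metric_sim_functional} it then suffices to show that every transition $M \uplus O \transS[\alpha] \gamma_1$ is matched by some $N \uplus O \TransS[\hat{\alpha}] \gamma_2$ with $\Kantorovich(\metric^{n})(\gamma_1, \gamma_2 + (1{-}\size{\gamma_2})\dirac{\dummyN}) \le \metric^{n+1}(M,N)$, together with the symmetric obligation.

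Next I would split on how the transition of $M \uplus O$ is generated, exactly as in \autoref{lem:cong1P}: either the action originates in $M$, so that $\gamma_1 = \gamma \uplus O$ for some $M \transS[\alpha] \gamma$, or it originates in $O$, so that $\gamma_1 = M \uplus \gamma$ for some $O \transS[\alpha] \gamma$. In the first case the definition of $\Bisimulation(\metric^{n})(M,N)$ supplies a weak transition $N \TransS[\hat{\alpha}] \gamma'$ with $\Kantorovich(\metric^{n})(\gamma, \gamma' + (1{-}\size{\gamma'})\dirac{\dummyN}) \le \metric^{n+1}(M,N)$; by \autoref{lem:aux1Pbis} (whose invariant hypothesis on $O$ holds because $M \uplus O$ moved) this lifts to $N \uplus O \TransS[\hat{\alpha}] \gamma' \uplus O$. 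The remaining point is to transport the optimal matching: given the optimal $\omega$ realising $\Kantorovich(\metric^{n})(\gamma, \gamma' + (1{-}\size{\gamma'})\dirac{\dummyN})$, I would push it through the map $(\cdot) \mapsto (\cdot)\uplus O$, using the inductive hypothesis $\metric^{n}(M_i \uplus O, N_j \uplus O) \le \metric^{n}(M_i, N_j)$ on the live pairs and the identity $\dummyN \uplus O = \dummyN$ on the deadlock mass; this yields $\Kantorovich(\metric^{n})(\gamma \uplus O, (\gamma' \uplus O) + (1{-}\size{\gamma'})\dirac{\dummyN}) \le \Kantorovich(\metric^{n})(\gamma, \gamma' + (1{-}\size{\gamma'})\dirac{\dummyN})$, which closes this case. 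When the action originates in $O$ and the relevant invariants agree, $N \uplus O$ performs the very same $O$-action, and since the reached distributions are $M \uplus \gamma$ and $N \uplus \gamma$, the bound follows from the inductive hypothesis applied pointwise (with $O_k$, the one-step residuals of $O$, playing the role of the disjoint component) and monotonicity $\metric^{n} \le \metric^{n+1}$.

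The delicate step --- and the one that replaces the appeal to \autoref{lem:bis-invP} in the proof of \autoref{lem:cong1P} --- is the $O$-side case in which the invariants need not agree: $N$ may reside in a state violating its own invariant while $M$ does not, so that $N \uplus O$ can only fire $\tau$ to $\dirac{\dummyN}$. Here the $n$-bounded metric is genuinely weaker than $\metric$, so $\metric^{n+1}(M,N){<}1$ does not force invariant agreement, and \autoref{lem:bis-invP} is unavailable. My plan is to treat this subcase directly from the functional: on the $N \uplus O$ side I would choose, for each transition of $M \uplus O$, either the reflexive weak transition $N \uplus O \TransS[\hat{\tau}] \dirac{N \uplus O}$ or the deadlock transition $N \uplus O \transS[\tau] \dirac{\dummyN}$, whichever minimises the Kantorovich distance, and then bound that minimum by the corresponding minimum available to $N$ in the computation of $\metric^{n+1}(M,N)$ --- again transporting matchings with the inductive hypothesis and the identity $\dummyN \uplus O = \dummyN$. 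Since $\tau$-transitions do not alter the state function, and hence never break a satisfied invariant, every $\tau$-descendant of the live side remains live, which keeps this bookkeeping consistent. I expect this last subcase, balancing the reflexive versus deadlock weak moves against the $\min$/$\max$ structure of \autoref{def:metric_sim_functional}, to be the main obstacle; the other cases are routine adaptations of \autoref{lem:cong1P} once the matching-transport argument and the monotonicity $\metric^{n}\sqsubseteq\metric^{n+1}$ are in hand.
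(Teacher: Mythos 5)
Your overall plan --- induction on $n$ with base case $\metric^0=\zeroF$, and an inductive step that replays the case analysis of \autoref{lem:cong1P} with $\metric$ replaced by $\metric^{n+1}$ and $\Kantorovich(d)$, $\Kantorovich(\metric)$ replaced by $\Kantorovich(\metric^{n})$ --- is exactly the paper's proof, which consists of precisely that substitution. The $M$-side cases (lifting the move of $N$ via \autoref{lem:aux1Pbis} and transporting the optimal matching through $(\cdot)\uplus O$, using the inductive hypothesis on the live pairs and $\dummyN\uplus O=\dummyN$ on the deadlock mass) are correct and are the right adaptation.

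The gap is the subcase you flag yourself, and your sketched workaround cannot close it, because in that configuration the inequality of the inductive step can genuinely fail. When the action of $M\uplus O$ originates in $O$ (a read on one of $O$'s sensors, but equally an output of $P_O$ on a free channel, or a $\tick$), the argument of \autoref{lem:cong1P} needs $N$'s state to satisfy its invariant so that $N\uplus O$ is alive and can fire the same $O$-action; this is exactly what \autoref{lem:bis-invP} extracts from $\metric(M,N)<1$, and, as you correctly observe, its $n$-bounded analogue is false. Concretely, take $M$ with a valid invariant whose process is a single sensor read (so all strong transitions of $M$ are $\tau$-labelled and, by maximal progress, $M$ cannot tick), $N$ with a violated invariant (so its only transition is $N\transS[\tau]\dirac{\dummyN}$), and $O$ with a valid invariant whose process is $\timeout{\OUT{c}{v}.C}{D}$. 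Then $\metric^1(M,N)=\Bisimulation(\zeroF)(M,N)=0$, since the sup ranges only over $\alpha=\tau$, the reflexive weak $\tau$-move is always available, and $\Kantorovich(\zeroF)$ vanishes identically; but $M\uplus O\transS[\out{c}{v}]\gamma_1$ while $N\uplus O$ is deadlocked (its conjoined invariant fails) and has no weak $\out{c}{v}$-transition, so $\metric^1(M\uplus O,N\uplus O)=\min\emptyset=1>0$. No choice between the reflexive and the deadlock weak move on the $N\uplus O$ side can repair this, because the obligation ($O$'s visible action) simply has no counterpart in the computation of $\metric^{n+1}(M,N)$. So the inductive step cannot be completed as sketched; one needs an extra hypothesis (e.g.\ restricting to \CPS{s} whose invariants hold, or any condition recovering an $n$-bounded version of \autoref{lem:bis-invP}). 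Note that the paper's own proof, which defers wholesale to \autoref{lem:cong1P}, silently inherits exactly this difficulty --- you have put your finger on the one step of that proof that does not survive the substitution.
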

\begin{proof}
We proceed by induction over $n$. 
The base case $n=0$ is immediate since $\metric^n(M,N) = \zeroF(M,N) = 0$ for all $M,N \in \nome$.
We consider the inductive step $n+1$.
The case $\metric^{n+1}(M , N) = 1$ is immediate, therefore we assume $\metric^{n+1}(M , N) < 1$. 
We have to show that any transition $M \uplus  O \trans{\alpha} \gamma$ is simulated by some transition $N \uplus  O \TransS[\widehat{\alpha}] \gamma'$ with $\Kantorovich(\metric^n)(\gamma,\gamma' + (1-\size{\gamma'})\dummyN ) \le \metric^{n+1}(M \uplus  O,N \uplus  O)$.
This can be shown precisely as in the proof of \autoref{lem:cong1P}. 
Essentially, we have to replace all occurrences of $\metric(M,N)$ by $\metric^{n+1}(M,N)$ and all occurrences of $\Kantorovich(d)(\gamma,\gamma')$ and $\Kantorovich(\metric)(\gamma,\gamma')$ by $\Kantorovich(\metric^n)(\gamma,\gamma')$.
\end{proof}

Now we prove that our weak bisimilarity metrics are preserved by parallel composition of pure-logical processes. 
These are special cases of \autoref{lem:cong1P} and \autoref{lem:cong1Pk}. 
\begin{proposition}
\label{lem:cong2P}
$\metric(M \parallel P , N \parallel P) \le \metric(M,N)$, for any pure-logical process $P$.
\end{proposition}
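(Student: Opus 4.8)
The plan is to realise the parallel composition $M \parallel P$ with a pure-logical process $P$ as the disjoint union $M \uplus O$ of $M$ with a suitably chosen physically-disjoint \CPS{} $O$, so that the statement becomes an immediate instance of \autoref{lem:cong1P}. The whole point of the proposition is that a pure-logical process interferes only on channels, never on physical devices, which is exactly the freedom already granted to the contexts in \autoref{lem:cong1P}.

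Concretely, I would take $O = \confCPS{E_\emptyset;S_\emptyset}{P}$, the environment-free \CPS{} built over the empty sets of state variables, sensors and actuators: its physical state $S_\emptyset$ has all three component functions equal to the unique empty function, its environment $E_\emptyset$ has evolution and measurement maps returning the Dirac distribution on that empty function and invariant set the singleton containing it, and its cyber component is $P$. Since $P$ is pure-logical it mentions neither sensors nor actuators, so $O$ is well-formed in the sense of \autoref{def:well-formedness}. Moreover the physical devices of $O$ are empty and hence trivially disjoint from those of $M$ and of $N$, so $O$ is physically-disjoint from both.

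The key bookkeeping step is to check that $M \uplus O = M \parallel P$, and likewise $N \uplus O = N \parallel P$. Writing $M = \confCPS{E;S}{Q}$, the disjoint union $S \uplus S_\emptyset$ only adds the empty function, and under the canonical identification of $\mathbb{R}^{\mathsf{X} \cup \emptyset}$ with $\mathbb{R}^{\mathsf{X}}$ it returns $S$ itself; similarly, in $E \uplus E_\emptyset$ the product formulae for $\evolmap{}$ and $\measmap{}$ collapse because the factor contributed by the empty environment is always $1$, and the invariant conjunct coming from the empty component is vacuously satisfied, so $E \uplus E_\emptyset = E$. Hence $M \uplus O = \confCPS{E;S}{(Q \parallel P)} = M \parallel P$ by the notational convention for $M \parallel P$, and the same holds for $N$.

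With this identification in place the conclusion is immediate: by \autoref{lem:cong1P} applied to the physically-disjoint \CPS{} $O$,
\[
\metric(M \parallel P, N \parallel P) = \metric(M \uplus O, N \uplus O) \le \metric(M,N).
\]
The only delicate point is the equality $M \uplus O = M \parallel P$. If one prefers not to identify $\mathbb{R}^{\mathsf{X} \cup \emptyset}$ with $\mathbb{R}^{\mathsf{X}}$ on the nose, I would instead observe that $M \uplus O$ and $M \parallel P$ have matching transitions, so that the obvious relation pairing them is a weak probabilistic bisimulation; this gives $M \uplus O \approx M \parallel P$, whence $\metric(M \uplus O, M \parallel P) = 0$ by \autoref{prop:kernel}, and the triangle inequality for the pseudometric $\metric$ then closes the small gap. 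Either way there is no genuinely hard content, since the proposition is designed to be a special case of \autoref{lem:cong1P}; the corresponding bound $\metric^n(M \parallel P, N \parallel P) \le \metric^n(M,N)$ would follow identically from \autoref{lem:cong1Pk}.
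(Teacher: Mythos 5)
Your proposal is correct and follows essentially the same route as the paper: the paper's proof also takes $O = \confCPS{E_\emptyset;S_\emptyset}{P}$ over the empty physical environment and state, treats $\metric(M \parallel P, M \uplus O)=0$ as immediate, and closes via the triangle inequality together with \autoref{lem:cong1P}. Your fallback via the triangle inequality is in fact exactly the form the paper's argument takes.
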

\begin{proof}
Let $E_\emptyset$ be the physical environment with an empty set of state variables, sensors and actuators. 
Let $S_\emptyset$ be the unique (empty)  physical state   of $E_\emptyset$.
We have
$
\metric(M \parallel P , N \parallel P) \le 
\metric(M \parallel P , M \uplus (\confCPS {E_\emptyset ; S_\emptyset}  {P })) + \metric(M \uplus (\confCPS {E_\emptyset ; S_\emptyset}  {P }) , N \parallel P) = 
\metric(M \uplus (\confCPS {E_\emptyset ; S_\emptyset}  {P }) , N \parallel P) \le
\metric(M \uplus (\confCPS {E_\emptyset ; S_\emptyset}  {P }),  N \uplus (\confCPS {E_\emptyset ; S_\emptyset}  {P })) + \metric(N \uplus (\confCPS {E_\emptyset ; S_\emptyset}  {P }) , N \parallel P)
= \metric(M \uplus (\confCPS {E_\emptyset ; S_\emptyset}  {P }) ,  N \uplus (\confCPS {E_\emptyset ; S_\emptyset}  {P }))  
\le \metric(M,N)
$
where the first two inequalities follow by the triangular properties of $\metric$, the last inequality follows by \autoref{lem:cong1P} and the two equalities are immediate.
\end{proof}

\begin{proposition}
\label{lem:cong2Pk}
$\metric^n(M \parallel P , N \parallel P) \le \metric^n(M,N)$, for any pure-logical process $P$ and $n \ge 0$.
\end{proposition}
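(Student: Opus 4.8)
The plan is to transcribe, essentially verbatim, the proof of Proposition~\ref{lem:cong2P}, replacing the weak bisimilarity metric $\metric$ by the $n$-weak bisimilarity metric $\metric^n$ throughout. The argument there reduces parallel composition with a pure-logical process to disjoint union with a trivial plant, and then invokes the compositionality of $\uplus$; both ingredients have $n$-indexed counterparts already established in the excerpt, so no genuinely new work is required.

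Concretely, first I would introduce the physical environment $E_\emptyset$ whose sets of state variables, sensors and actuators are all empty, together with its unique empty physical state $S_\emptyset$. Since adjoining $E_\emptyset$ leaves any plant unchanged and the pure-logical process $P$ touches no physical device, the disjointness requirement of $\uplus$ is vacuously met, and we obtain the syntactic identities $M \parallel P = M \uplus (\confCPS{E_\emptyset; S_\emptyset}{P})$ and $N \parallel P = N \uplus (\confCPS{E_\emptyset; S_\emptyset}{P})$. Using that $\metric^n$ vanishes on the diagonal (Proposition~\ref{prop:up-to-k-metric}), this yields $\metric^n(M \parallel P, M \uplus (\confCPS{E_\emptyset; S_\emptyset}{P})) = 0$ and $\metric^n(N \uplus (\confCPS{E_\emptyset; S_\emptyset}{P}), N \parallel P) = 0$.

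Then I would chain two applications of the triangle inequality for $\metric^n$ — legitimate precisely because $\metric^n$ is a $1$-bounded pseudometric by Proposition~\ref{prop:up-to-k-metric} — interleaved with the two vanishing distances above, to obtain
\[
\metric^n(M \parallel P, N \parallel P) \;\le\; \metric^n\big(M \uplus (\confCPS{E_\emptyset; S_\emptyset}{P}),\, N \uplus (\confCPS{E_\emptyset; S_\emptyset}{P})\big).
\]
Finally I would apply Proposition~\ref{lem:cong1Pk} with the physically-disjoint system $O = \confCPS{E_\emptyset; S_\emptyset}{P}$ to bound the right-hand side by $\metric^n(M, N)$, which closes the proof.

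The only point requiring any care — and it is a bookkeeping check rather than a genuine obstacle — is to confirm that every lemma invoked in the proof of Proposition~\ref{lem:cong2P} admits an $n$-step analogue: the triangular law is supplied by Proposition~\ref{prop:up-to-k-metric}, and the compositionality under $\uplus$ is supplied by Proposition~\ref{lem:cong1Pk} (the $n$-indexed version of Proposition~\ref{lem:cong1P}). Since both hold, nothing beyond attaching the superscript $n$ to the earlier derivation is needed.
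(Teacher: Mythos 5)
Your proposal is correct and follows essentially the same route as the paper: the paper's proof of this proposition simply says that the argument for Proposition~\ref{lem:cong2P} carries over verbatim, invoking Proposition~\ref{lem:cong1Pk} in place of Proposition~\ref{lem:cong1P}, exactly as you describe. Your explicit check that the triangle inequality for $\metric^n$ is licensed by Proposition~\ref{prop:up-to-k-metric} is the same bookkeeping the paper leaves implicit.
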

\begin{proof}
The same arguments used in the proof of \autoref{lem:cong2P} apply. 
Essentially, we simply exploits \autoref{lem:cong1Pk} instead of \autoref{lem:cong1P}.
\end{proof}

Finally, we prove that weak bisimilarity metrics are preserved by channel restriction. 
\begin{proposition}
\label{lem:cong3P}
$\metric(M {\setminus} c  , N {\setminus} c) \le \metric(M,N)$, for any channel $c$.
\end{proposition}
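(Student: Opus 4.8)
The plan is to follow the same strategy as \autoref{lem:cong1P}: exhibit a concrete $1$-bounded pseudometric $d$ that is a weak bisimulation metric and whose value on the relevant pairs is bounded by $\metric(M,N)$, and then invoke the minimality of $\metric$. Concretely, exactly as in \autoref{lem:cong1P}, I would define $d \colon \nome \times \nome \to [0,1]$ by setting $d(M \setminus c, N \setminus c) = \metric(M,N)$ on restricted pairs, the definition being unambiguous because the map $\cdot \setminus c$ is injective on \CPS{s} (restriction occurs as the outermost operator, and $\dummyN \setminus c = \dummyN$). Once $d$ is shown to be a weak bisimulation metric, minimality of $\metric$ gives $\metric \sqsubseteq d$, hence $\metric(M \setminus c, N \setminus c) \le d(M \setminus c, N \setminus c) = \metric(M,N)$, as required. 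The case $\metric(M,N) = 1$ is immediate, so throughout I assume $\metric(M,N) < 1$.

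The technical core is a transition-correspondence lemma between $M$ and $M \setminus c$, the analogue of \autoref{lem:aux1P} and \autoref{lem:aux1Pbis}. Writing $M = E;\confCPS{S}{P}$, rule \rulename{ChnRes} and its inversion (together with the fact that channel restriction commutes with value substitution) show that, for every label $\alpha \notin \{\inp c v, \out c v\}$, one has $M \transS[\alpha] \gamma_0$ iff $M \setminus c \transS[\alpha] \gamma_0 \setminus c$, while $M \setminus c$ admits no $c$-labelled transition. The delicate point is the $\tick$ case, which carries the maximal-progress side condition of \rulename{Time}: I would first observe that restriction on $c$ neither deletes nor creates $\tau$-steps (a $\tau$ passes through \rulename{ChnRes} since $\tau \notin \{\inp c v,\out c v\}$, and the synchronisation producing it already lives inside $P$), so $M \transS[\tau] \gamma$ for some $\gamma$ iff $M \setminus c \transS[\tau]$, and hence $M \ntransS[\tau] \iff M \setminus c \ntransS[\tau]$. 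Consequently the premise $M \setminus c \ntransS[\tau]$ of \rulename{Time} holds exactly when it holds for $M$, so $\tick$-transitions lift too. I would then extend this to weak transitions, proving by induction on the length of the derivation (as in \autoref{lem:aux1Pbis}) that $M \TransS[\hat\alpha] \gamma$ with $\alpha$ a non-$c$ action implies $M \setminus c \TransS[\hat\alpha] \gamma \setminus c$; the inductive step reuses that the $\ntransS[\hat\alpha]$ conditions identifying the active part of the support are preserved, again because restriction enables no new actions.

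With this in hand the verification that $d$ is a weak bisimulation metric is routine. Given $M \setminus c \transS[\alpha] \gamma$ with $d(M \setminus c, N \setminus c) < 1$, the correspondence lemma yields $M \transS[\alpha] \gamma_0$ with $\gamma = \gamma_0 \setminus c$ and $\alpha$ a non-$c$ action. Since $\metric$ is a weak bisimulation metric and $\metric(M,N) < 1$, there is a transition $N \TransS[\hat\alpha]\gamma_0'$ with $\Kantorovich(\metric)(\gamma_0, \gamma_0' + (1{-}\size{\gamma_0'})\dirac{\dummyN}) \le \metric(M,N)$, and the weak-transition lifting gives $N \setminus c \TransS[\hat\alpha] \gamma_0' \setminus c$. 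It then remains to transport the Kantorovich bound along $\cdot \setminus c$. Using that restriction preserves total mass ($\size{\gamma_0' \setminus c} = \size{\gamma_0'}$) and fixes $\dummyN$, so that $(\gamma_0' + (1{-}\size{\gamma_0'})\dirac{\dummyN}) \setminus c = (\gamma_0' \setminus c) + (1{-}\size{\gamma_0' \setminus c})\dirac{\dummyN}$, I would push an optimal matching $\omega$ realising $\Kantorovich(\metric)(\gamma_0, \gamma_0' + (1{-}\size{\gamma_0'})\dirac{\dummyN})$ forward to a matching $\omega'$ of the restricted distributions via $\omega'(M'\setminus c, N'\setminus c) = \omega(M',N')$. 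Injectivity of $\cdot \setminus c$ makes this well defined and preserves both marginals, and $d(M'\setminus c, N'\setminus c) = \metric(M',N')$ by definition of $d$ (with $d(\dummyN,\dummyN)=0=\metric(\dummyN,\dummyN)$), whence $\sum \omega'(A,B)\,d(A,B) = \sum \omega(M',N')\,\metric(M',N') \le \metric(M,N)$. This yields $\Kantorovich(d)(\gamma, (\gamma_0'\setminus c) + (1{-}\size{\gamma_0'\setminus c})\dirac{\dummyN}) \le \metric(M,N) = d(M\setminus c,N\setminus c)$, completing the check.

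The main obstacle is the transition-correspondence step, and within it the $\tick$/maximal-progress argument: the whole proof hinges on the observation that channel restriction preserves the $\tau$-capabilities of a system \emph{exactly}, so that the side conditions of \rulename{Time} (via \autoref{prop:time}) are neither weakened nor strengthened by $\cdot \setminus c$. Everything else --- the injectivity-based push-forward of matchings and the pseudometric bookkeeping --- is the same computation already carried out for parallel composition in \autoref{lem:cong1P}, and the corresponding $n$-bounded statement would follow by the identical argument with $\metric$ replaced by $\metric^n$ and an outer induction on $n$.
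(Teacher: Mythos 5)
Your proposal follows exactly the paper's strategy for this proposition: define $d(M\setminus c, N\setminus c)=\metric(M,N)$, show $d$ is a weak bisimulation metric by case analysis on how $M\setminus c \transS[\alpha]\gamma$ is derived, and conclude by minimality of $\metric$. In fact you supply more detail than the paper's own (largely sketched) proof, correctly identifying the two points it leaves implicit --- the exact preservation of $\tau$-capability needed for the maximal-progress premise of \rulename{Time}, and the injectivity-based push-forward of matchings.
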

\begin{proof}
We reason as in \autoref{lem:cong1P}.
The case $\metric(M , N) = 1$ is immediate, therefore we assume $\metric(M , N) < 1$. 
Let us define the function $d \colon \nome \times \nome \to [0,1]$ by $d(M{\setminus} c  , N{\setminus} c ) = \metric(M,N)$ for all $M,N,O \in \nome$.
To prove the thesis it is enough to show that $d$ is a weak bisimulation metric. 
In fact, since $\metric$ is the minimal weak bisimulation metric, this implies $\metric \sqsubseteq d$, thus giving $\metric(M {\setminus} c  , N {\setminus} c ) \le d(M {\setminus} c  , N {\setminus} c ) = \metric(M,N)$.
To prove that $d$ is a weak bisimulation metric, we show that any transition $M {\setminus} c  \trans{\alpha} \gamma$ is simulated by some transition $N {\setminus} c  \TransS[\widehat{\alpha}] \gamma'$ with $\Kantorovich(d)(\gamma,\gamma' + (1-\size{\gamma'})\dummyN ) \le d(M \setminus c,N \setminus c)$.
The proof proceeds by case analysis on why $M\setminus c \trans{\alpha} \gamma$. 
\end{proof}

\begin{proposition}
\label{lem:cong3Pk}
$\metric^n(M {\setminus} c  , N {\setminus} c) \le \metric^n(M,N)$, for any channel $c$ and $n \ge 0$.
\end{proposition}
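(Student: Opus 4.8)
The plan is to mirror the proof of \autoref{lem:cong3P}, but to carry out an explicit induction on $n$, exactly as was done to obtain \autoref{lem:cong1Pk} from \autoref{lem:cong1P}. The reason for the induction is that $\metric^n = \Bisimulation^n(\zeroF)$ is the $n$-th iterate of the functional rather than a least fixed point, so the minimality argument that \autoref{lem:cong3P} uses (``$\metric$ is the minimal weak bisimulation metric, hence $\metric \sqsubseteq d$'') is not directly available. The base case $n=0$ is immediate since $\metric^0 = \zeroF$ assigns distance $0$ to every pair. For the inductive step, recalling $\metric^{n+1} = \Bisimulation(\metric^n)$, it suffices to prove $\Bisimulation(\metric^n)(M\setminus c, N\setminus c) \le \Bisimulation(\metric^n)(M,N)$; the case $\metric^{n+1}(M,N)=1$ is trivial, so I would assume $\metric^{n+1}(M,N)<1$.

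The first ingredient I would isolate is a transition-lifting lemma for restriction, the analogue of \autoref{lem:aux1P} and \autoref{lem:aux1Pbis}. By inspection of rule \rulename{ChnRes} and its lifting in \autoref{tab:lts_systems_P}, every strong transition $M\setminus c \transS[\alpha]\gamma$ arises from a transition $M \transS[\alpha]\gamma_0$ with $\alpha \notin \{\inp c v, \out c v\}$ and $\gamma = \gamma_0\setminus c$; conversely, any weak transition $N \TransS[\hat\alpha]\gamma_0'$ whose label is not a communication on $c$ propagates to $N\setminus c \TransS[\hat\alpha]\gamma_0'\setminus c$, proved by induction on the length of the weak transition precisely as in \autoref{lem:aux1Pbis}. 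Starting then from $M\setminus c \transS[\alpha]\gamma_0\setminus c$, I would use that $N$ matches $M$ in the computation of $\Bisimulation(\metric^n)(M,N)$ to obtain $N \TransS[\hat\alpha]\gamma_0'$ with $\Kantorovich(\metric^n)(\gamma_0, \gamma_0' + (1-\size{\gamma_0'})\dirac{\dummyN}) \le \metric^{n+1}(M,N)$, and lift this to $N\setminus c \TransS[\hat\alpha]\gamma_0'\setminus c$.

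It then remains to bound the lifted distance, measured by $\metric^n$ on the restricted successors. I would take the optimal matching $\omega$ realising $\Kantorovich(\metric^n)(\gamma_0, \gamma_0' + (1-\size{\gamma_0'})\dirac{\dummyN})$ and push it forward along the map $P \mapsto P\setminus c$, which fixes $\dummyN$ since $\dummyN\setminus c = \dummyN$, obtaining a matching $\omega'$ for the pair $\big(\gamma_0\setminus c,\ (\gamma_0'\setminus c) + (1-\size{\gamma_0'})\dirac{\dummyN}\big)$. By the inductive hypothesis $\metric^n(M'\setminus c, N'\setminus c) \le \metric^n(M',N')$, the transport cost of $\omega'$ is at most that of $\omega$, so $\Kantorovich(\metric^n)(\gamma_0\setminus c,\ (\gamma_0'\setminus c)+(1-\size{\gamma_0'})\dirac{\dummyN}) \le \Kantorovich(\metric^n)(\gamma_0, \gamma_0'+(1-\size{\gamma_0'})\dirac{\dummyN}) \le \metric^{n+1}(M,N)$. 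Together with the symmetric argument obtained by swapping $M$ and $N$, this yields $\Bisimulation(\metric^n)(M\setminus c, N\setminus c) \le \metric^{n+1}(M,N)$ and closes the induction.

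The main obstacle I anticipate is the bookkeeping around the deadlocked system $\dummyN$ and the possibly sub-stochastic mass $1-\size{\gamma_0'}$: one must check that pushing $\omega$ forward through restriction leaves the mass assigned to $\dummyN$ in place and still yields a genuine matching with the required marginals, so that monotonicity of $\Kantorovich$ (applied through the inductive hypothesis) can be invoked cleanly. Everything else reduces to the same case analysis on the derivation of $M\setminus c \transS[\alpha]\gamma$ already performed for \autoref{lem:cong3P}.
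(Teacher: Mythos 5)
Your proposal is correct and follows essentially the same route as the paper, whose own proof of this proposition is only a two-line sketch (``induction over $n$ \ldots case analysis on why $M\setminus c \transS[\alpha]\gamma$'', mirroring the proofs of the $\uplus$-congruence results); you simply fill in the details that the paper leaves implicit, namely the bijective correspondence of (weak) transitions under restriction and the pushforward of the optimal matching through $P \mapsto P\setminus c$ with $\dummyN\setminus c = \dummyN$. No gap.
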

\begin{proof}
We reason as in \autoref{lem:cong1Pk}.
Hence, we proceed by induction over $n$, where the base case $n=0$ is immediate and we consider the inductive step $n+1$.
The case $\metric^{n+1}(M , N) = 1$ is immediate, therefore we assume $\metric^{n+1}(M , N) < 1$. 
We have to show that any transition $M {\setminus} c  \trans{\alpha} \gamma$ is simulated by some transition $N {\setminus} c  \TransS[\widehat{\alpha}] \gamma'$ with $\Kantorovich(\metric^n)(\gamma,\gamma' + (1-\size{\gamma'})\dummyN ) \le \metric^{n+1}(M \setminus c,N \setminus c )$.
The proof proceeds by case analysis on why $M\setminus c \trans{\alpha} \gamma$. 
\end{proof}

\noindent
\textbf{Proof of \autoref{thm:congruenceP}} \hspace{0.2 cm}
By Propositions~\ref{lem:cong1P}--\ref{lem:cong3Pk}.
\qed
\vspace{0.2 cm}

Finally, as the bisimilarity $\approx$ coincides with the bisimulation metric $\approx_0$ it follows that \autoref{thm:congruence} is a special case of \autoref{thm:congruenceP}. As consequence, the proof of  \autoref{thm:congruence} follows from\autoref{thm:congruenceP}. \\

\noindent
\textbf{Proof of \autoref{thm:congruence}} \hspace{0.2 cm}
Consider  \autoref{thm:congruence}.1.
We have that 
\[
M \approx N 
 \; \Longrightarrow  \; \; 
\metric(M,N) = 0 
 \; \; \Longrightarrow  \;\;
\metric(M \uplus O,N\uplus O) = 0
 \;\; \Longrightarrow  \;\;
M \uplus O \approx N \uplus O
\]
by applying, respectively,  \autoref{prop:kernel}, \autoref{thm:congruenceP}.1, and \autoref{prop:kernel} again. The proofs of 
\autoref{thm:congruence}.2 and \autoref{thm:congruence}.3 are analogous.
\qed

\subsection{Proofs of  \autoref{sec:casebis}}

\label{app:sec:case-studybis}

\noindent
\textbf{Proof of \autoref{prop:case-propbis}} \hspace{0.2 cm}
The proof is analogous to that of \autoref{prop:case-prop} and \autoref{prop:air3}(\ref{prop:air3.1}).
\qed
\\

As the bisimilarity $\approx$ coincides with the bisimulation metric $\approx_0$ it follows that \autoref{prop:performances} is a special case of  \autoref{prop:case-propbis}.\\

\noindent
\textbf{Proof of \autoref{prop:performances}} \hspace{0.2 cm} 
Directly by \autoref{prop:case-propbis}(1) and  \autoref{prop:kernel}.
\qed
\\

\noindent
\textbf{Proof of \autoref{prop:case-prop}} \hspace{0.2 cm} 
Define the \CPS{} $\mathit{NIL}$ as $\mathit{NIL}=\confCPS{E_\emptyset ; S_\emptyset}{\nil}$, where $E_\emptyset$ is the empty physical  environment and $S_\emptyset$  the unique (empty)  physical state  of $E_\emptyset$.
The only transition by $\mathit{NIL}$ is $\mathit{NIL} \trans{\tick} \dirac{\mathit{NIL}}$.
By \autoref{prop:sys} and \autoref{prop:time}(d)  
we infer that $\metric^n(\mathit{Eng_g},\mathit{NIL})=0$.
 Therefore, by the triangular property of $\metric^n$, to show the thesis $\metric^n(\mathit{Eng_g}, \widehat{\mathit{Eng_g}}) \le  1- \left(1- q_g(p_g)^5\right)^n$ we can show  
$\metric^n(\mathit{NIL}, \widehat{\mathit{Eng_g}} )\leq   1- \left(1-q_g (p_g)^5\right)^n$.

The proof obligation $\metric^n(\mathit{NIL}, \widehat{\mathit{Eng_g}} )\leq   1- \left(1- (p_g)^5\right)^n$ follows from the following nine properties, by observing that the system $\widehat{\mathit{Eng_g}}$ satisfies the first one.
In the following we denote the process $\fix{Y} \tick^5 .   \rsens x {s_{\operatorname{t}}} . \ifelse {x>10} {\OUT{\mathit{warning}}{\mathrm{ID}}.Y}
{\wact{\off}{\mathit{cool}}.\tick.\mathit{Ctrl} }$ with $\mathit{RecY}$. 

\begin{enumerate}

\item  
\label{prop:case-prop1a}
$\metric^n(\mathit{NIL},\confCPS{\env_g;S}{P }) \le 1-\left( 1-  q_g (p_g)^5 \right)^n$ 
whenever the physical state   $S$ satisfies $\mathit{cool} = \off$ and $\mathit{temp} \in [0, 10.1] $, and the process $P$ is $\mathit{Ctrl}$, or $\tick.\mathit{Ctrl}$. 

\item  
\label{prop:case-prop1b}
$\metric^n(\mathit{NIL},\confCPS{\env_g;S}{P }) \le 1-\left( 1-    q_g(p_g)^5  \right)^n$ 
whenever the physical state $S$ satisfies $\mathit{cool} = \off$ and $\mathit{temp} \in (10.1, 11.4 ]$, and the process $P$ is $\mathit{Ctrl}$,   or $\mathit{Cooling}$. 

\item  
\label{prop:case-prop1c}
$\metric^n(\mathit{NIL},\confCPS{\env_g;S}{P }) \le 1-\left( 1-   (p_g)^5  \right)\left( 1-  q_g (p_g)^5  \right)^n$ 
whenever the physical state   $S$ satisfies $\mathit{cool} = \off$ and $\mathit{temp} \in (10.4, 11.5 ]$, and the process $P$ is $\mathit{Ctrl}$,   or $\mathit{Cooling}$. 

\item   
\label{prop:case-prop2a}
$\metric^n(\mathit{NIL},\confCPS{\env_g;S}{P }) \le  1-\left( 1-  q_g (p_g)^5  \right)^n$ 
whenever  the physical state   $S$ satisfies $\mathit{cool} = \on$ and $\mathit{temp} \in (9.9, 11.4 ]$, and the process $P$ is 
$\mathit{RecY}$.

\item   
\label{prop:case-prop2b}
$\metric^n(\mathit{NIL},\confCPS{\env_g;S}{P }) \le  1-\left( 1-   (p_g)^5  \right)\left( 1-  q_g (p_g)^5  \right)^n$ 
whenever  the physical state   $S$ satisfies $\mathit{cool} = \on$ and $\mathit{temp} \in (10.4, 11.5 ]$, and the process $P$ is 
$\mathit{RecY}$.

\item   
\label{prop:case-prop3}
$\metric^n(\mathit{NIL},\confCPS{\env_g;S}{P }) \le 1- \left( 1- (p_g)^{5-k} \right)   \left( 1- q_g (p_g)^5  \right)^n$,  
for all $n \in [1,4]$,
whenever  the physical state   $S$
satisfies $\mathit{cool}=\on$ and $\mathit{temp} \in  (11.4 -k(0.3),11.5 -k (0.3) ]$, and the process $P$ is
\[ 
P =\tick^{5-k} .   \rsens x {s_{\operatorname{t}}}  
\ifelse {x>10} {\OUT{\mathit{warning}}{\mathrm{ID}}.\mathit{RecY}}
{\wact{\off}{\mathit{cool}}.\tick.\mathit{Ctrl} } .
\]

\item   
\label{prop:case-prop4}
$\metric^n(\mathit{NIL},\confCPS{\env_g;S}{P }) \le1-\left( 1- q_g (p_g)^5  \right)^n$ 
whenever the physical state   $S$
satisfies $\mathit{cool} = \on$ and $\mathit{temp}  \le    11.4 -k (0.3)$, and the process $P$ is
 \[ P =\tick^{5-k} .   \rsens x {s_{\operatorname{t}}}  
\ifelse {x>10} {\OUT{\mathit{warning}}{\mathrm{ID}}.\mathit{RecY}}
{\wact{\off}{\mathit{cool}}.\tick.\mathit{Ctrl}} \]
for any $k\in [1,4]$.

\item   
\label{prop:case-prop5}
$\metric^n(\mathit{NIL},\confCPS{\env_g;S}{P }) \le1-\left( 1- q_g (p_g)^5  \right)^n$ 
whenever the physical state   $S$
satisfies  $\mathit{cool}=\on$ and $\mathit{temp}  \le  9.9$, and the process $P$ is
\[ 
P =   \rsens x {s_{\operatorname{t}}}  
\ifelse {x>10} {\OUT{\mathit{warning}}{\mathrm{ID}}.\mathit{RecY}}
{\wact{\off}{\mathit{cool}}.\tick.\mathit{Ctrl}} .
\]

\item   
\label{prop:case-prop6}
$\metric^n(\mathit{NIL},\confCPS{\env_g;S}{P}) \le  1-\left( 1- q_g (p_g)^5  \right)^n$ 
whenever the physical state  $S$
satisfies $\mathit{cool}= \on$ and $\mathit{temp}  \le  9.9$, and the process $P$ is
$P = {\wact{\off}{\mathit{cool}}.\tick.\mathit{Ctrl}}$.

\end{enumerate}

We prove these nine properties in parallel, by induction over $n$.
The base case $n=0$ is immediate since $\metric^0$ is the constant zero function $\zeroF$.
We consider the inductive step $n > 0$.
First we observe that, given any distribution $\sum_{i \in I} p_i \cdot \dirac{M_i}$ over $\CPS{s}$, the only matching $\omega \in \Omega(\sum_{i \in I} p_i \cdot \dirac{M_i},\dirac {\mathit{NIL}})$ is  $\omega(M_i,\mathit{NIL})= p_i$. 
It follows that $\Kantorovich(\metric^{n-1})(\sum_{i \in I} p_i \cdot \dirac{M_i},\dirac {\mathit{NIL}})=
 \sum_{i \in I} p_i  \metric^{n-1}(M_i,\mathit{NIL})$.
We show only the first property, the other are analogous.

We distinguish the cases $P=\mathit{Ctrl}$ and $P=\tick.\mathit{Ctrl}$.
\begin{itemize}

\item
Case $P=\mathit{Ctrl}$.\\
The only transition by $\confCPS{\env_g;S}{P}$ is 
$\confCPS{\env_g;S}{P} \trans{\tau}  \sum_{i \in I} p_i \cdot \dirac{M_i}$, 
where $M_i= \confCPS{\env_g;S}{P_i }$, with either $P_i=\tick.\mathit{Ctrl}$ or  $P_i=\mathit{Cooling}$.
The only transition by $\mathit{NIL}$ is $\mathit{NIL} \trans{\tau} \dirac{\mathit{NIL}}$.
Therefore we infer 
$\metric^n(\confCPS{\env_g;S}{P},\mathit{NIL}) \le \Kantorovich(\metric^{n-1})(\sum_{i \in I} p_i \cdot \dirac{M_i},\dirac {\mathit{NIL}})$.
By the inductive hypothesis on \autoref{prop:case-prop1a} we infer
$\metric^{n-1}(M_i,\mathit{NIL})  \le 1- \left( 1- q_g (p_g)^5  \right)^{n-1}$
in both cases, thus implying
\[
  \Kantorovich(\metric^{n-1})(\sum_{i \in I} p_i \cdot \dirac{M_i},\dirac {\mathit{NIL}})=\sum_{i \in I} p_i  \metric^{n-1}(M_i,\mathit{NIL})\le 1-\left( 1- q_g (p_g)^5  \right)^{n-1}  \le 1-\left( 1- q_g (p_g)^5  \right)^{n} .
\]
which completes the proof.

\item
Case $P=\tick.\mathit{Ctrl}$.\\
The only transition by $\confCPS{\env_g;S}{P}$ is 
$\confCPS{\env_g;S}{P} \trans{\tick} \confCPS{\mathit{next}_{ \env_g;}(S)}{ \dirac{\mathit{Ctrl}}}$.
Again, the only transition by $\mathit{NIL}$ is $\mathit{NIL} \trans{\tick} \dirac{\mathit{NIL}}$.
Therefore 
$\metric^n(\confCPS{\env_g;S}{P},\mathit{NIL}) \le \Kantorovich(\metric^{n-1})(\confCPS{\mathit{next}_{\env_g}(S)}{ \dirac{\mathit{Ctrl}}},\dirac{\mathit{NIL}})$.
By definition, 
$\mathit{next}_{ \env_g}(S)=  \sum_{v \in [0.3,1.1]_g}  \frac{1}{|[0.3,1.1]_g|} \dirac{S [\mathit{temp} \mapsto \statefun{}(temp) - v]}$.
Hence in all physical states $S'$  in the support of $\mathit{next}_{\env_g}(S)$ we have $\mathit{cool} = \off$ and
the temperature $\mathit{temp}$ lies   in the interval $[0+0.3 ,10.1+1.4]$.

We have two cases: $\mathit{temp}   \in [0+0.3 ,10.1]$, and $ \mathit{temp}   \in (10.1 ,10.5]$.
If $\mathit{temp}   \in [0+0.3 ,10.1]$, then by the inductive hypothesis on \autoref{prop:case-prop1a} we infer
$\metric^{n-1}(\confCPS{ \env_g;S' }{ \dirac{\mathit{Ctrl}}},\mathit{NIL})  \le 1- \left( 1- q_g (p_g)^5  \right)^{n-1}$,
for all $S' \in \support( \mathit{next}_{\env_g}(S))$,
 thus implying
\[
  \Kantorovich(\metric^{n-1})(\confCPS{\env_g;\mathit{next}_{ E}(S)}{ \dirac{\mathit{Ctrl}}},\dirac {\mathit{NIL}})
  \le 1-\left( 1- q_g (p_g)^5  \right)^{n-1}  \le 1-\left( 1- q_g (p_g)^5  \right)^{n} .
\]

If $\mathit{temp}   \in (10.1 ,10.5]$, then $\mathit{temp}   \in (10.4 ,10.5]$ 
with a probability bounded by $q_g$, whereas $\mathit{temp}  \in (10.1 ,10.4] $ with a probability not less that $1-q_g$.
If $\mathit{temp}   \in (10.4 ,10.5]$  we can apply  the inductive hypothesis on \autoref{prop:case-prop1c} to get
$
\metric^{n-1} (\confCPS{\env_g;S'}{ \mathit{Ctrl}},\mathit{NIL})   \le  1-\left( 1-  (p_g)^5  \right)\left( 1-  q_g (p_g)^5  \right)^{n-1} 
$, for all $S' \in \support( \mathit{next}_{\env_g}(S))$.
If $\mathit{temp}   \in (10.4 ,10.5]$ we can apply  the inductive hypothesis on \autoref{prop:case-prop1b} to get
$\metric^{n-1} (\confCPS{\env_g;S'}{ \mathit{Ctrl}},\mathit{NIL})   \le     1-\left( 1- q_g (p_g)^5  \right)^{n-1}$,
for all $S' \in \support( \mathit{next}_{\env_g}(S))$.
Therefore for some $q \le q_g$ we have
\[
\begin{array}{rlr}
&  \Kantorovich(\metric^{n-1})(  \confCPS{\env_g;\mathit{next}_{ E}(S)}{ \dirac{\mathit{Ctrl}}} ,\dirac {\mathit{NIL}})
\\
= &   (1-q)\left(  1-\left( 1- q_g (p_g)^5  \right)^{n-1 } \right) + 
q    \left( 1-   (p_g)^5  \right)  \left( 1-  q_g (p_g)^5  \right)^{n-1}
\\  
=   &    \left(  1-\left( 1- q_g (p_g)^5  \right)^{n-1 } \right) - q\left(  1-\left( 1- q_g (p_g)^5  \right)^{n-1 } \right)  +
 q   \left( 1-   (p_g)^5  \right) \left( 1-  q_g (p_g)^5  \right)^{n-1}
\\
=   &     1-\left( 1- q_g (p_g)^5  \right)^{n-1 }   - q+ q\left( 1- q_g (p_g)^5  \right)^{n-1 }    +
 q  -  \left(q- q(p_g)^5\right)    \left( 1- q_g (p_g)^5  \right)^{n-1} 
\\
=   &     1 -q+q   -   \left(1- q + q-q (p_g)^5\right)      \left( 1- q_g (p_g)^5  \right)^{n-1} 
\\
=   &     1   -   \left(1 -q (p_g)^5\right)      \left( 1- q_g (p_g)^5  \right)^{n-1} 
\\
\le   &     1   -   \left(1 - q_g (p_g)^5\right)      \left( 1- q_g (p_g)^5  \right)^{n-1} 
\\
=   &     1   -      \left( 1- q_g (p_g)^5  \right)^{n  } 
\end{array} 
\]
which completes the proof.

\end{itemize}
\qed

\noindent 
\textbf{Proof of \autoref{prop:air3}} \hspace{0.2 cm}
By \autoref{prop:case-prop} we derive 
$\metric^n(\mathit{Eng }_g, \widehat{\mathit{Eng }_g}) \le  1- \left(1- q_g(p_g)^5\right)^n=p$.
By simple 
$\alpha$-conversion it follows that 
$\metric^n(\mathit{Eng }_g^{\mathrm L} , \widehat{\mathit{Eng }_g^{\mathrm L}}) = p$ and 
$\metric^n(\mathit{Eng }_g^{\mathrm R} , \widehat{\mathit{Eng }_g^{\mathrm R}}) =  p$, respectively. 
By \autoref{thm:congruenceP}(\ref{thm:congruence1Pk}) 
(and the triangular property of $\metric^n$) it 
follows that 
$
\metric^n(\mathit{Eng}_g^{\mathrm L} \uplus \mathit{Eng}_g^{\mathrm R},
\widehat{\mathit{Eng}_g^{\mathrm L}} \uplus \widehat{\mathit{Eng}_g^{\mathrm R}}) \le 2p$.
By \autoref{thm:congruenceP}(\ref{thm:congruence2Pk}) it follows that 
\[
\metric^n\left(
 \left(  \mathit{Eng}_g^{\mathrm{L}}  
\uplus 
(   \mathit{Eng}_g^{\mathrm R}  \right) \parallel \mathit{Check},
 \left( \widehat{\mathit{Eng}_g^{\mathrm{L}} } 
\uplus 
(  \widehat{\mathit{Eng}_g^{\mathrm R}} \right) \parallel \mathit{Check}
\right) \le 2p.
 \]
By \autoref{thm:congruenceP}(\ref{thm:congruence3Pk}) we
obtain  
\begin{equation}
\label{eq:proof_of_thm_congruenceP}
\metric^n\left(  \mathit{Airplane}_g  , \widehat{\mathit{Airplane}_g} \right) \le 2p
\end{equation}
thus confirming that \autoref{prop:air3}(1) holds.

Finally, by \autoref{eq:proof_of_thm_congruenceP} and \autoref{eq:lim},  we derive
\[
\lim_{ g \rightarrow +\infty}   \metric^n( \mathit{Airplane}_g  ,  \widehat{\mathit{Airplane}_g}  )\leq 2\left(1- \left(1- \frac{1}{8^6}  \right)^n\right).
\]
namely \autoref{prop:air3}(2).
\qed

\end{document}